\documentclass[12pt]{article}
\usepackage[english]{babel}
\usepackage{graphicx}
\usepackage{amsmath}
\usepackage{amssymb}
\usepackage{amsthm}
\usepackage{natbib}
\usepackage{enumerate}
\usepackage{float}
\usepackage{bbm}
\usepackage{booktabs}
\usepackage{subcaption}
\usepackage{textcomp}
\usepackage{mathtools}
\usepackage{tikz}
\usepackage{caption}
\usepackage{multirow}
\usepackage{cancel}

\allowdisplaybreaks[2]

\usepackage{color}
\definecolor{DarkBlue}{rgb}{0,0.18,0.55}
\usepackage{hyperref}
\hypersetup{pdfauthor={Kim, Kim, and Pal},colorlinks=true,citecolor=DarkBlue,filecolor=DarkBlue,linkcolor=DarkBlue,urlcolor=DarkBlue,pdftex}
\usepackage[flushleft]{threeparttable}
\usepackage{xargs}
\usepackage{array}

\theoremstyle{plain}
\newtheorem{theorem}{Theorem}
\newtheorem{lemma}{Lemma}
\newtheorem{prop}{Proposition}

\newtheorem{assumption}{Assumption}
\newtheorem{observation}{Observation}
\newtheorem{condition}{Condition}
\newtheorem{remark}{Remark}

\def\ci{\perp\!\!\!\perp}

\begin{document}
	{\def\thanks#1{\protect\thanksaliased{#1}}%
		\let\thanksaliased\thanks
		\title{{Partial Identification of the Valuation Distribution\\ in Sequential English Auctions}\thanks{We thank participants at MEG 2022, CEA 2025, and the 31st SJE International Symposium: Partial Identification in Econometrics at Seoul National University for valuable comments. Dongwoo Kim gratefully acknowledges support from the Social Sciences and Humanities Research Council of Canada under the Insight Grant (435-2024-0322)}}
		\author{Dongwoo Kim\thanks{Department of Economics, Simon Fraser University and Korea University. Email: \href{mailto:dongwook@sfu.ca}{\tt dongwook@sfu.ca}} \and Kyoo il Kim\thanks{Department of Economics, Michigan State University. Email: \href{mailto:kyookim@msu.edu}{\tt kyookim@msu.edu}} \and Pallavi Pal\thanks{School of Business, Stevens Institute of Technology. Email: \href{mailto:ppal2@stevens.edu}{\tt ppal2@stevens.edu}}}
	}%
	\date{}
	\maketitle \pagenumbering{gobble}
	
	\vspace{-6ex}
	\begin{abstract}\noindent
		This paper extends the incomplete model of \cite{haile2003inference} from static English auctions to sequential English auctions. Because bidders may wait for future opportunities, the static condition that bidders do not let rivals win at beatable prices need not hold. We replace it with a dynamic opportunity-cost restriction, yielding nonparametric valuation bounds without solving a dynamic equilibrium. Sharp bounds are also characterized. We propose a novel moment-condition inversion estimator that pools auctions with heterogeneous bidder counts, mitigating finite-sample instability of order statistics approaches and admitting analytical standard errors and smooth confidence intervals. Applications to Korean wholesale used-car auctions and Cars and Bids online auctions deliver informative bounds. Counterfactual analyses show that the option to wait lowers first-period revenue by 8--11\% in the Korean market, that increasing effective competition from 8 to 20 serious bidders in Cars and Bids raises seller revenue by 40--65\%, and that maximin reserve prices vary substantially across vehicle clusters.
		
		\vspace{1ex}
		
		\noindent \textbf{Keywords}: Sequential auctions; English auctions; Partial identification; Nonparametric estimation.\\[0.3ex]
		\textbf{JEL Classification Numbers:} C14; D44; L1
		
	\end{abstract}
	
	\newpage\pagenumbering{arabic}
	
	\section{Introduction}\label{sec:intro}
	
	Sequential auctions are common in markets for vintage wines, paintings, used cars, and on online platforms such as eBay. Bidders in these markets decide not only how much they are willing to pay for the current object, but also whether to preserve the option to compete for later objects. This trade-off makes observed bids difficult to interpret. A bidder who stops early may have a low valuation, or may value the current item highly but prefer to wait for a future opportunity. The theoretical literature has formalized this dynamic trade-off in many ways, with different models generating different bidding predictions. Risk aversion, declining marginal values, supply uncertainty, loss aversion, and other mechanisms can each rationalize different price paths and bidding strategies. This multiplicity of models, combined with the well-documented difficulty of computing equilibria in sequential auctions, raises a fundamental question for empirical work: can we learn about the valuation distribution without committing to a specific equilibrium?
		
	In this paper, we develop a partial identification framework for sequential English auctions, a prevalent format for sequential sales, that bounds the distribution of bidders' valuations using only two behavioral assumptions: (i) bidders do not bid above their valuations, and (ii) bidders stop bidding when the opportunity cost of winning the current auction exceeds the current-period profit. These assumptions define an \emph{incomplete model} in the sense of \cite{haile2003inference} (hereafter HT): they restrict bidder behavior without specifying which equilibrium is played. These assumptions are substantially weaker than those required for a full dynamic equilibrium and are compatible with a broad class of sequential-auction models. The resulting bounds are therefore robust to many forms of model uncertainty that complicate structural estimation.
		
	Knowledge of the valuation distribution (denoted by $F$ hereafter) is central to auction design.  It helps determine reserve prices and expected revenue under alternative formats \citep{tang2011bounds, kim2025searchauction}, as well as the design and ordering of heterogeneous items in sequential sales \citep{elmaghraby2003importance, muramoto2016sequential, shi2022implementing}.  In sequential auction markets, where equilibrium-based estimates of~$F$ may be misspecified, informative \emph{bounds} on~$F$ are practically valuable: they characterize the competitiveness of the market, inform the design of auction mechanisms, and serve as inputs to counterfactual policy analysis that is robust to equilibrium assumptions.
	
	The paper's contributions fall into three groups. First, on identification, we extend the HT approach to sequential English auctions of heterogeneous objects. HT's static condition no longer holds in a sequential setting, since a bidder may decline to outbid a rival because the future option is more valuable than winning now. We formalize this dynamic opportunity cost and derive bounds on valuations without specifying dynamic equilibrium bidding strategies, while accommodating heterogeneous objects through the common value component. The resulting bounds on the valuation distribution are derived using an order-statistic inversion approach. When individual bidders can be tracked across auctions, the panel structure provides additional identifying information. A bidder's maximum reduced bid across all periods in which she participates is a tighter lower bound on her private value than any single-period bid, yielding a direct bound on $F$ without order-statistic inversion. We also characterize sharp bounds on $F$ using the generalized instrumental variable (GIV) framework \citep{chesher2017generalized}.\footnote{It is known that HT bounds in the static case are not sharp; this extends to our setting. The GIV framework provides a sharp characterization based on random set theory without requiring constructive proof. We show that the non-sharp bounds from our main analysis are nested as special cases and identify the additional moment inequalities that tighten the identified set.} 
	
	Second, on estimation and inference, we develop a \emph{moment-condition inversion} estimator that pools auctions with heterogeneous numbers of bidders into a single estimation step.  In the original HT application, bounds are estimated by conditioning on~$N$ via kernel smoothing, which can downweight much of the data and can produce crossing bounds when the bandwidth is too wide.  More broadly, order-statistic inversion is poorly behaved in finite samples near the support boundaries, as emphasized by \citet{menzel2013large}. Our approach instead solves a moment equation, where each auction contributes with a weight determined by its own number of bidders: low-competition auctions carry information in the left tail, while high-competition auctions contribute more in the right tail.  This mitigates the tail degeneracy of fixed-$N$ order-statistic inversion by exploiting variation in bidder counts across auctions. We establish consistency, derive the asymptotic distribution, and use analytical standard errors to construct confidence intervals, with bootstrap intervals serving as a finite-sample benchmark.  To our knowledge, this pooling approach has not been used previously in the auction literature.
	
	Third, we demonstrate the practical value of the approach through simulation experiments, two empirical applications, and three counterfactual exercises.  The simulations show that applying the static HT bounds to sequential auction data produces misspecified (crossing) bounds, while our method yields informative bounds that are valid under the maintained assumptions and performs reliably in finite samples. The two applications illustrate complementary strengths of the sequential approach: in Korean wholesale used-car auctions, where terminal auctions and long within-day sequences are observed, the sequential and terminal-period inequalities deliver tight bounds on $F$ across vehicle categories. In Cars and Bids online auctions, where listings recur continuously with no terminal period and the static HT lower bound is inapplicable, the sequential approach is the only available lower bound. Our counterfactual analyses translate the bounds on~$F$ into bounds on policy quantities: the option to wait lowers first-period revenue by 8--11\% in the Korean market, increasing effective competition from 8 to 20 serious bidders in Cars and Bids raises seller revenue by 40--65\%, and the maximin reserve varies substantially across vehicle clusters.
	
	Our paper connects three literatures.  In sequential-auction theory and closely related auction models, the canonical predictions are a martingale price process under IPV \citep{weber1983multiobject} and rising prices under affiliation \citep{milgrom1982theory}, yet the ``declining price anomaly'' is pervasive.\footnote{This anomaly is well documented in wine \citep{ashenfelter1989auctions, ashenfelter2003auctions}, art \citep{beggs1997declining}, and flowers \citep{vandenberg2001winner}, though the direction is not universal. \citet{raviv2006new} documents rising prices in New Jersey used car auctions.}  Competing explanations include risk aversion \citep{mcafee1993declining}, declining marginal values \citep{bernhardt1994note, kittsteiner2004declining}, buyer's options \citep{black1992winner}, supply uncertainty \citep{jeitschko1999equilibrium, engelbrecht1994sequential}, synergies \citep{branco1997sequential, kong2021sequential}, loss aversion \citep{rosato2022loss}, ambiguity \citep{ghosh2021sequential}, stochastic entry \citep{hendricks2012last, said2011sequential}, and item ordering effects \citep{elmaghraby2003importance, muramoto2016sequential, shi2022implementing}.  Equilibrium computation in this setting is technically difficult \citep{caillaud2004equilibrium, benoit2001multiple, landi2018sequential}, and the bid function can change qualitatively under small perturbations.  Our behavioral restrictions are compatible with many of these mechanisms because they do not require a particular equilibrium bid function.
	
	For nonparametric estimation of auctions, the dominant approach is structural: specify an equilibrium and invert the bid function \citep{guerre2000optimal, athey2002identification, athey2007nonparametric, kim2025searchauction}.  Extensions address affiliated values \citep{li2002structural}, unknown numbers of bidders \citep{il2014nonparametric, an2010estimating}, unobserved heterogeneity \citep{krasnokutskaya2011identification}, and selective entry \citep{gentry2014partial}.  For dynamic auctions, structural methods require solving the full dynamic programming problem \citep{jofre2003estimation, donald2006empirical, kong2021sequential, groeger2014study}.  For sequential English auctions specifically, \citet{brendstrup2006identification} achieve point identification by exploiting changing bidder composition, and \citet{brendstrup2007non} proposes a nonparametric estimator, but \citet{lamy2010identification} shows the latter's identification argument fails.  This identification failure further motivates partial identification, which does not require invertibility of an equilibrium bid function. In partial identification of auctions, \citet{haile2003inference} propose the incomplete model for static English auctions.  \citet{tang2011bounds} demonstrates policy relevance by computing bounds on counterfactual revenue.  Our paper extends HT to the sequential setting; to the best of our knowledge, it is the first to provide partial identification bounds for valuations in sequential English auctions.
	
	The remainder of the paper is organized as follows.  Section~\ref{sec:model} presents the model. Section~\ref{sec:identification} derives bounds on the valuation distribution, characterizes sharp bounds, and discusses robustness. Section~\ref{sec:estimation_inference} discusses estimation and inference. Section~\ref{sec:sim} presents simulations. Section~\ref{sec:empirical} applies the method to two datasets, Korean wholesale used car auctions and online vehicle auctions from Cars and Bids.  Section~\ref{sec:counterfactual} presents three counterfactual exercises that translate the bounds on~$F$ into bounds on policy quantities (future-auction uncertainty, effective competition, and reserve-price design). Section~\ref{sec:conclusion} concludes. All proofs are in the appendix.
		
	\section{The Incomplete Sequential Auction Model}\label{sec:model}
	
	An auctioneer holds a finite series of ascending-price auctions with a single heterogeneous unit sold in each period $k \in \{1, \ldots, K\}$. Bidders, indexed by $i \in \{1, \ldots, N\}$, enter the market before the start of the sequential auctions and remain until they win an item or the final auction concludes. Each bidder is risk-neutral with single-unit demand. Supply uncertainty is captured by $\tau_{i,k} \in [0,1]$, the probability that an item is available for sale in period $k$ from bidder $i$'s perspective.
	
	The items are heterogeneous and imperfect substitutes. Bidder $i$'s valuation for the object in period $k$ is $v_{i,k} \equiv \phi(Z_k)\, \theta_i$, where $\phi(Z_k)$ is a common value component depending on observable item characteristics $Z_k \in \mathcal{R}_Z$ with $\phi: \mathcal{R}_Z \to \mathbb{R}_{+}$, and $\theta_i$ is bidder $i$'s private value component, drawn i.i.d.\ from a distribution $F^0(\cdot)$ supported on $[\underline{\theta}, \bar{\theta}]$ and independent of $Z_k$. Bidders know their own $\theta_i$ and $\phi(Z_k)$ but not the private values of their competitors. The total number of bidders $N$ and the number of auctions $K$ are common knowledge. Each auction follows an open ascending-price format with a fixed bid increment $\Delta \geq 0$. Bidding starts at an opening price $p_k$ set at or near the reserve price $r_k$. The auctioneer accepts monotonically increasing bids, and the auction ends when only one active bidder remains; the winner pays the final bid. Let $b_{i,k}$ denote bidder $i$'s highest bid in period $k$. We impose the same multiplicative structure on bids:
	\begin{align}\label{eq:bid_structure}
		b_{i,k} \equiv \phi(Z_k)\, \eta_{i,k}, \quad \text{where } \ \eta_{i,k} \sim G_k \ \text{and } \eta_{i,k} \ci Z_k.
	\end{align}
	This structure provides the basis for recovering the common value component $\phi(Z_k)$ from observed bids and item characteristics.
	
	Our incomplete model consists of two behavioral assumptions.
	
	\begin{assumption}\label{ass:no_overbid}
		Bidders do not bid above their valuations: $b_{i,k} \leq v_{i,k}$ for all $i$ and $k$.
	\end{assumption}
	
	\begin{assumption}\label{ass:opportunity_cost}
		In each period, a bidder who does not win the current auction has a weakly higher expected payoff from future periods than the profit from winning at the current price.
	\end{assumption}
	
	\noindent Assumption~\ref{ass:no_overbid} is standard and identical to the first assumption in HT: no rational bidder accepts a sure loss. Assumption~\ref{ass:opportunity_cost} is the dynamic analogue of HT's second assumption: losing bidders weakly prefer their continuation value to the profit from winning at the current price. We impose no further restrictions on the bidding strategy; our model permits any behavior consistent with these two assumptions.
	
	The expected payoff from future periods after period $k$ for bidder $i$ is
	\begin{multline}\label{eq:W}
		W_k(\theta_i) \equiv \sum_{t=k+1}^{K} \tau_{i,t} \underbrace{\left(\prod_{r=k+1}^{t-1}\big[1 - P(b_{i,r} > \max_{j \neq i} b_{j,r})\big]\right)}_{\text{prob.\ of losing in periods $k\!+\!1, \ldots, t\!-\!1$}} \\
		\times \underbrace{P(b_{i,t} > \max_{j \neq i} b_{j,t})\, \mathbf{E}\!\left[\pi_{i,t} \mid b_{i,t} > \max_{j \neq i} b_{j,t}\right]}_{\text{expected profit in period $t$}},
	\end{multline}
	where $\pi_{i,t}$ is bidder $i$'s profit from winning in period $t$.\footnote{The product $\prod_{r=k+1}^{t-1}$ equals $1$ when $t=k+1$ (an empty product).} This expected future payoff represents the opportunity cost of winning in the current period.\footnote{We define the expected payoff as the anticipated future payoff after bidder $i$ has lost in period $k$. Under Assumption~\ref{ass:opportunity_cost}, the bidder weighs the profit from winning the current auction against this opportunity cost.} Importantly, we do not impose any structure on the bidder's beliefs about future prices or winning probabilities.
	
	We are primarily interested in the primitive distribution $F^0(\cdot)$. However, if the opening price $p_k > \phi(Z_k)\underline{\theta}$ is binding in period $k$, the auction is uninformative about $F^0$ below $p_k/\phi(Z_k)$. Let $\tilde{p}_k \equiv p_k/\phi(Z_k)$ denote the opening price adjusted by the common value component. Conditional on $\tilde{p}_k$, the truncated distribution is
	\begin{align}\label{eq:truncated_conditional}
		F(\theta \mid \tilde{p}_k) = \frac{F^0(\theta) - F^0(\tilde{p}_k)}{1 - F^0(\tilde{p}_k)} \cdot \mathbbm{1}[\theta \geq \tilde{p}_k], \quad \forall\, \theta \in [\underline{\theta}, \bar{\theta}].
	\end{align}
	Values of participating bidders are i.i.d.\ draws from this truncated distribution, and the unconditional truncated distribution is
	\begin{align}\label{eq:truncated_unconditional}
		F(\theta) = \mathbf{E}_{\tilde{p}}\!\left[F(\theta \mid \tilde{p})\right].
	\end{align}
	Following HT, we treat $F(\cdot)$ as the target distribution of interest, because observed bids identify only the truncated distribution. As discussed in HT, $F(\cdot)$ is sufficient for answering normative and positive questions such as solving for optimal reserve prices.
	
	\section{Identification of the Valuation Distribution}\label{sec:identification}
	
	We first derive bounds on bidder valuations using Assumptions~\ref{ass:no_overbid}--\ref{ass:opportunity_cost}, then construct bounds on the valuation distribution. Throughout, $n_k$ denotes the number of participating bidders in period $k$ and $I_k$ their index set. We write $\theta^{j:n_k} \sim F_{j:n_k}(\cdot)$ for the $j$th highest order statistic of private values in period $k$, $b^{j:n_k}$ for the corresponding order statistic of bids, and $\eta^{j:n_k} \equiv b^{j:n_k}/\phi(Z_k) \sim G_{j:n_k}(\cdot)$ for the reduced-bid order statistic implied by \eqref{eq:bid_structure}.
	
	\subsection{Bounds on Bidder Valuations}\label{subsec:continuation_filter}
	
	We begin with the illustrative two-period case before stating the general results. Consider $K = 2$ and set $p_1=p_2=0$ for simplicity. Under Assumption~\ref{ass:no_overbid}, bidder $i$'s private component is bounded below by her reduced bid in every period. In the last period, there is no future auction, so Assumption~\ref{ass:opportunity_cost} implies the same result as in the static case: all participating bidders who bid below the highest bid have valuations below $(b^{1:n_2} + \Delta)/\phi(Z_2)$. In the first period, Assumption~\ref{ass:opportunity_cost} provides an additional upper bound. All bidders who bid below the highest bid have a weakly higher expected payoff from period 2 than the profit from winning in period 1:
	\begin{align}\label{eq:two_period_ub}
		\theta_i \phi(Z_1) - (b^{1:n_1} + \Delta) &\leq \tau_{i,2}\, P(b_{i,2} > \max_{j \neq i} b_{j,2})\, \mathbf{E}[\pi_{i,2} \mid b_{i,2} > \max_{j \neq i} b_{j,2}] \nonumber \\
		&\leq \mathbf{E}[\pi_{i,2} \mid b_{i,2} > \max_{j \neq i} b_{j,2}] \nonumber \\
		&\leq \theta_i \phi(Z_2).
	\end{align}
	The left-hand side is the profit from winning the first auction by raising the bid slightly above the highest bid. The second line uses $P(\cdot) \leq 1$ and $\tau_{i,2} \leq 1$, and the last line bounds the expected profit by the maximum possible gross value of the future object. Suppose that $\phi(Z_1) > \phi(Z_2)$. Rearranging yields
	\begin{equation}\label{eq:two_period_bound}
		\theta_i \leq \frac{b^{1:n_1} + \Delta}{\phi(Z_1) - \phi(Z_2)}, \quad \forall\, i \in I_1 \text{ with } b_{i,1} < b^{1:n_1}.
	\end{equation}
	
	Now turning to the general $K$-period case, Assumption~\ref{ass:no_overbid} implies that valuations are bounded below by observed bids:
	\begin{lemma}[Lower bound on valuations]\label{lem:lb}
		Under Assumption~\ref{ass:no_overbid},
		\begin{align}\label{eq:value_lb}
			\eta^{j:n_k} = \frac{b^{j:n_k}}{\phi(Z_k)} \leq \theta^{j:n_k}, \quad \forall\, j \in \{1, \ldots, n_k\} \text{ and } k \in \{1, \ldots, K\}.
		\end{align}
	\end{lemma}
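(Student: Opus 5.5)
The plan is to reduce the claim to a standard fact about order statistics: if one vector is dominated coordinatewise by another, then every order statistic of the first is dominated by the corresponding order statistic of the second. Fix a period $k$ and work with the $n_k$ participating bidders in $I_k$.

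First, I translate Assumption~\ref{ass:no_overbid} into reduced form. Since $v_{i,k}=\phi(Z_k)\theta_i$ and $b_{i,k}=\phi(Z_k)\eta_{i,k}$, the assumption $b_{i,k}\le v_{i,k}$ gives $\phi(Z_k)\eta_{i,k}\le \phi(Z_k)\theta_i$. With $\phi(Z_k)>0$, dividing yields $\eta_{i,k}\le\theta_i$ for every $i\in I_k$. I need $\phi>0$ here, which is implicit in the definition $\eta^{j:n_k}=b^{j:n_k}/\phi(Z_k)$. Dividing by $\phi(Z_k)$ is also a monotone increasing map, so the $j$th highest reduced bid is exactly $b^{j:n_k}/\phi(Z_k)$. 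This confirms the equality in \eqref{eq:value_lb}.

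Second, I prove the order-statistic comparison. Fix $j$. Among the $j$ bidders holding the $j$ largest reduced bids, each one satisfies $\theta_i\ge\eta_{i,k}\ge\eta^{j:n_k}$. So at least $j$ bidders in $I_k$ have private values of at least $\eta^{j:n_k}$. The $j$th highest private value among them is therefore at least $\eta^{j:n_k}$, which gives $\theta^{j:n_k}\ge\eta^{j:n_k}$. Ties create no difficulty, because the counting argument uses only weak inequalities. The argument holds realization by realization, for every $j$ and $k$.

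The only delicate point is bookkeeping rather than mathematics. The $j$th order statistic of bids need not belong to the same bidder as the $j$th order statistic of values, so the inequality cannot be obtained by matching indices. The counting argument above handles this. I would also remark that truncation at $\tilde p_k$ does not affect the result, since it is a pointwise statement about participating bidders.
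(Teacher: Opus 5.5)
Your proof is correct and is essentially the paper's argument. The paper argues by contradiction: if $\eta^{j:n_k}>\theta^{j:n_k}$, then $j$ bids exceed the $j$th highest valuation, so some bidder must have overbid. You run the same pigeonhole count directly, noting that the holders of the top $j$ reduced bids all have $\theta_i\ge\eta^{j:n_k}$, and this makes the counting step and the use of $\phi(Z_k)>0$ slightly more explicit.
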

	
	\noindent In the last period, Assumption~\ref{ass:opportunity_cost} reduces to HT's second assumption, since there is no future auction. Hence we derive the following upper bound:
	\begin{lemma}[Upper bound---last period]\label{lem:ub_last}
		Under Assumption~\ref{ass:opportunity_cost}, for all $i \in I_K$,
		\begin{equation}\label{eq:ub_last}
			\theta_i \leq \begin{cases}
				\bar{\theta}, & \text{if } b_{i,K} = b^{1:n_K}, \\[4pt]
				\displaystyle\frac{b^{1:n_K} + \Delta}{\phi(Z_K)}, & \text{if } b_{i,K} < b^{1:n_K}.
			\end{cases}
		\end{equation}
	\end{lemma}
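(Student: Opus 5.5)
The argument specializes Assumption~\ref{ass:opportunity_cost} to period $K$, where no future auction exists, and then splits into cases according to whether bidder $i$ submitted the highest bid.

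\textbf{Step 1: the continuation value is zero.} By definition \eqref{eq:W}, $W_K(\theta_i)$ is a sum over $t \in \{K+1, \ldots, K\}$. This index set is empty, so $W_K(\theta_i) = 0$ for every $\theta_i$, whatever the bidder believes about supply $\tau_{i,t}$ or winning probabilities.

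\textbf{Step 2: the losing-bidder case.} Take $i \in I_K$ with $b_{i,K} < b^{1:n_K}$. Bidder $i$ does not win, so Assumption~\ref{ass:opportunity_cost} applies. It says the profit from winning at the current price is at most her continuation value.

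The relevant deviation is to raise the bid one increment above the standing high bid and win at $b^{1:n_K} + \Delta$. This mirrors the period-1 step in \eqref{eq:two_period_ub}. The resulting profit is $\theta_i \phi(Z_K) - (b^{1:n_K} + \Delta)$, and this must be $\leq W_K(\theta_i) = 0$. Since $\phi(Z_K) > 0$, dividing gives $\theta_i \leq (b^{1:n_K} + \Delta)/\phi(Z_K)$.

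\textbf{Step 3: the winning-bidder case.} Take $i$ with $b_{i,K} = b^{1:n_K}$. Assumption~\ref{ass:opportunity_cost} imposes no restriction on the winner. The only available bound is the support restriction $\theta_i \leq \bar\theta$ from $F^0$ being supported on $[\underline{\theta}, \bar{\theta}]$.

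\textbf{Main obstacle.} The one delicate point is interpretive: Assumption~\ref{ass:opportunity_cost} must be read as comparing the continuation value with the profit from winning at the \emph{price needed to beat the final standing bid}, namely $b^{1:n_K} + \Delta$. This is the same reading the paper uses in the two-period illustration.

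Ties at the top also need care. If several bidders share $b^{1:n_K}$, only one wins, and each losing tied bidder falls under the ``$=$'' branch of \eqref{eq:ub_last}, which gives the trivially valid bound $\bar\theta$. So the lemma holds as stated without resolving ties. It is also worth noting that if $\phi(Z_K) = 0$ were allowed, the division in Step 2 would fail; I would note that positivity of $\phi$ on realized $Z_K$ is implicitly maintained.
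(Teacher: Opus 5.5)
Your proposal is correct and matches the paper's proof essentially step for step: $W_K(\theta_i)=0$ because no auction follows period $K$, so a losing bidder's profit from winning at $b^{1:n_K}+\Delta$ must be nonpositive, while the winner is bounded only by $\bar{\theta}$. Your remarks on ties at the top bid and on needing $\phi(Z_K)>0$ for the division are sensible refinements that the paper leaves implicit.
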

	\noindent For a non-terminal period $k < K$, Assumption~\ref{ass:opportunity_cost} implies
	\begin{align}\label{eq:opportunity_cost_ineq}
		\theta_i \phi(Z_k) - (b^{1:n_k} + \Delta) \leq W_k(\theta_i), \quad \forall\, i \in I_k \text{ with } b_{i,k} < b^{1:n_k},\  k < K.
	\end{align}
	The key step is bounding the expected future payoff $W_k(\theta_i)$. For a retained auction $k$, we specify a continuation-value index $C_k$ and a nonnegative payment term $m_k$ such that
	\begin{equation}\label{eq:Ck_filter}
		W_k(\theta_i) \leq \theta_i C_k - m_k
	\end{equation}
	for the non-winning bidders. The index $C_k$ is an upper bound on the common-value component of the relevant future opportunity set, and $m_k$ is a lower bound on the payment required to realize that opportunity.  Together, they give an observable upper bound on the net continuation payoff.  Combining \eqref{eq:Ck_filter} with \eqref{eq:opportunity_cost_ineq} gives
	\begin{equation}\label{eq:Ck_prebound}
		\theta_i\{\phi(Z_k)-C_k\} \leq b^{1:n_k}+\Delta-m_k.
	\end{equation}
	Thus any auction satisfying $\phi(Z_k)>C_k$ yields an upper bound on non-winner valuations,
	\begin{equation}\label{eq:Ck_upper}
		\theta_i \leq \frac{b^{1:n_k}+\Delta-m_k}{\phi(Z_k)-C_k}, \quad i \in I_k,\ b_{i,k}<b^{1:n_k},
	\end{equation}
	whenever the numerator is positive.  Conversely, any auction satisfying $\phi(Z_k)<C_k$ yields a lower bound on valuations,
	\begin{equation}\label{eq:Ck_lower}
		\theta_i \geq \frac{m_k-b^{1:n_k}-\Delta}{C_k-\phi(Z_k)}, \quad i \in I_k,\ b_{i,k}<b^{1:n_k},
	\end{equation}
	whenever the numerator is positive.  
	
	The empirical content depends on how $C_k$ is specified.  Here we outline our baseline specification, $C_k=\phi(Z_{k+1})$, which yields a valid bound under the following monotonicity condition on common values:
	\begin{condition}[Descending common values]\label{cond:descending}
		$\phi(Z_k) > \phi(Z_{k+1})$ for $k < K$.
	\end{condition}
	\noindent This condition is a transparent sufficient condition for deriving a recursive bound on $W_k(\theta_i)$. It should not be interpreted as a restriction on bidder preferences, market structure, or the auctioneer's ordering rule.\footnote{The earlier work of \cite{elmaghraby2003importance}, \cite{kittsteiner2004declining}, and \cite{shi2022implementing} relates item ordering to revenue or efficiency under specific equilibrium models; our use of Condition~\ref{cond:descending} is different in spirit, requiring no equilibrium model and no sorting rule.}  When bidders view the full remaining sequence as their relevant opportunity set, Condition~\ref{cond:descending} requires common values to be decreasing along that remaining sequence so that the value of waiting after period $k$ is bounded by the next opportunity. In practice, the condition can be imposed locally on any tail of the sequence: if $\phi(Z_k)>\phi(Z_{k+1})>\cdots>\phi(Z_K)$ for a given market, then the bounds below can be applied to the auctions in that decreasing tail even when earlier auctions in the same market do not satisfy the condition.
	
	\begin{lemma}[Bound on expected future payoff]\label{lem:expected_payoff_bound}
		Under Condition~\ref{cond:descending}, the expected future payoff in period $k$, $W_k(\theta_i)$, satisfies:
		\begin{enumerate}
			\item[(a)] $W_k(\theta_i) \leq \theta_i \phi(Z_{k+1})$ for all $i \in I_k$ and $k < K$.
			\item[(b)] Let $V_{i,t}\equiv \theta_i\phi(Z_t)-(p_t+\Delta)$ denote the maximum net payoff from winning period~$t$ at the minimum feasible payment, and set $V_{i,K+1}=0$.  If $V_{i,t}\geq \max\{0,V_{i,t+1}\}$ for all $t=k+1,\ldots,K$, then
			\begin{align}\label{eq:W_bound}
				W_k(\theta_i) \leq V_{i,k+1}
				= \theta_i \phi(Z_{k+1}) - (p_{k+1} + \Delta), \quad \forall\, i \in I_k,\ k < K.
			\end{align}
		\end{enumerate}
	\end{lemma}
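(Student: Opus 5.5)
For both parts the plan is to write $W_k(\theta_i)$ from \eqref{eq:W} as a convex-type combination of single-period expected profits and bound each profit term separately. Write $q_t \equiv P(b_{i,t} > \max_{j\neq i} b_{j,t})$ and $\pi^e_t \equiv \mathbf{E}[\pi_{i,t}\mid b_{i,t}>\max_{j\neq i} b_{j,t}]$. Define the weights $\omega_t \equiv \tau_{i,t}\prod_{r=k+1}^{t-1}(1-q_r)\,q_t$ for $t=k+1,\ldots,K$, so that $W_k(\theta_i)=\sum_{t=k+1}^K \omega_t \pi^e_t$.

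The first step is to show $\sum_t \omega_t \le 1$. Since $\tau_{i,t}\le 1$, it suffices to show $\sum_{t=k+1}^K \prod_{r=k+1}^{t-1}(1-q_r)\,q_t \le 1$. This follows from the telescoping identity
\[
\sum_{t=k+1}^K \prod_{r=k+1}^{t-1}(1-q_r)\,q_t \;=\; 1-\prod_{r=k+1}^{K}(1-q_r) \;\le\; 1,
\]
which I would verify by induction on $K$.

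\textbf{Part (a).} In any period $t$, the winner's profit is the value minus a nonnegative payment, so $\pi^e_t \le \theta_i\phi(Z_t)$. Condition~\ref{cond:descending} gives $\phi(Z_t)\le \phi(Z_{k+1})$ for all $t\ge k+1$. Combining these with the weight bound,
\[
W_k(\theta_i)\le \Big(\sum_t\omega_t\Big)\,\theta_i\phi(Z_{k+1}) \le \theta_i\phi(Z_{k+1}).
\]
I use $\theta_i\ge 0$, which holds because $\phi\ge0$ and values are nonnegative; a negative $\theta_i$ would make the term vacuous anyway. This argument generalizes the two-period chain \eqref{eq:two_period_ub}.

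\textbf{Part (b).} Winning period $t$ requires paying at least the opening price plus the increment, $p_t+\Delta$. So on the winning event, $\pi_{i,t}\le \theta_i\phi(Z_t)-(p_t+\Delta)=V_{i,t}$, hence $\pi^e_t\le V_{i,t}$. The hypothesis $V_{i,t}\ge \max\{0,V_{i,t+1}\}$ for $t=k+1,\ldots,K$ gives the chain $V_{i,k+1}\ge V_{i,k+2}\ge\cdots\ge V_{i,K}\ge 0$. Then
\[
W_k(\theta_i)\le \sum_t \omega_t V_{i,t}\le V_{i,k+1}\sum_t\omega_t\le V_{i,k+1}.
\]
The last inequality uses $V_{i,k+1}\ge 0$ together with $\sum_t\omega_t\le1$. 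The nonnegativity part of the hypothesis is exactly what makes this step work: without it, a negative $V_{i,k+1}$ scaled by a weight sum below one would exceed $V_{i,k+1}$.

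\textbf{Main obstacle.} The delicate point is the payment lower bound in (b), i.e. justifying that a winner always pays at least $p_t+\Delta$. I would argue it directly from the auction format: bids start at $p_t$ and increase monotonically in steps of $\Delta$. I would treat the sole-bidder case as covered by the convention that the first accepted bid is $p_t+\Delta$. If instead the paper allows a winning price of $p_t$, the same argument goes through with $p_t$ in place of $p_t+\Delta$, which gives a slightly weaker $m_k$. A secondary check is that the conditional expectations are well defined when $q_t=0$; in that case the term carries zero weight and can be dropped.
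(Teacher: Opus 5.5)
Your proof is correct. Part (b) is essentially the paper's own argument. The paper also derives the chain $V_{i,k+1}\ge\cdots\ge V_{i,K}\ge0$, bounds every conditional profit by $V_{i,k+1}$, and bounds the total weight by the probability of winning in some future period. Your identity $\sum_{t=k+1}^{K}\prod_{r=k+1}^{t-1}(1-q_r)\,q_t=1-\prod_{r=k+1}^{K}(1-q_r)$ simply makes that last step explicit.

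Part (a) is where you differ. The paper does not use the weight bound there. It first replaces each profit by $\theta_i\phi(Z_t)$ and each $\tau_{i,t}$ by one, then collapses the sum from the back. Using $q_K\le1$, it merges the last two terms into $\prod_{r=k+1}^{K-2}(1-q_r)\,[\,q_{K-1}\theta_i\phi(Z_{K-1})+(1-q_{K-1})\theta_i\phi(Z_K)\,]$. This is at most $\prod_{r=k+1}^{K-2}(1-q_r)\,\theta_i\phi(Z_{K-1})$ by $q_{K-1}\le1$ and the adjacent comparison $\phi(Z_{K-1})\ge\phi(Z_K)$, and the step is iterated down to $t=k+1$.

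Your one-shot argument (sub-probability weights times a uniform bound) has three advantages:
\begin{itemize}
\item it is shorter;
\item it handles (a) and (b) with a single device;
\item it shows that (a) only needs $\phi(Z_{k+1})\ge\phi(Z_t)$ for all $t>k+1$. With no ordering at all it gives $W_k(\theta_i)\le\theta_i\max_{t>k}\phi(Z_t)$, which is precisely the conservative continuation index in the remark following Theorem~\ref{thm:ub_general}.
\end{itemize}
The paper's backward recursion instead uses Condition~\ref{cond:descending} one adjacent pair at a time, so it mirrors the Bellman structure of the continuation payoff. That inductive form is what the paper reuses for the recursively defined infinite-horizon payoff in Lemma~\ref{lem:ub_infinite}, where unrolling the weights into closed form is less convenient.
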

	Part~(a), which always holds under Condition~\ref{cond:descending}, provides an upper bound on $W_k(\theta_i)$ with $C_k=\phi(Z_{k+1})$ and $m_k=0$. Part~(b) refines Part~(a) by incorporating opening prices along the relevant future tail with $m_k=p_{k+1}+\Delta$.\footnote{Since $W_k(\theta_i) \geq 0$ (bidders can always abstain from future auctions), Part~(b) is informative only for bidders whose relevant future net payoff is nonnegative. The condition $V_{i,t}\geq \max\{0,V_{i,t+1}\}$ is a sufficient recursive dominance condition: each future auction's maximum net payoff weakly exceeds both zero and the bound on later continuation payoffs. For adjacent periods with positive later net payoff, it requires $\theta_i\{\phi(Z_t)-\phi(Z_{t+1})\}\geq p_t-p_{t+1}$, so opening prices in both periods enter the comparison.} When the condition in Part~(b) holds, combining Lemma~\ref{lem:expected_payoff_bound} with \eqref{eq:opportunity_cost_ineq} yields
	\begin{align}\label{eq:pre_ub}
		\theta_i (\phi(Z_k) - \phi(Z_{k+1})) \leq b^{1:n_k} + \Delta - (p_{k+1} + \Delta) = b^{1:n_k} - p_{k+1}.
	\end{align}
	Since $\phi(Z_k) > \phi(Z_{k+1})$ by Condition~\ref{cond:descending}, dividing by the positive quantity $\phi(Z_k) - \phi(Z_{k+1})$ yields an upper bound on $\theta_i$:
	
	\begin{theorem}[Upper bound]\label{thm:ub_general}
		Under Assumption~\ref{ass:opportunity_cost} and Condition~\ref{cond:descending}, for all $i \in I_k$ with $b_{i,k} < b^{1:n_k}$ and $k < K$,
		\begin{align}
			\theta_i &\leq \frac{b^{1:n_k}+\Delta}{\phi(Z_k) - \phi(Z_{k+1})}, \label{eq:ub_general_gross}\\
			\theta_i &\leq \frac{b^{1:n_k} - p_{k+1}}{\phi(Z_k) - \phi(Z_{k+1})}
			\quad\text{if Lemma~\ref{lem:expected_payoff_bound}(b) applies.}\label{eq:ub_general}
		\end{align}
		For a winner with $b_{i,k}=b^{1:n_k}$, no tighter bound than $\bar{\theta}$ is available.
	\end{theorem}
	\noindent The first finite bound is called the baseline (gross) bound implied by Part~(a) of Lemma~\ref{lem:expected_payoff_bound}.  The second is referred to as the sharper net bound implied by Part~(b). When $k+2\leq K$, the first adjacent-period dominance condition is
	\[
	\theta_i\phi(Z_{k+1})-(p_{k+1}+\Delta)
	\geq
	\max\{0,\theta_i\phi(Z_{k+2})-(p_{k+2}+\Delta)\}.
	\]
	When the later net payoff is positive, this is equivalent to $\theta_i\{\phi(Z_{k+1})-\phi(Z_{k+2})\}\geq p_{k+1}-p_{k+2}$. Thus, if opening prices are zero and the relevant future net payoffs are nonnegative, Condition~\ref{cond:descending} makes the adjacent comparisons automatic; the condition is also automatic in the two-period case once the next auction is profitably enterable.
	
	The upper bounds in \eqref{eq:ub_last}, \eqref{eq:ub_general_gross}, and \eqref{eq:ub_general} are most informative for the second-highest order statistic $\theta^{2:n_k}$. The bound on $\theta^{2:n_k}$ implies the same inequality on all lower-order statistics since $\theta^{j:n_k} \leq \theta^{2:n_k}$ for $j > 2$. The bound for the highest-order statistic is uninformative (only $\bar{\theta}$). When Lemma~\ref{lem:expected_payoff_bound}(b) applies, the relevant upper bounds are:
		\begin{equation}\label{eq:ub_summary}
			\theta^{2:n_k} \leq \begin{cases}
				\displaystyle\frac{b^{1:n_k} - p_{k+1}}{\phi(Z_k) - \phi(Z_{k+1})}, & \text{if } k < K, \\[6pt]
				\displaystyle\frac{b^{1:n_K} + \Delta}{\phi(Z_K)}, & \text{if } k = K.
			\end{cases}
		\end{equation}

    \begin{remark}
    Without imposing Condition~\ref{cond:descending}, one can derive a more conservative upper bound on $W_k(\theta_i)$ by specifying $C_k=\max_{t\geq k+1}\phi(Z_t)$.\footnote{Other alternative specifications can be considered with limited attention or discounting, where $C_k=\max_{k<t\leq k+H}\phi(Z_t)$ or $C_k=\max_{t>k}\delta^{t-k}\phi(Z_t)$ for $\delta\in(0,1]$, respectively.} The identifying restriction is no longer global monotonicity of the sequence, but the observation-level filter $\phi(Z_k)>C_k$. Below, the distribution bounds are written for the baseline gross statistic; when Lemma~\ref{lem:expected_payoff_bound}(b) applies, the sharper net statistic is obtained by setting $m_k=p_{k+1}+\Delta$. When the monotonicity condition is not met, the valuation bounds can therefore be implemented with an alternative specification of $C_k$ and $m_k$.   
    \end{remark}

	\subsection{Bounds on the Valuation Distribution}\label{subsec:dist_bounds}
    
    
	We now translate the valuation bounds into bounds on the distribution $F^0$. Lemma~\ref{lem:lb} implies the first-order stochastic dominance:
	\begin{align}\label{eq:fosd_ub}
		F_{j:n_k}(x) \leq G_{j:n_k}(x), \quad \forall\, x \in [\underline{\theta}, \bar{\theta}],\ n_k \leq N,\ j \in \{1, \ldots, n_k\}.
	\end{align}
	The distribution of the $j$th order statistic from $n_k$ i.i.d.\ draws from $F^0$ can be expressed via the incomplete beta function:
	\begin{align}\label{eq:beta}
		F_{j:n_k}(x) = \Gamma(F^0(x);\, n_k - (j-1),\, j),
	\end{align}
	where $\Gamma(z; a, b) \equiv I_z(a,b) = \int_0^z \frac{(a+b-1)!}{(a-1)!(b-1)!} t^{a-1}(1-t)^{b-1}\, dt$ is the regularized incomplete beta function. Since $\Gamma^{-1}(\cdot)$ is monotone, inverting \eqref{eq:beta} and combining with \eqref{eq:fosd_ub} gives
	\begin{align}\label{eq:F_upper}
		F^0(x) \leq F_U(x) \equiv \min_{\substack{k \in \{1,\ldots,K\} \\ n_k \leq N,\ j \in \{1,\ldots,n_k\}}} \Gamma^{-1}\!\big(G_{j:n_k}(x);\, n_k - (j-1),\, j\big), \quad \forall\, x \in [\underline{\theta}, \bar{\theta}].
	\end{align}
	The minimum across all periods and order statistics selects the tightest upper bound for each $x$.
	
	For the lower bound on $F^0$, consider the last period. Define $\zeta_{1:n_K}^\Delta \equiv (b^{1:n_K} + \Delta)/\phi(Z_K)$ with distribution $G_{1:n_K}^\Delta(\cdot)$.\footnote{We use $\zeta$ rather than $\eta$ for this statistic to distinguish it from the reduced bid $\eta_{i,k} = b_{i,k}/\phi(Z_k)$; note that $\zeta_{1:n_K}^\Delta$ includes the bid increment~$\Delta$ and uses the winning bid rather than a single bidder's bid.} The upper bound on $\theta^{2:n_K}$ from \eqref{eq:ub_summary} yields:
	\begin{align}\label{eq:fosd_lb_last}
		F_{2:n_K}(x) \geq G_{1:n_K}^\Delta(x), \quad \forall\, x \in [\underline{\theta}, \bar{\theta}],\ n_K \leq N.
	\end{align}
	For periods $k < K$, define the baseline sequential statistic $h^g_{1:n_k} \equiv (b^{1:n_k}+\Delta)/(\phi(Z_k) - \phi(Z_{k+1}))$ with distribution $H^g_{1:n_k}(\cdot)$. Then:
	\begin{align}\label{eq:fosd_lb_k}
		F_{2:n_k}(x) \geq H^g_{1:n_k}(x), \quad \forall\, x \in [\underline{\theta}, \bar{\theta}],\ n_k \leq N.
	\end{align}
	Applying the beta inverse transformation and taking the maximum across all bounds:
	\begin{align}\label{eq:F_lower}
		F^0(x) \geq F_L(x) \equiv \max\!\left\{\max_{n_K \leq N} \Gamma^{-1}\!\big(G_{1:n_K}^\Delta(x);\, n_K\!-\!1,\, 2\big),\; \max_{\substack{k < K \\ n_k \leq N}} \Gamma^{-1}\!\big(H^g_{1:n_k}(x);\, n_k\!-\!1,\, 2\big)\right\},
	\end{align}
	for all $x \in [\underline{\theta}, \bar{\theta}]$. When Lemma~\ref{lem:expected_payoff_bound}(b) applies, the sharper net statistic $h^n_{1:n_k}\equiv (b^{1:n_k}-p_{k+1})/(\phi(Z_k)-\phi(Z_{k+1}))$ can replace $h^g_{1:n_k}$ in \eqref{eq:fosd_lb_k}--\eqref{eq:F_lower}. When opening prices vary across auctions, the bounds $F_U$ and $F_L$ apply to the unconditional truncated distribution $F(\cdot)$ defined in \eqref{eq:truncated_unconditional}. Conditional on $\tilde{p}$, order statistics and valuations are i.i.d.\ from the truncated distribution; unconditional moments average the corresponding conditional order-statistic probabilities over the distribution of $\tilde{p}$.
	
	\label{subsec:panel_bounds}
	When individual bidders can be tracked across periods in data, the panel structure provides an additional source of identifying information. For each bidder $i$ observed in periods $k \in \mathcal{K}_i$, Assumption~\ref{ass:no_overbid} implies $\eta_{i,k} = b_{i,k}/\phi(Z_k) \leq \theta_i$ for every period $k \in \mathcal{K}_i$. Taking the maximum across all observed periods:
	\begin{align}\label{eq:panel_lb}
		\eta_i^{\max} \equiv \max_{k \in \mathcal{K}_i} \frac{b_{i,k}}{\phi(Z_k)} \leq \theta_i.
	\end{align}
	Since $\eta_i^{\max} \leq \theta_i$ for every bidder $i$, the event $\{\theta_i \leq x\}$ implies $\{\eta_i^{\max} \leq x\}$. Therefore, with $G^{\max}(\cdot)$ denoting the distribution of $\eta_i^{\max}$:
	\begin{align}\label{eq:F_upper_panel}
		F^0(x) \leq G^{\max}(x), \quad \forall\, x \in [\underline{\theta}, \bar{\theta}].
	\end{align}
	This bound can be tighter than single-period beta-inversion bounds because it uses the maximum of a bidder's reduced bids across all auctions, not just one. Importantly, \eqref{eq:F_upper_panel} does not rely on the order statistics/beta function inversion, and does not require Condition~\ref{cond:descending}; it uses only Assumption~\ref{ass:no_overbid} applied across periods.
	
	\begin{remark}[Selection in later periods]\label{rem:selection}
		The distribution bounds in \eqref{eq:F_upper} and \eqref{eq:F_lower} are written for auctions whose participating bidders' private components are i.i.d.\ draws from $F^0$.  This is immediate in the first period and is also appropriate when each auction is interpreted as drawing a renewed active bidder set from a stable population.  With a fixed initial cohort, unit demand, and no new entry, later-period participation is selected because previous winners no longer bid.  In period $k$, at most $k-1$ bidders from the original cohort have exited.  Therefore the second-highest remaining value is weakly above the $(k+1)$st highest order statistic from the original $N$ draws: $\theta^{2:(N-k+1)} \geq \theta^{k+1:N}$. Since Theorem~\ref{thm:ub_general} gives $h^g_{1:n_k} \geq \theta^{2:(N-k+1)}$ for the remaining bidder set, we have $h^g_{1:n_k} \geq \theta^{k+1:N}$, and hence
		\[
		\Pr(h^g_{1:n_k} \leq x) \leq \Pr(\theta^{k+1:N}\leq x)
		= \Gamma(F^0(x);\, N-k,\, k+1).
		\]
		Thus later-period lower-bound moments can be mapped back to $F^0$ by using the beta shape $(N-k,k+1)$ rather than $(n_k-1,2)$.  The correction is exact for ranks when the highest remaining private value exits after each sale and conservative otherwise.  The no-overbidding upper bound remains conservative under the same fixed-cohort selection: the $j$th highest remaining value is weakly below the $j$th highest value in the original cohort, so $\eta^{j:\mathrm{rem},k}\leq \theta^{j:\mathrm{rem},k}\leq \theta^{j:N}$ and the original-cohort beta inversion remains valid.
	\end{remark}
		
	\subsection{Robustness of the Bounds}\label{subsec:robustness}
	
		The bounds can also accommodate several features commonly encountered in real auction markets as discussed below. Appendix~\ref{app:special_case} further discusses no-bid periods and shows how they can be removed when the remaining sequence satisfies the relevant descending filter.
	
	\begin{itemize}
		\item \textbf{Infinite horizon:}  Our baseline model assumes a finite number of periods~$K$.  In many auction markets, including online used car auctions where listings recur continuously, the horizon is effectively infinite ($K = \infty$) and bidders enter and exit stochastically. The expected future payoff for bidder $i$ with deadline $d_i$ becomes
		\begin{align}\label{eq:W_infinite}
			W_k(\theta_i, d_i) = \int_{n_k}^{\bar{N}} \bigg\{&P(b_{i,k+1} > \max_{j \neq i} b_{j,k+1} \mid n_{k+1})\, \mathbf{E}[\pi_{i,k+1} \mid \text{win}, n_{k+1}] \nonumber \\
			&+ \big[1 - P(b_{i,k+1} > \max_{j \neq i} b_{j,k+1} \mid n_{k+1})\big]\, W_{k+1}(\theta_i, d_i)\bigg\}\, dF_N(n_{k+1}),
		\end{align}
			for $k \leq d_i$, where $F_N$ is the law of future bidder counts and $W_k(\theta_i, d_i) = 0$ for $k > d_i$. The following lemma ensures that our pairwise bounds remain valid in long-horizon settings without requiring a terminal period.
		
		\begin{lemma}\label{lem:ub_infinite}
			Under Assumption~\ref{ass:opportunity_cost} and Condition~\ref{cond:descending}, Part~(a) of Lemma~\ref{lem:expected_payoff_bound} continues to hold under the payoff specification \eqref{eq:W_infinite}: $W_k(\theta_i, d_i) \leq \theta_i\phi(Z_{k+1})$ for all $k \leq d_i$.  Hence the baseline bound in Theorem~\ref{thm:ub_general} remains valid.  The sharper net refinement in Part~(b) applies under the same recursive dominance condition as in the finite-horizon case.
		\end{lemma}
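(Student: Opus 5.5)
The plan is to prove the bound by backward induction on the recursion \eqref{eq:W_infinite}, working with the deadline $d_i$ so that the recursion terminates. The base case is $k = d_i$ (or any $k>d_i$), where $W_{k}(\theta_i,d_i)=0$ for the periods after the deadline. At $k=d_i$ the recursion gives $W_{d_i}(\theta_i,d_i)=\int P(\text{win}\mid n)\,\mathbf{E}[\pi_{i,d_i+1}\mid\text{win},n]\,dF_N(n)$ plus a term involving $W_{d_i+1}=0$, and it is bounded by the argument below. If the deadline is infinite, I would instead work with the finite-truncation $W^{(T)}_k$, obtained by setting $W_{T}=0$. The bound holds uniformly in $T$, and since the summands are nonnegative and increasing in $T$, monotone convergence passes it to the limit.

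The inductive claim is $W_{k}(\theta_i,d_i)\le \theta_i\phi(Z_{k+1})$, established under the hypothesis $W_{k+1}(\theta_i,d_i)\le\theta_i\phi(Z_{k+2})$ for every $k\le d_i$. Fix $n_{k+1}$ and write $P\equiv P(b_{i,k+1}>\max_{j\ne i}b_{j,k+1}\mid n_{k+1})\in[0,1]$. First, Assumption~\ref{ass:no_overbid}-type reasoning bounds the profit from winning by the gross value, exactly as in the last line of \eqref{eq:two_period_ub}: $\mathbf{E}[\pi_{i,k+1}\mid\text{win},n_{k+1}]\le\theta_i\phi(Z_{k+1})$, since payments are nonnegative. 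Second, by the hypothesis and Condition~\ref{cond:descending}, $W_{k+1}\le\theta_i\phi(Z_{k+2})<\theta_i\phi(Z_{k+1})$. The integrand is therefore a convex combination,
\begin{equation*}
P\,\mathbf{E}[\pi_{i,k+1}\mid\cdot]+(1-P)W_{k+1}\le P\,\theta_i\phi(Z_{k+1})+(1-P)\,\theta_i\phi(Z_{k+1})=\theta_i\phi(Z_{k+1}).
\end{equation*}
Integrating against the probability measure $dF_N$ preserves this bound.

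With the bound in hand, I would substitute it into \eqref{eq:opportunity_cost_ineq}; the derivation of \eqref{eq:ub_general_gross} then goes through verbatim with $C_k=\phi(Z_{k+1})$ and $m_k=0$. For Part~(b), I would rerun the same induction with $V_{i,t}$ in place of $\theta_i\phi(Z_t)$. The payment is at least $p_{k+1}+\Delta$, so the profit from winning is at most $V_{i,k+1}$. The recursive dominance condition $V_{i,t}\ge\max\{0,V_{i,t+1}\}$ then supplies $W_{k+1}\le V_{i,k+2}\le V_{i,k+1}$, and the same convex-combination step applies.

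The main obstacle is the induction's starting point when the horizon is truly unbounded with no deadline, because there is no terminal period to anchor the backward induction. I would handle it through the truncation-and-limit argument above. That argument requires $W^{(T)}_k\uparrow W_k$, which I expect holds because the per-period payoff terms are nonnegative (bidders can abstain). If that fails, a fallback is to use $\theta_i\bar\phi$, a uniform bound on all future values, to get boundedness, and then contract the bound. Bidder counts $n_{k+1}$ only enter through $P$ and the conditional expectation, both of which the argument bounds uniformly, so the randomness in $F_N$ causes no difficulty.
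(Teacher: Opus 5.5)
Your proof is correct and is essentially the paper's argument: backward induction from the deadline $d_i$, with the integrand of \eqref{eq:W_infinite} bounded as a convex combination of $\mathbf{E}[\pi_{i,t}\mid \text{win}]\leq\theta_i\phi(Z_t)$ and $W_t\leq\theta_i\phi(Z_{t+1})\leq\theta_i\phi(Z_t)$, and the same step rerun with $V_{i,t}$ for Part~(b). Your anchoring at $W_{d_i+1}=0$ matches the stated convention ($W_k=0$ only for $k>d_i$) a bit more precisely than the paper's base case $W_{d_i}=0$, and your truncation argument for $d_i=\infty$ goes beyond the paper, which only treats finite deadlines.
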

		
		\item \textbf{Impatient bidders}:  When bidders discount future payoffs by a factor $\delta \in (0,1)$, the expected future payoff decreases.  Since our upper bound on $W_k(\theta_i)$ uses $W_k(\theta_i) \leq \theta_i\phi(Z_{k+1})$, discounting only tightens the inequality (the actual future payoff is smaller), so the bounds in Theorem~\ref{thm:ub_general} remain valid.
		
		\item \textbf{Noisy signals}:  If bidders observe only noisy signals about future items' characteristics, they know only the expected value of future objects at the time of bidding.  Since the inequalities bound the \emph{maximum possible} (most optimistic) expected future payoff, they remain valid when bidders have noisy information.	
	\end{itemize}

	\subsection{Sharp Bounds}\label{subsec:sharp}
	
	The bounds we derived so far are not necessarily sharp. We now characterize the sharp bounds using the generalized instrumental variable (GIV) framework of \cite{chesher2017generalized}. The value of this framework is that sharpness can be stated through random-set theory rather than via a constructive proof, which is difficult even in the static HT model.  To simplify exposition, consider the fixed-$N$ case with the baseline specification $C_k=\phi(Z_{k+1})$ and $m_k=0$ (zero opening prices and $\Delta = 0$). Let $\mathcal{K}$ denote the terminal period together with the retained non-terminal periods satisfying the adjacent sign condition $\phi(z_k)>\phi(z_{k+1})$; periods that fail this filter are omitted from the sharp-set construction.\footnote{Other continuation-index specifications are handled by replacing $b_k^{1:N}/\{\phi(Z_k)-\phi(Z_{k+1})\}$ below with the statistic in \eqref{eq:Ck_upper}.} For each $k\in\mathcal{K}$, define the vector of ordered bids $B_k \equiv (b_k^{1:N}, \ldots, b_k^{N:N})$ and item characteristics $Z_k$. Let $U \equiv (u_1, \ldots, u_N)$ denote the vector of transformed value order statistics, $u_j \equiv F^0(\theta^{j:N})$, so that $U$ is supported on $\mathcal{R}_U=\{1\geq u_1\geq\cdots\geq u_N\geq0\}$ with density $N!$.
	
	Define $B \equiv (B_k)_{k\in\mathcal{K}}$ and $Z \equiv (Z_k)_{k\in\mathcal{K}}$.  The inequalities \eqref{eq:value_lb} and \eqref{eq:ub_summary} imply the structural function $h(B,Z,U)=\sum_{k\in\mathcal{K}} h_k(B_k,Z_k,U)=0$ a.s., where
	\begin{equation}\label{eq:structural}
		h_k(B_k, Z_k, U) = \max\!\left(u_2 - F^0\!\left(\frac{b_k^{1:N}}{\phi(Z_k) - \phi(Z_{k+1})}\right), 0\right) + \sum_{j=1}^N \max\!\left(F^0\!\left(\frac{b_k^{j:N}}{\phi(Z_k)}\right) - u_j,\, 0\right),
	\end{equation}
	with $\phi(Z_{K+1})\equiv0$. For a closed subset $S\subseteq\mathcal{R}_U$, let $G_U(S)\equiv P[U\in S]$. Given observables $B=b$ and $Z=z$, the set of latent variables consistent with the model is
	\begin{equation}\label{eq:U_level_set}
		\mathcal{U}(b, z; h) \equiv \left\{u \in \mathcal{R}_U : \forall\, k \in \mathcal{K},\ F^0\!\left(\frac{b_k^{1:N}}{\phi(z_k) - \phi(z_{k+1})}\right) \geq u_2 \text{ and } F^0\!\left(\frac{b_k^{j:N}}{\phi(z_k)}\right) \leq u_j,\ \forall\, j\right\}.
	\end{equation}
	When $B$ is random, $\mathcal{U}(B,z;h)$ is a random set.  Let $\mathcal{Q}(z;h)\equiv\{\mathcal{U}(b,z;h): b\in\mathcal{R}_B\}$ be its conditional support and $\mathcal{Q}^*(z;h)$ denote the collection of unions of elements of $\mathcal{Q}(z;h)$.
	
	\begin{lemma}[Sharp bounds]\label{lem:sharp}
		Under Assumptions~\ref{ass:no_overbid}--\ref{ass:opportunity_cost}, the identified set $\mathcal{H}^*$ of the structural function $h$ is
		\begin{equation}\label{eq:sharp_bounds}
			\mathcal{H}^* \equiv \left\{h : \forall\, S \in \mathcal{Q}^*(z; h),\ P[\mathcal{U}(B, z; h) \subseteq S \mid Z = z] \leq G_U(S)\ \text{for a.e. } z\right\}.
		\end{equation}
	\end{lemma}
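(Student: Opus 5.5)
The plan is to cast the incomplete sequential model as a generalized instrumental variable (GIV) model in the sense of \cite{chesher2017generalized} and then apply their Theorem on identified sets.

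\textbf{Step 1: verify the GIV primitives.} The observable outcome is $Y=B$, the exogenous variable is $Z$, and the latent variable is $U$. Independence of $U$ and $Z$ holds because $\theta_i$ is i.i.d.\ from $F^0$ and independent of $Z_k$. Hence the order statistics $\theta^{j:N}$ are independent of $Z$. Their probability-integral transforms $u_j=F^0(\theta^{j:N})$ have the known law $G_U$: the uniform order-statistic law on $\mathcal{R}_U$ with density $N!$. This law does not depend on $F^0$ when $F^0$ is continuous. The structure is a restriction of the form $h(B,Z,U)=0$ a.s., and I will check that this is equivalent to the model.

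\textbf{Step 2: show the level set is exactly the model's content.} Since every summand in \eqref{eq:structural} is a nonnegative $\max(\cdot,0)$ term, $h=0$ holds iff every inequality holds. By monotonicity of $F^0$, these inequalities are exactly Lemma~\ref{lem:lb} and \eqref{eq:ub_summary}, specialized to $\Delta=0$, zero opening prices, and $\phi(Z_{K+1})=0$ for the terminal period. Those results were derived from Assumptions~\ref{ass:no_overbid}--\ref{ass:opportunity_cost} together with the retained filter $\phi(z_k)>\phi(z_{k+1})$. It follows that $\{u: h(b,z,u)=0\}=\mathcal{U}(b,z;h)$, which is \eqref{eq:U_level_set}.

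We also need closedness of the sets for random-set measurability. Each set is an intersection of closed half-space-type constraints in $u$, so it is closed. This makes $\mathcal{U}(B,z;h)$ a closed random set.

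\textbf{Step 3: apply Artstein's inequality.} Following Chesher and Rosen, a structure $(h,G_U)$ is observationally equivalent to the data iff there exists a selection $U\in\mathcal{U}(B,Z;h)$ a.s.\ with $U\ci Z$ and $U\sim G_U$. By Artstein's theorem, this holds iff
\[
P[\mathcal{U}(B,z;h)\subseteq S\mid Z=z]\le G_U(S)
\]
for all closed $S$, for a.e.\ $z$. This containment-functional form is the conjugate of the capacity form. The set $\mathcal{H}^*$ collecting such $h$ is therefore the sharp identified set. Here $h$ is indexed by $(F^0,\phi)$, and $\phi$ is treated as given or identified from \eqref{eq:bid_structure}.

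\textbf{Step 4: reduce to the core determining class.} Theorem~3 of \cite{chesher2017generalized} shows it suffices to check $S\in\mathcal{Q}^*(z;h)$, the unions of support elements. The reason is that for any closed $S$, the event $\mathcal{U}\subseteq S$ equals the event $\mathcal{U}\subseteq S'$, where $S'$ is the union of all sets in $\mathcal{Q}(z;h)$ contained in $S$. Since $G_U(S')\le G_U(S)$, the inequality for $S'$ implies the one for $S$.

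\textbf{Main obstacle.} I expect the main difficulty to be ensuring that the stated inequalities capture all of the model's restrictions. Assumption~\ref{ass:opportunity_cost} was replaced by the implied bounds through $W_k\le\theta_i\phi(Z_{k+1})$, so strictly speaking $\mathcal{H}^*$ is sharp for the model defined by $h$, that is, by the retained inequalities.

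I would state this explicitly. I would then argue completeness by a construction: given any $\theta$ configuration satisfying the inequalities, one can build bidding behavior consistent with Assumptions~\ref{ass:no_overbid}--\ref{ass:opportunity_cost}. Take beliefs under which the future win probability is one at zero price. Then $W_k=\theta_i\phi(Z_{k+1})$ is attained, so the bound is not slack.

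The other technical point is measurability and non-emptiness of $\mathcal{U}(B,z;h)$ a.s., which follows from the existence of the true $U$.
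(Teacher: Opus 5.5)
Your proposal is correct and follows the paper's route. The paper's proof just invokes Theorem~1 of \cite{chesher2017generalized} for the structural function $h$ in \eqref{eq:structural}, with the Artstein containment inequality applied to $\mathcal{U}(B,z;h)$; your Steps~1--4 spell out that application in detail, and your caveat that sharpness is relative to the inequalities encoded in $h$ (supported by the belief-based construction, which is valid because the paper leaves beliefs unrestricted) is a point the paper's one-line proof leaves implicit.
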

	
	The sharp set therefore consists of all valuation distributions whose induced structural function satisfies an uncountable collection of Artstein inequalities. The bounds in \eqref{eq:F_upper} and \eqref{eq:F_lower} correspond to particular contiguous unions of $U$-level sets. To see this, write reduced bids as $\eta_k=(\eta_{k,1},\ldots,\eta_{k,N})$ with $b_k(\eta_k)=(\eta_{k,1}\phi(z_k),\ldots,\eta_{k,N}\phi(z_k))$ and $\eta_{k,1}\geq\cdots\geq\eta_{k,N}$, and define $b(\eta)\equiv(b_k(\eta_k))_{k\in\mathcal{K}}$. For two arrays $\eta'$ and $\eta''$, define
	\begin{equation}\label{eq:contiguous_union}
		S((\eta',\eta''),z;h)\equiv \bigcup_{\eta\in[\eta',\eta'']}\mathcal{U}(b(\eta),z;h),
	\end{equation}
	where $[\eta',\eta'']$ is the Cartesian product of componentwise intervals. For the class of intervals used below, this union occupies
	\begin{multline}\label{eq:contiguous_region}
		S((\eta',\eta''),z;h)=
		\Bigg\{u\in\mathcal{R}_U:\ \forall k\in\mathcal{K},\ 
		F^0\!\left(\frac{\eta_{k,1}''\phi(z_k)}{\phi(z_k)-\phi(z_{k+1})}\right)\geq u_2,\\
		\text{and}\quad F^0(\eta_{k,j}')\leq u_j,\ \forall j=1,\ldots,N
		\Bigg\}.
	\end{multline}
	Applying Lemma~\ref{lem:sharp} to these contiguous unions yields moment inequalities that every admissible $F^0$ must satisfy.
	
	The no-overbidding upper bound is recovered by fixing a period $k$ and rank $m$, setting $\eta_{k,1}''=\bar{\theta}$, $\eta_{k,j}'=\underline{\theta}$ for $j>m$, and $\eta_{k,j}'=\theta$ for $j\leq m$, while leaving the restrictions for other periods uninformative. The resulting inequality is
	\begin{equation}\label{eq:giv_upper}
		\Gamma(F^0(\theta);\, N-m+1,\, m) \leq G_{k,m:N}(\theta),
	\end{equation}
	where $G_{k,m:N}$ is the distribution of the reduced-bid order statistic $\eta_k^{m:N}$. Since $\Gamma(\cdot;N-m+1,m)$ is increasing,
	\[
		F^0(\theta) \leq \Gamma^{-1}\!\big(G_{k,m:N}(\theta);\, N-m+1,\, m\big), \quad \forall k\leq K,
	\]
	which gives the upper bound in \eqref{eq:F_upper} after taking the minimum over periods and order statistics.
	
	The lower bounds are recovered analogously. For the terminal period, set $\eta_{K,1}''=\theta$ and leave all other restrictions uninformative. Then
	\begin{equation}\label{eq:giv_lower_last}
		G_{K,1:N}(\theta)\leq \Gamma(F^0(\theta);\, N-1,\,2).
	\end{equation}
	For a non-terminal period $k<K$, set $\eta_{k,1}''=\theta\{\phi(z_k)-\phi(z_{k+1})\}/\phi(z_k)$ and again leave other restrictions uninformative. This gives
	\begin{equation}\label{eq:giv_lower_seq}
		H_{k,1:N}(\theta)\leq \Gamma(F^0(\theta);\, N-1,\,2),
	\end{equation}
	where $H_{k,1:N}$ is the distribution of $\eta_k^{1:N}\phi(z_k)/\{\phi(z_k)-\phi(z_{k+1})\}$. Inverting \eqref{eq:giv_lower_last} and \eqref{eq:giv_lower_seq} gives the lower bound in \eqref{eq:F_lower}.
	
	These examples show that \eqref{eq:F_upper} and \eqref{eq:F_lower} are necessary implications of the sharp GIV characterization, but they do not exhaust it. Additional inequalities combine restrictions across order statistics and periods. For instance, for $a,b\in[\underline{\theta},\bar{\theta}]$ with $b\leq a$ and $a\phi(z_k)/\{\phi(z_k)-\phi(z_{k+1})\}\in[\underline{\theta},\bar{\theta}]$,
	\begin{equation}\label{eq:additional_ineq}
		P\!\left(\eta_k^{1:N} \leq a,\ \eta_k^{2:N} \geq b\right) \leq \Gamma\!\left(F^0\!\left(\frac{a\phi(z_k)}{\phi(z_k) - \phi(z_{k+1})}\right);\, N\!-\!1, 2\right) - \Gamma\!\left(F^0(b);\, N\!-\!1, 2\right).
	\end{equation}
	This inequality restricts the increase in $F^0$ between $b$ and $a\phi(z_k)/\{\phi(z_k)-\phi(z_{k+1})\}$. Since the sharp characterization involves an uncountable collection of such inequalities, implementation is computationally demanding. The empirical analysis therefore uses the tractable non-sharp bounds, while the GIV characterization clarifies exactly where additional identifying content resides.

	\section{Estimation and Inference}\label{sec:estimation_inference}
	
	This section addresses the statistical challenges of estimating the bounds from finite samples and constructing confidence intervals for the partially identified distribution $F^0$.
	
	\subsection{Estimation when $N$ is fixed}\label{subsec:finite_sample}
	
	When each auction has the same number of bidders $N$, the distribution bounds in \eqref{eq:F_upper} and \eqref{eq:F_lower} are computed by inverting the incomplete beta function $\Gamma^{-1}(\hat{G}_{j:N}(x);\, a,\, b)$, where $\hat{G}_{j:N}$ is the empirical CDF of the observed $j$th-order statistic.  This inversion is well-behaved in the population but exhibits two fundamental difficulties in finite samples. First, the mapping from order-statistic distributions to the parent distribution is not Lipschitz continuous at the support boundaries.  \citet{menzel2013large} show that the derivative of $\Gamma^{-1}$ is unbounded at the endpoints, so that small estimation errors in $\hat{G}$ are amplified without bound near the tails.  The resulting optimal convergence rate for nonparametric estimation of $F^0$ from the $j$th-order statistic is as slow as $T^{-1/N}$, where $T$ is the number of auctions, dramatically slower than the usual $T^{-2/5}$ rate for density estimation.  \citet{cherapanamjeri2022estimation} further establish that full-support recovery of $F^0$ in Kolmogorov distance is impossible from order statistics, with the exponential dependence on $N$ constituting a fundamental lower bound that no estimator can improve upon. Second, the empirical CDF $\hat{G}_{j:N}$ is degenerate in the left tail for high order statistics.  Since $\Pr(\theta_{j:N} \leq v) = \Gamma(F^0(v);\, N\!-\!j\!+\!1,\, j) \approx [F^0(v)]^{N-j+1}$, which vanishes rapidly for small $F^0(v)$, the ECDF of the $j$th-order statistic has zero mass below a cutoff that grows with $N-j+1$.  Inverting $\hat{G}_{j:N}(v) = 0$ gives $\Gamma^{-1}(0) = 0$, collapsing the bound to zero.  This creates a ``dead zone'' in the left tail of $F_U$ that widens with the number of bidders, precisely the region where the bounds should be widest. These limitations are shared by all nonparametric bounds approaches based on order statistic inversion, including the HT framework.
	
	Equation \eqref{eq:F_upper} defines the population upper bound as the minimum of $\Gamma^{-1}(G_{j:N}(x);\, N\!-\!j\!+\!1,\, j)$ across all order statistics $j = 1, \ldots, N$.  Each order statistic provides the strongest identification power at different quantiles: the second-order statistic ($j = 2$) is most informative in the upper tail but degenerate in the lower tail, while the minimum ($j = N$) is informative in the lower tail but quickly becomes degenerate as one approaches the right tail. In the population, the intersection (minimum) across all $j$ yields the sharpest bound.  In finite samples, however, the na\"ive sample analogue (taking the pointwise minimum of $\Gamma^{-1}(\hat{G}_{j:N})$ across $j$) inherits the worst finite-sample pathology of each $j$, since the minimum of noisy estimators is biased downward \citep{chernozhukov2013intersection}.  The original HT estimator addresses finite-sample extremum bias by replacing the pointwise min/max across order-statistic inequalities with a smooth exponential-weighted approximation. Rather than implement this ad hoc smoothing device, we include the bias-corrected intersection estimator of \citet{chernozhukov2013intersection}, which provides a systematic modern alternative for the same intersection-bounds problem. We consider the following six alternative approaches to address this problem:
	
	\begin{enumerate}
		\item \textbf{Baseline ($j=2$ only):} Order-statistic inversion to the transaction-price rank only. This is a conservative finite-sample benchmark that avoids the extremum bias from combining multiple noisy order-statistic bounds, while sacrificing left-tail information that lower bid ranks can provide.
		
		\item \textbf{Na\"ive intersection:}  Pointwise minimum across all $j=1,\ldots,N$.  This is the direct sample analogue of \eqref{eq:F_upper} and serves as a worst-case reference for the finite-sample pathologies described above.
		
		\item \textbf{CLR bias-corrected intersection:}  The half-median bias correction of \citet{chernozhukov2013intersection}.  For each grid point $v$, the na\"ive minimum is shifted upward by the median bootstrap bias: $\hat{F}_U^{\text{CLR}}(v) = 2\hat{F}_U^{\min}(v) - \text{median}(\hat{F}_U^{*\min}(v))$, where the median is taken over bootstrap replications.
		
		\item \textbf{Bernstein polynomial smoothing:}  Following \citet{leblanc2012estimating}, the step-function ECDF $\hat{G}_{j:N}$ is replaced with a Bernstein polynomial estimator of degree $m = \lceil T^{2/5} \rceil$.  The smoothed CDF is strictly positive near the support boundary, eliminating the $\Gamma^{-1}(0) = 0$ problem.
		
		\item \textbf{Adaptive $j$-selection:}  At each grid point $v$, order statistics $j$ are included in the intersection only if they have sufficient local support: (i) $\hat{G}_{j:N}(v) \geq 1/T$ and (ii) at least five observations fall within a neighborhood of $v$.  This excludes degenerate bounds in the tails while preserving validity wherever the ECDF is well-estimated.
		
		\item \textbf{Inverse-variance weighted combination:}  Following \citet{armstrong2014weighted}, the bound at each $v$ is a weighted average $\hat{F}_U(v) = \sum_j w_j(v) \hat{F}_U^j(v) / \sum_j w_j(v)$, with weights $w_j(v) = 1/\widehat{\text{Var}}(\hat{F}_U^j(v))$ derived from the delta method.  This automatically downweights uninformative order statistics.
	\end{enumerate}
	
	\noindent We verify these approaches' finite-sample performance in Monte Carlo simulations in Section \ref{sec:sim}.
	
	\subsection{Estimation when $N$ varies across auctions}
	
		When the number of bidders $N_a$ varies across auctions, as is typical in practice, the beta inversion \eqref{eq:F_upper} cannot be applied directly with a common~$N$.  A natural estimator pools all auctions while respecting each auction's individual~$N_a$. For the $j$th order statistic, the no-overbidding inequality implies the population moment
		\[
		\mathbf{E}\!\left[\mathbf{1}\{\eta^{j:N_a}\leq v\}\right]
		\geq
		\mathbf{E}\!\left[I_{F^0(v)}(N_a-j+1,j)\right],
		\]
		where $I_x(a,b) \equiv \Gamma(x; a, b)$ is the regularized incomplete beta function defined in \eqref{eq:beta}. The estimator replaces expectations by sample analogs and solves
		\begin{equation}\label{eq:moment_condition}
			\frac{1}{T}\sum_{a=1}^T \mathbf{1}\{\eta_a^{j:N_a} \leq v\}
			=
			\frac{1}{T}\sum_{a=1}^T I_{F(v)}(N_a - j + 1,\, j).
		\end{equation}
		The upper bound $F_U(v)$ is the solution to the equation obtained by replacing the inequality with equality.  Since $I_F(a,b)$ is monotone increasing in~$F$ for each~$(a,b)$, the right-hand side is monotone in~$F$ and the equation can be solved by bisection at each grid point~$v$.
		
		The lower bound has an analogous moment condition.  By Theorem~\ref{thm:ub_general}, the baseline sequential statistic $h^{g,a}_{1:N_a} = (b^{1:N_a}_a+\Delta)/(\phi(Z_{k,a}) - \phi(Z_{k+1,a}))$ bounds every non-winner valuation from above; since only the winner is unrestricted, it follows that $h^{g,a}_{1:N_a}\geq\theta^{2:N_a}$. Hence
		\[
		\mathbf{E}\!\left[\mathbf{1}\{h^g_{1:N_a}\leq v\}\right]
		\leq
		\mathbf{E}\!\left[I_{F^0(v)}(N_a-1,2)\right].
		\]
		The sample analog solves
		\begin{equation}\label{eq:moment_condition_LB}
			\frac{1}{T}\sum_{a=1}^T \mathbf{1}\{h^{g,a}_{1:N_a} \leq v\}
			=
			\frac{1}{T}\sum_{a=1}^T I_{F(v)}(N_a - 1,\, 2).
		\end{equation}
		The lower bound $F_L(v)$ is the solution to the equation obtained by replacing the inequality with equality; since the right-hand side is strictly monotone increasing in~$F$, the inequality direction is preserved and $F_L(v) \leq F^0(v)$. If Lemma~\ref{lem:expected_payoff_bound}(b) applies, the same moment condition can be evaluated with the sharper net statistic from \eqref{eq:ub_general}.
		
		For the upper bound, the no-overbidding inequality $\eta^{j:N_a} \leq \theta^{j:N_a}$ holds for every $j \in \{1, \ldots, N_a\}$, so any rank could in principle be used in \eqref{eq:moment_condition} with the corresponding beta shape $(N_a\!-\!j\!+\!1, j)$.  We restrict to $j = 2$ in our implementation for three reasons.  First, $j = 1$ is essentially uninformative: the regularized incomplete beta $I_F(N_a, 1) = F^{N_a}$ is close to zero except for $F$ very near one, so inverting the winning-bid ECDF yields $F_U(v) \approx 1$ across the bulk of the support.  Second, ranks $j \geq 3$ require beta shapes $(N_a - j + 1, j)$ that differ from the shape pinned down by the lower bound, $(N_a\!-\!1, 2)$.  Combining bounds across mismatched shapes can produce $F_U < F_L$ in finite samples and complicates the construction of a coherent confidence interval; using $j = 2$ matches the lower bound's beta shape and helps preserve coherence. Finite-sample crossings can still occur when endpoints are computed from different samples or multiple moments are combined.  Third, with heterogeneous $N_a$, the variation in $N_a$ already plays the role that variation in $j$ would play under fixed $N$: low-$N_a$ auctions have rapidly-rising beta weights $I_F(N_a\!-\!1, 2)$ in the left tail, supplying the left-tail identification that the adaptive multi-$j$ method of Section~\ref{subsec:finite_sample} is designed to recover.  The adaptive method is therefore unnecessary in our empirical applications, and we use a single $j = 2$ throughout.
	
	This \emph{moment-condition inversion} follows the general approach to estimation under partial identification with moment inequalities \citep{chernozhukov2007estimation}, adapted to the specific structure of order statistic bounds with heterogeneous~$N_a$.  In the original application of \citet{haile2003inference}, bounds are estimated by conditioning on~$N$ via kernel smoothing, restricting the estimation window to auctions with similar numbers of bidders; as HT note, allowing too much heterogeneity in~$N$ within the bandwidth causes the bounds to cross.  Our approach takes a different strategy: rather than conditioning on~$N$, it pools all auctions and lets the beta function weights $I_F(N_a\!-\!1, 2)$ absorb the heterogeneity in~$N_a$ directly.  This avoids both the information loss from narrow-bandwidth conditioning and the crossing problem from wide-bandwidth smoothing.
	
	The approach has two advantages over the fixed-$N$ approach.  First, it uses all $T$ auctions simultaneously with their individual~$N_a$, avoiding the information loss from partitioning into per-$N$ subsamples.  Second, heterogeneity in~$N_a$ provides identification across the full support of~$F^0$ through a natural weighting mechanism.  Each auction~$a$ contributes a beta CDF $I_{F(v)}(N_a\!-\!1, 2)$ to the right-hand side of \eqref{eq:moment_condition}.  The shape of this beta CDF depends on~$N_a$: for a small-$N$ auction (e.g., $N_a = 3$), $I_F(2, 2) = 3F^2 - 2F^3$ rises at much smaller values of~$F$ than high-$N$ beta CDFs, contributing information in the left tail of the distribution.  For a large-$N$ auction (e.g., $N_a = 30$), $I_F(29, 2) = 30F^{29} - 29F^{30}$ is essentially zero until~$F$ is close to~1, contributing information only in the right tail.  The bisection algorithm finds the single $F(v)$ that makes the average of these heterogeneous beta CDFs equal the observed empirical CDF.  In effect, low-$N$ auctions receive higher ``weight'' in the left tail and high-$N$ auctions receive higher weight in the right tail, with the weighting determined endogenously by the beta function shapes.  The variation in~$N_a$ thus plays the same role as variation in the order statistic rank~$j$ in the adaptive method, providing a natural alternative to combining multiple order statistics.
	
	The following proposition establishes validity, consistency, and asymptotic normality of the moment-condition inversion estimator.
	
	\begin{prop}[Validity, consistency, and asymptotic normality]\label{prop:moment_cond}
		Suppose auctions are i.i.d., $F^0$ is continuous, and $N_a$ is bounded.  Let $\bar{I}(F) \equiv \mathbf{E}[I_F(N_a\!-\!j\!+\!1, j)]$.
		\begin{enumerate}
			\item[(i)] \emph{Upper bound.}  Let $\bar{G}_U(v) \equiv \mathbf{E}[\mathbf{1}\{\eta^{j:N_a} \leq v\}]$.  The no-overbidding inequality implies
			\begin{equation}\label{eq:moment_population}
				\bar{G}_U(v) \geq \bar{I}(F^0(v)),
			\end{equation}
			and since $\bar{I}(F)$ is strictly increasing in~$F$, the unique solution $F_U^*(v)$ to $\bar{I}(F) = \bar{G}_U(v)$ satisfies $F_U^*(v) \geq F^0(v)$.  The estimator is consistent, for each $v$, $|\hat{F}_U(v) - F_U^*(v)| = O_p(T^{-1/2})$, with influence-function representation
			\[
			\hat{F}_U(v) - F_U^*(v)
			= \frac{T^{-1}\sum_{a=1}^T \left[\mathbf{1}\{\eta_a^{j:N_a}\leq v\}-I_{F_U^*(v)}(N_a\!-\!j\!+\!1,j)\right]}{\bar{I}'(F_U^*(v))} + o_p(T^{-1/2}),
			\]
				where $\bar{I}'(F) = \mathbf{E}[f_{\mathrm{Beta}}(F;\, N_a\!-\!j\!+\!1, j)]$, provided $\bar{I}'(F_U^*(v)) > 0$.  Combining with the central limit theorem yields
			\begin{equation}\label{eq:asymptotic_dist}
					\sqrt{T}\!\left(\hat{F}_U(v) - F_U^*(v)\right) \xrightarrow{d} \mathcal{N}\!\left(0,\; \frac{\operatorname{Var}\!\left[\mathbf{1}\{\eta^{j:N_a} \leq v\} - I_{F_U^*(v)}(N_a\!-\!j\!+\!1, j)\right]}{\left(\mathbf{E}\!\left[f_{\mathrm{Beta}}(F_U^*(v);\, N_a\!-\!j\!+\!1, j)\right]\right)^2}\right).
			\end{equation}
			When $N_a$ is treated as fixed, the numerator simplifies to $\bar{G}_U(v)(1 - \bar{G}_U(v))$ and the asymptotic variance reduces to $\hat\sigma(v)^2$ in \eqref{eq:analytical_se}.
			
			\item[(ii)] \emph{Lower bound.}  Let $\bar{G}_L(v) \equiv \mathbf{E}[\mathbf{1}\{h^g_{1:N_a} \leq v\}]$, with $\bar I$ evaluated at the lower-bound shape $(N_a\!-\!1, 2)$.  The sequential bound (Theorem~\ref{thm:ub_general}) implies
			\begin{equation}\label{eq:moment_population_LB}
				\bar{G}_L(v) \leq \bar{I}(F^0(v)),
			\end{equation}
			and the unique solution $F_L^*(v)$ to $\bar{I}(F) = \bar{G}_L(v)$ satisfies $F_L^*(v) \leq F^0(v)$.  The estimator is consistent, for each $v$, $|\hat{F}_L(v) - F_L^*(v)| = O_p(T^{-1/2})$, with the analogous influence-function representation, and
			\[
				\sqrt{T}\!\left(\hat{F}_L(v) - F_L^*(v)\right) \xrightarrow{d} \mathcal{N}\!\left(0,\; \frac{\operatorname{Var}\!\left[\mathbf{1}\{h^g_{1:N_a} \leq v\} - I_{F_L^*(v)}(N_a\!-\!1, 2)\right]}{\left(\mathbf{E}\!\left[f_{\mathrm{Beta}}(F_L^*(v);\, N_a\!-\!1, 2)\right]\right)^2}\right),
			\]
				with the same fixed-$N_a$ simplification.  The joint limit of $(\hat F_U,\hat F_L)$ is asymptotically normal with covariance determined by the covariance of the corresponding influence functions.  The covariance is zero when the upper- and lower-bound moments are computed from independent samples; when they are computed from overlapping auctions or observations clustered within the same market, it can be estimated by a paired or clustered bootstrap.
		\end{enumerate}
	\end{prop}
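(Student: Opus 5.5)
The argument treats each bound as a one-dimensional Z-estimator defined by a monotone moment equation. It has three steps: population validity, consistency via monotonicity, and a delta-method expansion.

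\emph{Validity.} For (i), fix $v$ and condition on $N_a=n$. By Lemma~\ref{lem:lb}, $\eta^{j:n}\le\theta^{j:n}$ pointwise. Hence $\{\theta^{j:n}\le v\}\subseteq\{\eta^{j:n}\le v\}$, so $P(\eta^{j:n}\le v\mid N_a=n)\ge F_{j:n}(v)=I_{F^0(v)}(n-j+1,j)$ by \eqref{eq:beta}. Integrating over the law of $N_a$ gives \eqref{eq:moment_population}. This step uses that values are i.i.d.\ from $F^0$ given $N_a$, the maintained interpretation of Remark~\ref{rem:selection}.

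For (ii), Theorem~\ref{thm:ub_general} bounds every non-winner's value by $h^g_{1:N_a}$, so $\theta^{2:N_a}\le h^g_{1:N_a}$. The reverse inclusion then gives $P(h^g_{1:n}\le v\mid N_a=n)\le I_{F^0(v)}(n-1,2)$, which integrates to \eqref{eq:moment_population_LB}.

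Since $N_a$ is bounded and $j\le N_a$, each $F\mapsto I_F(n-j+1,j)$ is a continuous, strictly increasing bijection of $[0,1]$. So is the finite mixture $\bar I$. Hence $F_U^*(v)=\bar I^{-1}(\bar G_U(v))$ is unique, and $\bar G_U(v)\ge \bar I(F^0(v))$ yields $F_U^*(v)\ge F^0(v)$ by monotonicity. The lower bound follows the same way.

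\emph{Consistency and rate.} Write $\hat G(v)=T^{-1}\sum_a\mathbf 1\{\eta_a^{j:N_a}\le v\}$ and $\hat I(F)=T^{-1}\sum_a I_F(N_a-j+1,j)$.
\begin{itemize}
\item Because $N_a$ takes finitely many values, $\hat I(F)=\sum_n\hat\pi_n I_F(n-j+1,j)$, where $\hat\pi_n$ is the empirical frequency of $N_a=n$.
\item The class is a finite mixture of fixed smooth functions, so $\sup_F|\hat I(F)-\bar I(F)|=O_p(T^{-1/2})$ and $\sup_F|\hat I'(F)-\bar I'(F)|=O_p(T^{-1/2})$ follow directly.
\item $\hat G(v)-\bar G_U(v)=O_p(T^{-1/2})$ by the CLT.
\item $\hat F_U(v)$ solves $\hat I(\hat F)=\hat G(v)$, and it is well defined since $\hat I$ is strictly increasing once some $\hat\pi_n>0$. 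The standard argument-of-a-monotone-map result then gives $\hat F_U(v)\to_p F_U^*(v)$, using continuity of $\bar I^{-1}$.
\end{itemize}

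\emph{Asymptotic normality.} Take a mean-value expansion
\[
0=\hat I(\hat F_U)-\hat G=\big[\hat I(F_U^*)-\hat G\big]+\hat I'(\tilde F)(\hat F_U-F_U^*),
\]
with $\tilde F$ between $\hat F_U$ and $F_U^*$. The bracket equals $-T^{-1}\sum_a[\mathbf 1\{\eta_a^{j:N_a}\le v\}-I_{F_U^*}(N_a-j+1,j)]$, which has mean zero because $\bar I(F_U^*)=\bar G_U$. By uniform convergence of $\hat I'$ and consistency, $\hat I'(\tilde F)\to_p\bar I'(F_U^*)=\mathbf E[f_{\rm Beta}(F_U^*;N_a-j+1,j)]$.

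The main obstacle is the requirement $\bar I'(F_U^*(v))>0$. It fails only when $F_U^*(v)\in\{0,1\}$, because the beta densities vanish at the endpoints when both shape parameters exceed one. This is exactly the boundary degeneracy discussed in Section~\ref{subsec:finite_sample}, and it is why the statement imposes that condition. Under it, dividing gives the influence-function representation. Each summand is bounded and i.i.d., so Lindeberg--L\'evy gives \eqref{eq:asymptotic_dist}. When $N_a$ is held fixed, the $I$-term is nonrandom, and the variance reduces to $\bar G_U(1-\bar G_U)$.

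Part (ii) is identical, with $h^g_{1:N_a}$ and shape $(N_a-1,2)$. The joint limit follows by stacking the two influence functions and applying the multivariate CLT. Their cross-covariance vanishes when the samples are independent and is otherwise estimated by clustered resampling.
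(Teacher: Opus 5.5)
Your proposal is correct and follows essentially the same route as the paper's proof. It establishes validity by conditioning on $N_a$ and applying Lemma~\ref{lem:lb} (resp.\ Theorem~\ref{thm:ub_general} via $\theta^{2:N_a}\le h^g_{1:N_a}$), uniqueness from strict monotonicity of the beta mixture $\bar I$, and then an implicit-function/mean-value expansion followed by the CLT. It also fills in steps the paper leaves terse: the empirical-frequency representation of $\hat I$, uniform convergence of $\hat I$ and $\hat I'$, and the monotone-inverse consistency argument.
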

	
	\subsection{Inference}\label{subsec:inference}
	
		The valuation distribution $F^0$ is partially identified, lying between the population bounds $[F_L(v),F_U(v)]$ for each $v$. Standard pointwise confidence intervals around either estimated bound are insufficient because they account only for sampling error in that endpoint, not for the identification uncertainty represented by the width of the bounds. We construct confidence intervals using two complementary approaches. The first approach is the nonparametric bootstrap.  We resample auctions with replacement $B$ times, recomputing the bounds for each bootstrap sample.\footnote{When upper- and lower-bound statistics are computed from distinct pooled samples, each sample can be resampled using its own observation unit.  When the same auction or market can contribute to both endpoints, a paired or clustered bootstrap at the auction/market level preserves the covariance between the two endpoints.}  The pointwise $(1-\alpha)$ bootstrap interval uses Bonferroni endpoint quantiles:
		\[
		\hat{F}_U^{\text{conf}}(v) = Q_{1-\alpha/2}\!\left(\hat{F}_U^{*b}(v);\, b = 1,\ldots,B\right),
		\]
		and $\hat{F}_L^{\text{conf}}(v) = Q_{\alpha/2}(\hat{F}_L^{*b}(v))$ for the lower bound.  The confidence bounds $[\hat{F}_L^{\text{conf}}, \hat{F}_U^{\text{conf}}]$ are wider than the estimated bounds, and $\Pr(\hat{F}_L^{\text{conf}}(v) \leq F^0(v) \leq \hat{F}_U^{\text{conf}}(v)) \geq 1 - \alpha$ at each~$v$.
	
	\citet{imbens2004confidence} propose confidence intervals for a parameter $\theta \in [\theta_L, \theta_U]$ that cover the true $\theta$ (not the identified set) with asymptotic probability at least $1-\alpha$.  The CI is
	\[
	\text{CI}_\alpha = \left[\hat{F}_L(v) - c_n \cdot \hat{\sigma}_L(v),\;\; \hat{F}_U(v) + c_n \cdot \hat{\sigma}_U(v)\right],
	\]
	where $\hat{\sigma}_L$ and $\hat{\sigma}_U$ are standard errors of the bound estimators, computed analytically via \eqref{eq:analytical_se}, and $c_n$ solves $\Phi(c_n + \hat{\Delta}_n / \max(\hat{\sigma}_L, \hat{\sigma}_U)) - \Phi(-c_n) = 1 - \alpha$, with $\hat{\Delta}_n = \max(\hat{F}_U - \hat{F}_L, 0)$.  When the bounds are wide, $c_n \to z_{1-\alpha}$ (one-sided on each end); when they collapse to a point, $c_n \to z_{1-\alpha/2}$ (two-sided).
	
		For the moment-condition estimator \eqref{eq:moment_condition}, the standard errors $\hat\sigma_U$ and $\hat\sigma_L$ can be computed analytically via the delta method applied to the implicit equation $m(\hat F, v) = 0$. Let $Y_a$ denote the statistic entering the relevant endpoint, and let $(s_{1a},s_2)$ denote the corresponding beta shape: $(N_a-j+1,j)$ for the upper bound and $(N_a-1,2)$ for the lower bound.  With $m(F,v)\equiv T^{-1}\sum_a[I_F(s_{1a},s_2)-\mathbf{1}\{Y_a\leq v\}]$, the implicit function theorem gives
		\begin{equation}\label{eq:analytical_se}
			\hat\sigma(v) =
			\frac{
				\sqrt{T^{-1}\widehat{\operatorname{Var}}_a\!\left(\mathbf{1}\{Y_a\leq v\}-I_{\hat F(v)}(s_{1a},s_2)\right)}
			}{
				T^{-1}\sum_{a=1}^T f_{\text{Beta}}(\hat F(v);\, s_{1a},\, s_2)
			},
		\end{equation}
		where $f_{\text{Beta}}(\cdot; a, b)$ is the beta density.  The numerator is the standard error of the moment evaluated at the estimated bound; the denominator converts moment uncertainty into $F$-space via the average beta density. When $N_a$ is fixed, \eqref{eq:analytical_se} reduces to the familiar expression $\sqrt{\hat G(v)(1-\hat G(v))/T}/f_{\text{Beta}}(\hat F(v);s_1,s_2)$. We use the analytical formula for the reported Imbens--Manski bands because it avoids bootstrap noise and produces smooth confidence bands, while treating the bootstrap as the more conservative finite-sample benchmark. Both approaches provide pointwise coverage at each $v$.  For uniform coverage across the entire support, one could calibrate via the sup-$t$ bootstrap \citep{chernozhukov2007estimation}, at the cost of wider bands.
	
	\section{Simulation}\label{sec:sim}
	
	We evaluate the finite-sample performance of the six estimation approaches with fixed $N$, the moment-condition inversion method with varying $N_a$, and the inference procedures from the previous section. We also compare the proposed lower bounds to three alternatives: (i) the static bounds of \citet{haile2003inference} applied naively to each period (HT); (ii) the last-period-only bounds, which discard all non-terminal auctions; and (iii) the bidder-level panel bounds from Section~\ref{subsec:panel_bounds}.
	
	\subsection{Design and Bound Comparison with fixed $N$}\label{subsec:lower_bound_comparison}
	
	We generate bid data from a structural two-period sequential English auction model. In each simulation round, private values $\theta_i$ for $N = 10$ bidders are drawn i.i.d.\ from $\text{LogNormal}(\mu = 1, \sigma = 0.5)$. Common values are drawn independently across auctions. Bids are generated from the symmetric Bayesian Nash equilibrium of the sequential second-price auction \citep{kittsteiner2004declining}. In the second (last) period, each remaining bidder's maximum willingness to pay is $\text{MWP}_{i,2} = \phi(Z_2)\theta_i$. In the first period, forward-looking bidders account for the option value of winning in the future:
	\begin{align*}
		\text{MWP}_{i,1} = \phi(Z_1)\theta_i - \tau\phi(Z_2)\int_{\max(0,\, p_2/\phi(Z_2))}^{\theta_i} F^0(x)^{N-2}\, dx,
	\end{align*}
	where the integral represents the expected continuation value, the expected gain from participating in the second auction conditional on being the strongest remaining bidder. The lower integration limit accounts for the reserve price: bidders with $\theta_i < p_2/\phi(Z_2)$ cannot profitably enter the second auction and have zero continuation value.
	
	We simulate auctions starting at an opening price $p_k$. Bidders drop out when the price reaches their MWP. To introduce departures from equilibrium behavior, bidders ranked third or lower (by MWP) drop out at a random fraction $\omega_i \sim \text{Uniform}[0.3, 1]$ of their MWP, creating substantial noise in observed lower-ranked bids. The second-highest bidder drops at their full MWP, and the winner bids one increment $\Delta = 0.05$ above the second-highest dropout price. We consider three configurations.  In all scenarios, the opening price in period~2 is stochastic: $p_2 = \phi(Z_2) \cdot U(0.5, 1)$, so that the reserve price scales with the item's common value.
	\begin{enumerate}[(a)]
		\item \emph{Large drop}: $\phi(Z_1) = 1$, $\phi(Z_2) \sim \text{Uniform}(0.01, 0.2)$, $\tau = 1$. The opportunity cost of winning period~1 is minimal.
		\item \emph{Small drop}: $\phi(Z_1) = 1$, $\phi(Z_2) \sim \text{Uniform}(0.7, 0.9)$, $\tau = 1$. The opportunity cost is substantial.
		\item \emph{Supply uncertainty}: $\phi(Z_1) = 1$, $\phi(Z_2) \sim \text{Uniform}(0.7, 0.9)$, $\tau = 0.5$. Uncertainty about whether a second auction occurs reduces the continuation value.
	\end{enumerate}
	
	For each scenario, we run 500 Monte Carlo replications; in each replication we generate 500 sequential auction pairs, compute all bounds using the full set of observed bids, and average the bounds and coverage statistics across replications. We first examine the performance of the lower bounds, where our approach differs most from the alternatives. Figure~\ref{fig:sim} displays the mean bounds for the three scenarios. In all three scenarios, the HT lower bound (in the middle row of Figure~\ref{fig:sim}) lies above the no-overbidding upper bound over a substantial portion of the support. This occurs because HT treats first-period bids as if they were stand-alone valuations, producing inflated lower bound statistics. By contrast, the proposed upper and lower bounds (in the top row of Figure~\ref{fig:sim}) bracket the true $F^0$ across the bulk of the support in the large-drop scenario. In the small-drop and supply-uncertainty scenarios, the bounds are extremely narrow, so that even minor finite-sample noise can cause the bounds to cross at isolated grid points. The stochastic opening price $p_2 = \phi(Z_2) \cdot U(0.5, 1)$ plays a key role in identification. In the sequential bound $\theta_i \leq (b^{1:n_1} - p_2)/(\phi_1 - \phi_2)$, a positive~$p_2$ reduces the numerator, tightening the upper bound on valuations and hence the lower bound on~$F$. In the simulations using this sharper net statistic, we retain observations satisfying $b^{1:n_1}/\phi_1 > p_2/\phi_2$. Simultaneously, the positive reserve screens out low-value bidders from period~2, weakening the last-period bound. The sequential structure therefore provides the greatest identification power when opening prices are non-trivial.
	
	\begin{figure}[p]
		\centering
		\includegraphics[width=\textwidth]{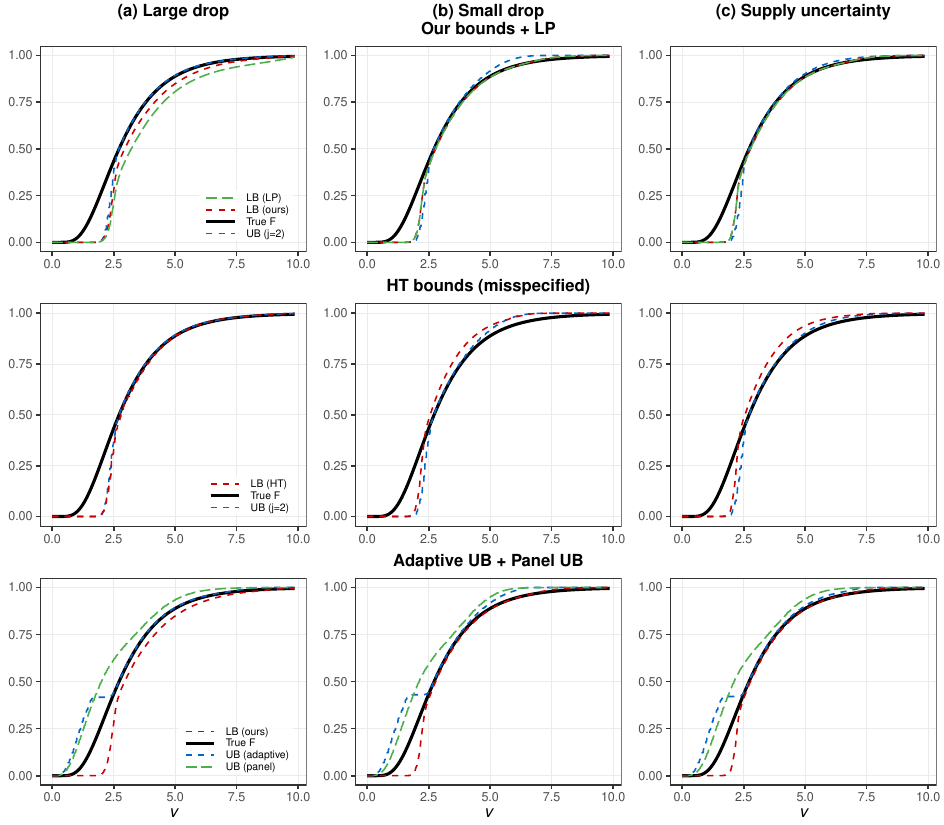}
		\caption{Mean bounds across 500 Monte Carlo iterations of 500 sequential auctions ($N = 10$, $\Delta = 0.05$, $p_2 = \phi(Z_2) \cdot U(0.5, 1)$).  Columns correspond to the three scenarios; rows show different bound combinations.  \emph{Top row}: Our sequential bounds, upper bound (UB $j\!=\!2$, blue dashed) and lower bound (LB, red dashed), together with the last-period-only lower bound (LP, green long-dashed).  \emph{Middle row}: Haile--Tamer bounds applied naively to each period; the HT lower bound (red dashed) crosses the upper bound, demonstrating misspecification in the sequential setting.  \emph{Bottom row}: Adaptive upper bound (blue dashed) and panel upper bound (green long-dashed) with our lower bound.  Black solid: true LogNormal$(\mu\!=\!1, \sigma\!=\!0.5)$ distribution.}
		\label{fig:sim}
	\end{figure}
	
	We now compare the six approaches from Section~\ref{sec:estimation_inference} for combining multiple order statistics in the upper bound $F_U$.  All methods share the same lower bound; only the upper bound computation differs.  We use the large-drop scenario ($\Delta = 0.05$, dropout $\omega_i \sim U(0.3, 1)$, $p_2 = \phi(Z_2) \cdot U(0.5, 1)$, 2{,}000 auctions across 200 replications) and vary the number of bidders: $N \in \{5, 10, 20\}$.  The theory in Section~\ref{subsec:finite_sample} predicts that the left-tail problem should worsen rapidly with~$N$. Table~\ref{tab:ub_comparison} reports the results.  We measure coverage (fraction of $v$ where $F_L(v) \leq F^0(v) \leq F_U(v)$), left-tail coverage (coverage restricted to $v$ where $F^0(v) < 0.1$), and invalidity (fraction of $v$ where $F_U(v) < F^0(v)$).
	
	\begin{table}[h!]
		\centering
		\caption{Comparison of upper bound estimation methods across $N$ (large-drop scenario, 200 MC iterations).}\label{tab:ub_comparison}
		\begin{tabular}{l r r r r r r r r r}
			\toprule
			& \multicolumn{3}{c}{Coverage (\%)} & \multicolumn{3}{c}{Left-tail cov.\ (\%)} & \multicolumn{3}{c}{Invalidity (\%)} \\
			\cmidrule(lr){2-4} \cmidrule(lr){5-7} \cmidrule(lr){8-10}
			Method & $N\!=\!5$ & $N\!=\!10$ & $N\!=\!20$ & $N\!=\!5$ & $N\!=\!10$ & $N\!=\!20$ & $N\!=\!5$ & $N\!=\!10$ & $N\!=\!20$ \\
			\midrule
			Baseline & 68.2 & 65.6 & 58.2 & 5.0 & 1.4 & 1.4 & 22.2 & 30.7 & 39.1 \\
			Na\"ive min          & 67.5 & 65.0 & 57.7 & 2.9 & 1.4 & 1.4 & 22.8 & 31.2 & 39.7 \\
			CLR                  & 67.6 & 65.3 & 58.0 & 2.9 & 1.4 & 1.4 & 22.7 & 30.9 & 39.3 \\
			Bernstein            & 33.9 & 30.3 & 25.7 & 2.9 & 1.4 & 1.4 & 56.2 & 66.1 & 71.6 \\
			Adaptive             & 86.1 & 90.5 & 90.7 & 97.6 & 100.0 & 100.0 & 5.7 & 8.0 & 9.0 \\
			IVW                  & 83.2 & 71.5 & 64.6 & 97.6 & 100.0 & 100.0 & 15.2 & 27.8 & 35.2 \\
			\bottomrule
		\end{tabular}
		\vspace{0.6ex}
		\begin{minipage}{\textwidth}
			\footnotesize \emph{Note:} Coverage is the fraction of the evaluation grid where $F_L(v) \leq F^0(v) \leq F_U(v)$. Left-tail coverage is restricted to $v$ where $F^0(v) < 0.1$.  Invalidity is the fraction where $F_U(v) < F^0(v)$.  All statistics averaged across 200 MC iterations of 2{,}000 auctions each.
		\end{minipage}
	\end{table}
	
	The left-tail problem worsens dramatically with~$N$: for the baseline method, left-tail coverage falls from the low single digits at $N = 5$ to near zero at $N = 20$, and the ``dead zone'' (the smallest $v$ at which $F_U(v) > 0$) approximately doubles with each doubling of $N$, consistent with the $T^{-1/N}$ rate in \citet{menzel2013large}. By contrast, the adaptive $j$-selection method dominates the other methods.  By excluding order statistics with degenerate empirical CDFs and using lower-ranked order statistics (which have support in the left tail), it achieves 100\% left-tail coverage at $N \geq 10$ with substantially lower invalidity than the baseline.  The adaptive method also has the best overall coverage among valid methods at every $N$. The CLR bias correction and Bernstein smoothing provide minimal improvement, or even degrade performance, relative to the baseline.  The CLR correction addresses extremum bias (the selection effect from taking the minimum), but this bias is negligible relative to the structural problem: the $j = 2$ empirical CDF is literally zero in the left tail, and no bias correction can recover information that is not in the data.  The Bernstein smoother pushes the empirical CDF positive near the boundary but overshoots, producing overly aggressive bounds with high invalidity. IVW shows left-tail coverage similar to the adaptive approach, but worse overall coverage and higher invalidity. The shape of the adaptive upper bound is displayed in the bottom panel of Figure~\ref{fig:sim}.
	
	Having established that the adaptive method dominates, we now examine whether proper inference preserves this advantage.  Table~\ref{tab:inference} reports the frequentist coverage of the 95\% confidence intervals described in Section~\ref{subsec:inference}, based on 200 Monte Carlo replications with 200 bootstrap draws each, comparing the baseline ($j = 2$ only) and adaptive approaches. The confidence bands substantially improve coverage: moving from estimated bounds to 95\% CIs raises coverage for the adaptive method by roughly 20 percentage points across all~$N$.  At $N = 20$, the adaptive method with both bootstrap and Imbens--Manski CIs achieves coverage at or above the nominal 95\% level with near-zero invalidity, demonstrating that the combination of adaptive $j$-selection and proper inference yields reliable finite-sample bounds even in high-competition settings where the baseline approach fails entirely in the tails.
	
	\begin{table}[h!]
		\centering
		\caption{Coverage of 95\% confidence intervals for $F^0$ (large-drop scenario, 200 MC $\times$ 200 bootstrap).}\label{tab:inference}
		\begin{tabular}{l l r r r r r r}
			\toprule
			& & \multicolumn{2}{c}{CI coverage (\%)} & \multicolumn{2}{c}{Left-tail CI cov.\ (\%)} & \multicolumn{2}{c}{CI invalidity (\%)} \\
			\cmidrule(lr){3-4} \cmidrule(lr){5-6} \cmidrule(lr){7-8}
			$N$ & Method & Boot & IM & Boot & IM & Boot & IM \\
			\midrule
			5 & Baseline ($j\!=\!2$) & 80.6 & 80.8 & 7.4 & 7.3 & 11.8 & 11.7 \\
			5 & Adaptive & 94.3 & 94.2 & 100.0 & 99.9 & 0.2 & 0.4 \\
			\addlinespace
			10 & Baseline ($j\!=\!2$) & 77.6 & 77.7 & 1.4 & 1.4 & 19.7 & 19.6 \\
			10 & Adaptive & 99.3 & 99.3 & 100.0 & 100.0 & 0.3 & 0.4 \\
			\addlinespace
			20 & Baseline ($j\!=\!2$) & 69.6 & 69.7 & 1.4 & 1.4 & 28.0 & 27.8 \\
			20 & Adaptive & 99.7 & 99.6 & 100.0 & 100.0 & 0.3 & 0.4 \\
			\bottomrule
		\end{tabular}
		\vspace{0.6ex}
		\begin{minipage}{\textwidth}
			\footnotesize \emph{Note:} Boot = bootstrap one-sided confidence band at 95\% level; IM = \citet{imbens2004confidence} confidence interval at 95\% level using analytical delta-method standard errors.  Coverage is the average fraction of grid points at which the CI contains the true~$F^0(v)$.  Left-tail coverage is restricted to $v$ where $F^0(v) < 0.1$.  CI invalidity is the fraction where the CI upper endpoint lies below the true~$F^0(v)$.  All statistics computed over 200 MC iterations with 200 bootstrap replications each.
		\end{minipage}
	\end{table}
		
	\subsection{Heterogeneous Number of Bidders}\label{subsec:hetN_sim}
	
	In practice (as in both of our empirical applications), the number of bidders tends to vary across auctions.  We validate the proposed moment-condition inversion approach by simulating 500 sequential auctions across 200 replications with $N_a \sim \text{DiscreteUniform}(2, 20)$, using the large-drop scenario.  Table~\ref{tab:hetN_sim} compares three methods for the bound estimates and reports 95\% CI coverage. The moment-condition inversion produces the tightest bounds (width~0.086, invalidity~0.8\%), outperforming the fixed median-$N$ proxy (width~0.097, invalidity~1.7\%).  The fixed-$N =$ min approach is essentially useless: using $N = 2$ for all auctions produces a bound with 87\% invalidity, because the beta shape $(1, 2)$ vastly overestimates the identifying power of high-$N_a$ auctions. With $N_a \sim \text{DiscreteUniform}(2, 20)$, the 95\% confidence intervals achieve 99.7\% overall coverage and 100\% left-tail coverage.  The strong performance reflects the presence of low-$N_a$ auctions in the sample: auctions with $N_a = 2$ or $3$ contribute informative beta weights at small~$F$, populating the left tail where the $j = 2$ ECDF would otherwise be degenerate.  This confirms that heterogeneity in~$N_a$ provides a natural solution to the left-tail dead zone problem when low-$N$ auctions are present in the data.
	
	\begin{table}[h!]
		\centering
		\caption{Bounds estimation with heterogeneous $N_a$}\label{tab:hetN_sim}
			\begin{tabular}{l r r}
			\toprule
			& Width & Invalidity (\%) \\
			\midrule
			\multicolumn{3}{l}{\emph{Point-estimate bounds}} \\
			\quad Moment-cond.\ inversion (het.\ $N$) & 0.086 & 0.8 \\
			\quad Fixed $N =$ median (11)              & 0.097 & 1.7 \\
			\quad Fixed $N =$ min (2)                   & 0.088 & 87.2 \\
			\addlinespace
			\multicolumn{3}{l}{\emph{95\% CI coverage (moment-cond.\ method)}} \\
			& CI coverage (\%) & Left-tail CI cov.\ (\%) \\
			\cmidrule(lr){2-3}
			\quad Bootstrap                            & 99.7 & 100.0 \\
			\quad Imbens--Manski                       & 99.7 & 100.0 \\
			\bottomrule
		\end{tabular}
		\vspace{0.6ex}
		\begin{minipage}{\textwidth}
			\footnotesize \emph{Note:} $N_a \sim \text{DiscreteUniform}(2, 20)$ across 500 sequential auctions per MC iteration.  The moment-condition inversion solves $T^{-1}\sum_{a=1}^T I_{F(v)}(N_a\!-\!1, 2) = T^{-1}\sum_{a=1}^T \mathbf{1}\{\eta_{2:N_a}^a \leq v\}$ by bisection, using each auction's individual~$N_a$.  CI coverage is the fraction of the grid where the CI contains the true~$F^0$.  Left-tail coverage is restricted to $v$ where $F^0(v) < 0.1$.
		\end{minipage}
	\end{table}
	
	Figure~\ref{fig:hetN_bounds} displays the mean bounds across 200 Monte Carlo iterations.  The moment-condition inversion and fixed-median bounds are visually close, while the fixed-min approach produces bounds that are too tight and lie below the true distribution over much of the support. In the figure, upper-bound curves are dashed and lower-bound curves are solid.
	
	\begin{figure}[h!]
		\centering
		\includegraphics[width=\textwidth]{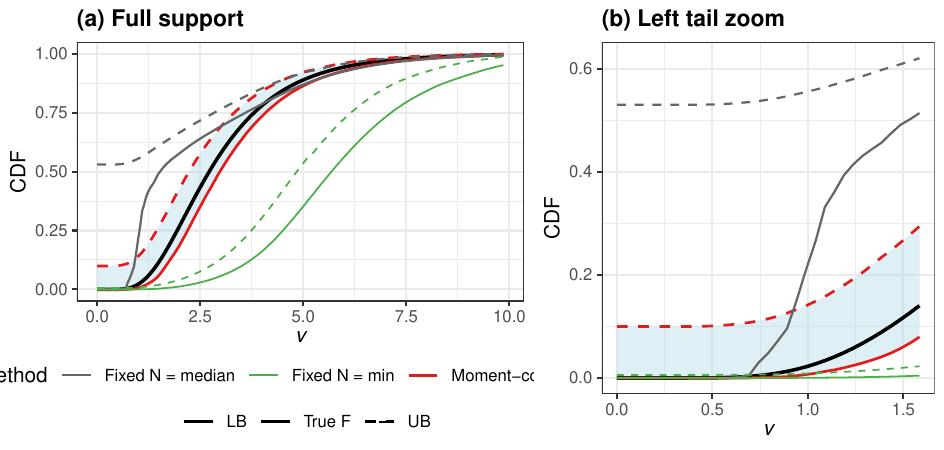}
		\caption{Mean bounds on the valuation distribution with heterogeneous $N_a$ (200 MC iterations, 500 auctions each with $N_a \sim \text{DiscreteUniform}(2,20)$).  Black solid: true LogNormal$(\mu\!=\!1, \sigma\!=\!0.5)$ distribution.  The moment-condition inversion and fixed $N =$ median produce visually close bounds, while fixed $N =$ min is invalid over much of the support.  Upper-bound curves are dashed and lower-bound curves are solid.}\label{fig:hetN_bounds}
	\end{figure}
	
	\begin{remark}[Adaptive $j$-selection vs.\ moment-condition inversion]\label{rem:adaptive_vs_moment}
	The simulation results reveal a complementarity between two approaches for improving left-tail identification. \emph{Adaptive $j$-selection} populates the left tail by combining multiple order statistics ($j = 2, 3, \ldots$) whose distributions $G_{j:N}$ extend progressively further into the left tail.  It works best with fixed~$N$ and moderate to large competition (Table~\ref{tab:ub_comparison}: at $N = 20$, 100\% left-tail coverage with lower invalidity than the baseline), but the differing beta shapes across~$j$ can cause finite-sample crossing when~$N$ is small. \emph{Moment-condition inversion} achieves left-tail identification through a different mechanism: variation in~$N_a$ across auctions, with low-$N_a$ auctions contributing informative beta weights at small~$F$.  Because it uses a single order statistic ($j = 2$), crossing is eliminated by construction, but the method requires substantial heterogeneity in~$N_a$. In practice, the choice is dictated by the data.  When~$N$ is fixed and multiple order statistics are observed, the adaptive method is preferred.  When~$N_a$ varies substantially across auctions, the moment-condition inversion is preferred because it avoids crossing while achieving comparable identification power.
	\end{remark}
	
	\section{Empirical Applications}\label{sec:empirical}
	
	\subsection{Korean Wholesale Used Car Auctions}\label{subsec:korean}
	
	We use data from a wholesale used car auction house located in Suwon, South Korea, which opened in May 2000 as the first fully computerized automobile auction facility in the country \citep{il2014nonparametric, roberts2013unobserved}.  The auction house held weekly sales from October 2001 through December 2002, our sample period. Between 250 and 380 registered dealer members were eligible to bid, with roughly half attending any given weekly auction.  The pool of registered dealers remained stable throughout the sample period.  Sellers, predominantly individual car owners, rental companies, and corporate fleet operators, consigned vehicles and paid a listing fee of approximately \$50.  The auction house charged a 2.2\% commission on the selling price to both buyer and seller upon sale. A few days before each weekly auction, potential buyers received a catalog with detailed information about every car to be sold, including make, model, year, mileage, engine size, transmission type, fuel type, and accident history.  Buyers could also physically inspect the cars 2--3 hours before the auction.
	
	On each auction day, several hundred to over a thousand cars were offered sequentially (median 733 consignments per day) to a stable pool of 150--200 active dealers. Approximately half of consigned cars sold at any given auction and unsold cars were typically reconsigned for a future auction day. The auction followed an open ascending-price format similar to the English button auction \citep{milgrom1982theory}.  Each car had a reserve price set by the seller. The auctioneer announced the opening price and the price rose in fixed increments of 30{,}000 KRW (approximately \$20) every 3 seconds.  Buyers signaled willingness to buy at the current price by pressing a button on an electronic device.  The auction house displayed only the current price and a traffic-light indicator: green ($\geq 3$ active bidders), yellow (2 active bidders), and red (1 active bidder).  Bidder identities were not revealed.  Importantly, reentry was permitted: a buyer could drop out temporarily and rejoin later if the price remained competitive, distinguishing this format from the strict button auction where exit is irrevocable.  When only one buyer remained active, the auction ended and the winner paid the current price. A car that sold was typically paid for and removed within a few days.  
	
	The raw data contain 160 auction days, each comprising a series of sequential auctions.  We observe the top three bids and corresponding bidder identifiers for each auction, along with vehicle characteristics and auction outcomes.  Of 100{,}858 total auction records, 53{,}241 resulted in a sale, of which 48{,}753 have the top-two bids recorded and enter our analysis.  To maintain the IPV assumption, we estimate bounds separately within five major vehicle segments: automatic sedans (gasoline), manual sedans (gasoline), manual non-sedans (gasoline), automatic non-sedans (gasoline), and manual non-sedans (diesel).  These five categories cover 43{,}165 of the 48{,}753 usable sold auctions (88.5\%). Within each category, we construct all consecutive auction pairs. 
	
	Estimation proceeds in two steps.  In non-terminal periods, bids are strategically discounted by the continuation value: $b_{i,k} = \theta_i\phi(Z_k) - \text{CV}_k(\theta_i)$.  This breaks the multiplicative structure needed for hedonic identification, since $\log b_{i,k} \neq \log\theta_i + \log\phi(Z_k)$.  Only in the last period, where $\text{CV}_K = 0$, does the log-additive decomposition hold exactly.  We therefore use last-period bids to estimate $\phi(Z_k)$ via hedonic regression:
	\begin{equation}\label{eq:hedonic}
		\log b_{i,K}^m = \alpha_i + Z_K^{m\prime}\beta + \varepsilon_{iK}^m,
	\end{equation}
	where the superscript $m$ indexes the sequential auction series and $Z_K^m$ collects observed characteristics of the item in the last period of series $m$.  Bidder fixed effects absorb $\log\theta_i$, and the coefficients on car characteristics identify $\phi(\cdot)$.  Standard errors are clustered at the auction-day level. Taking $\hat{\phi}(Z_k)$ as given, we estimate the upper and lower bounds on $F(\cdot)$. Table~\ref{tab:hedonic} reports the hedonic regression estimates. The coefficients have expected signs: older cars sell for less, larger engine size and higher condition ratings increase value, and the point estimate for imported cars is positive.
	
	\begin{table}[h]
		\centering
		\caption{Hedonic Regression for the Common Value Component}\label{tab:hedonic}
		\begin{tabular}{lclc} \toprule
			Variable & Coefficient & Variable & Coefficient \\ \midrule
			Car age (years)          & $-0.328^{***}$   & Automatic transmission  & $0.303^{***}$ \\
			& $(0.012)$        &                         & $(0.035)$     \\[3pt]
			$\log(\text{Mileage})$          & $0.023$          & Imported vehicle        & $0.219$       \\
			& $(0.033)$        &                         & $(0.412)$     \\[3pt]
			$\log(\text{Engine size})$      & $0.997^{***}$    & Hyundai                 & $0.344^{***}$ \\
			& $(0.074)$        &                         & $(0.057)$     \\[3pt]
			Condition rating (1--6)         & $0.063^{***}$    & Kia                     & $0.090$       \\
			& $(0.015)$        &                         & $(0.056)$     \\[3pt]
			Gasoline                        & $0.191^{**}$     & Sedan                   & $-0.112^{*}$  \\
			& $(0.073)$        &                         & $(0.058)$     \\ \midrule
			Bidder FE         & Yes            &                         &              \\
			Observations      & 1{,}782        &                         &              \\
			Adjusted $R^2$    & 0.799          &                         &              \\ \bottomrule
		\end{tabular}
		\vspace{2mm}
		
		\begin{minipage}{\textwidth}
			\footnotesize \emph{Note:} Dependent variable: $\log(\text{bid}_{i,K}^m)$, where $K$ denotes the last period within each market. The omitted brand category comprises other domestic manufacturers (Daewoo, Ssangyong, Samsung, and Asia).  The family-use indicator is not reported because it is collinear with the retained vehicle categories. Standard errors in parentheses are clustered at the auction-day level.  $^{***}p<0.01$, $^{**}p<0.05$, $^{*}p<0.1$.
		\end{minipage}
	\end{table}
	
	The total number of potential bidders is not directly observed. We observe only the top three bids per auction, while the pool of registered dealers may not all be active in every market.  We proxy for $N_a$ in each market by counting the number of unique dealers who won at least one auction within that market.  The estimated~$N_a$ varies substantially: the median ranges from~28 (manual non-sedan diesel and automatic non-sedan gasoline) to~57 (automatic sedan gasoline), reflecting different competition intensities and popularities across vehicle segments.  We use the moment-condition inversion method to estimate the bounds because the data exhibit heterogeneous~$N_a$. Figure~\ref{fig:korean_bounds} displays the estimated upper bound and three lower-bound specifications. The baseline lower bound uses the adjacent-opportunity specification, with $C_k=\hat\phi(Z_{k+1})$ and $m_k=0$, combined with the terminal-period inequality in Lemma~\ref{lem:ub_last} without imposing the global monotonicity condition.\footnote{Under this specification, 15{,}817 sequential pairs that satisfy the filter $\hat\phi(Z_k)>\hat\phi(Z_{k+1})$ are retained to construct the lower bound.} This specification treats the next car in the same auction-day category as the relevant substitute opportunity in the dealer's marginal procurement problem. The interpretation is natural in the Korean used-car auction setting because vehicles are sold in a posted sequence, dealers observe a dense same-day stream of similar alternatives, and the immediately following auction provides a transparent proxy for the continuation opportunity. Setting $m_k=0$ bounds continuation value by the gross value of the next opportunity without using the opening price. Thus all descending adjacent pairs with positive finite statistics contribute. The baseline bounds have mean widths of 0.029--0.107 across categories. 
	
	Figure~\ref{fig:korean_bounds} also reports the sharper net specification (subtracting the next-period opening price), retaining pairs that satisfy the normalized positivity filter $b^{1:n_k}/\hat\phi(Z_k)>p_{k+1}/\hat\phi(Z_{k+1})$, as well as the terminal-only lower bound.  The sharper net bounds have mean widths of 0.029--0.111. They are not uniformly tighter because changing the statistic also changes the retained sample and the moment-condition inversion weights.  Appendix~\ref{app:korean_monotonicity} conducts robustness checks that instead use the conservative full-future index $C_k=\max_{t\geq k+1}\hat\phi(Z_t)$ and tail-to-terminal subsequences that satisfy Condition~\ref{cond:descending}. The resulting loss of identifying power is concentrated in the left tail, while the bounds are largely unchanged over the rest of the support. The 95\% Imbens--Manski confidence intervals have mean widths of 0.048--0.136.
	
	Dealers in this market are wholesale buyers who often purchase multiple cars per auction day.  We interpret the unit-demand assumption at the level of a marginal procurement opportunity rather than at the level of the dealer's entire daily business. A dealer bidding on a given vehicle decides whether to satisfy a current inventory need now or preserve the option to purchase a similar vehicle later, possibly at a lower price.  If the dealer wins, that marginal need is satisfied; subsequent participation can be interpreted as a new procurement spell, or equivalently as a new bidder in the model. Under this renewal interpretation, observed multi-car purchases by the same dealer are consistent with unit demand within each decision problem.  As a robustness check for the more literal fixed multi-unit-demand interpretation, we compute the terminal-only lower bound. This bound requires no sequential opportunity-cost restriction and is therefore valid even if dealers have several independent purchase needs within the same auction day.  The resulting terminal-only bounds have mean widths of 0.029--0.129 and are largely consistent with the sequential bounds, though wider in the lower tail.
	
	\begin{figure}[p]
		\centering
		\includegraphics[width=0.94\textwidth]{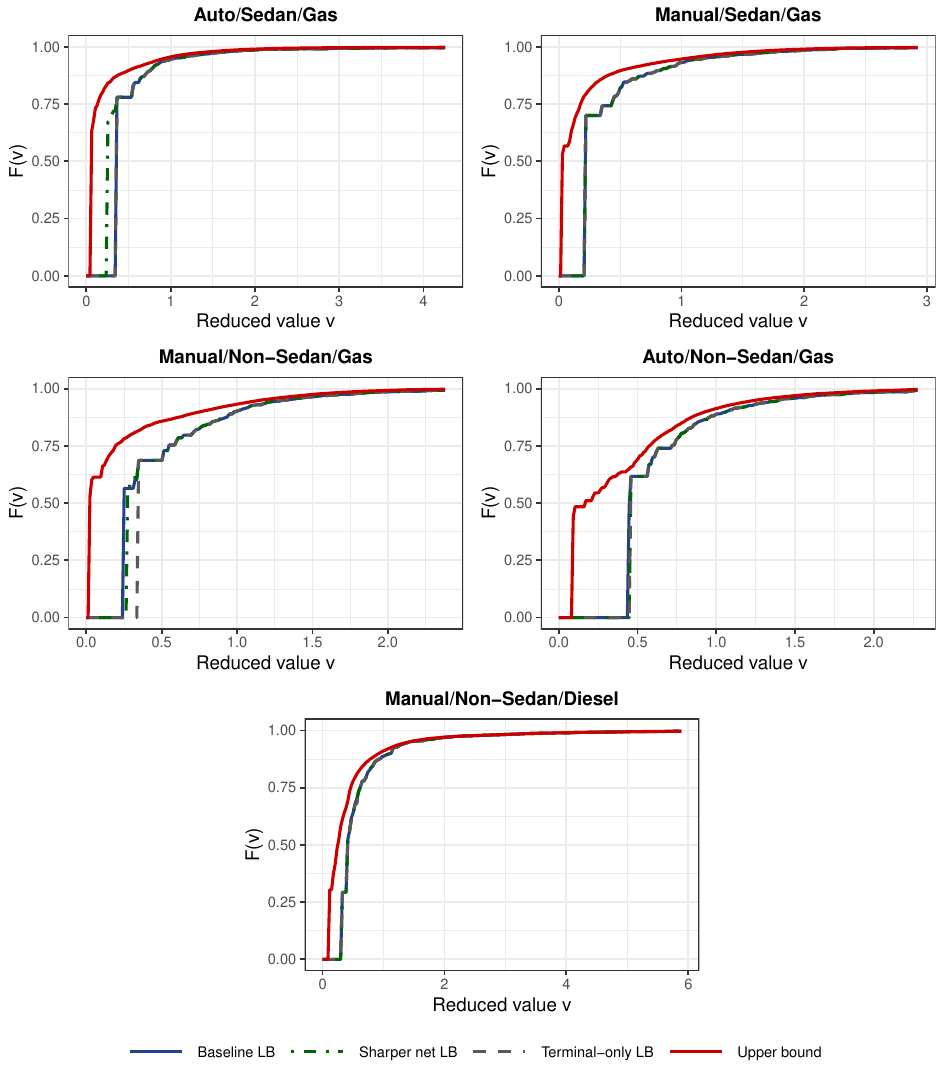}
		\caption{Estimated bounds on the valuation distribution by vehicle category (Korean used car data).  Red solid: no-overbidding upper bound.  Blue solid: baseline lower bound, combined with the terminal-period moment.  Dark green dot-dashed: sharper net lower bound, combined with the terminal-period moment.  Gray dashed: terminal-only lower bound. All bounds use each market's estimated~$N_a$ in the moment-condition inversion.}\label{fig:korean_bounds}
	\end{figure}
	
	\subsection{Cars and Bids Online Auctions}\label{subsec:carsandbids}
	
	We apply our bounds approach to a second dataset: online ascending auctions from the Cars and Bids platform.  This setting differs fundamentally from the Korean data in one key respect: there is no terminal auction.  Individual listings run for seven days, new auctions open continuously, and a bidder tracking a particular vehicle segment faces an ongoing stream of opportunities with no defined endpoint.  In this environment, the static HT framework cannot produce a valid lower bound on~$F$ because waiting for a future listing is always a rational alternative.  Our sequential approach fills this gap by exploiting the opportunity cost structure across consecutive auctions to provide a lower bound.  The Cars and Bids data also offer two practical advantages: (i)~the number of unique bidders~$N$ is known exactly from complete bid histories, and (ii)~the moderate-competition environment tests the finite-sample methods developed in Section~\ref{sec:estimation_inference}.
	
	Cars and Bids is a U.S.-based online auction platform specializing in enthusiast vehicles.  We use data from January 2024 through July 2025, comprising 10{,}811 auctions after filtering to auctions with at least two bidders and valid geographic information (mapped to U.S. Metropolitan Statistical Areas).  Unlike the Korean data, complete bid histories are observed: every bid amount, bidder identity, and timestamp.  The median auction attracts about 14~unique bidders (range 2--44), each placing approximately two bids on average, with a median winning bid of \$18{,}750 (mean~\$29{,}631).  We group vehicles into five market segments using $k$-prototypes clustering, a mixed-data analogue of $k$-means, on year, make, model, body style, and transmission type, imposing the IPV assumption within clusters rather than across the full dataset.
	
	We construct bidder-specific sequences from the complete bid histories. For each bidder, we first record the auctions in which the bidder placed at least one bid, merge those bidder-auction observations with the auction's MSA, end month, end date, and estimated common value, and then order the auctions in which the bidder participated within each MSA-month cell. Consecutive observations in this bidder-specific ordering define a potential current and future opportunity pair. We retain pairs for which the current auction has a higher estimated common value than the bidder's next observed auction.  This yields 4{,}137 sequential pairs from 1{,}388 unique bidders, roughly half of all 8{,}489 consecutive pairs. 
	
	We estimate the common value function $\phi(Z_k)$ via hedonic regression of $\log(\text{highest bid})$ on vehicle characteristics, using 10{,}811 auctions with make and model fixed effects.  The car-characteristics model includes standard attributes such as year, mileage, engine, transmission, drivetrain, and body style. We use this car-characteristics-only model to construct $\hat\phi(Z)$ because $\phi(Z)$ is intended to capture predetermined object characteristics. For comparison, we also estimate a full model that adds text-based features from the listing: the length of seller highlights, service history, modifications, and seller notes, as well as the number of images and videos. We also incorporate comment sentiment features derived from a lexicon-based analysis of auction comments.\footnote{We use the AFINN sentiment lexicon \citep{nielsen2011new}, which assigns integer scores from $-5$ (most negative) to $+5$ (most positive) to approximately 2{,}500 common English words (e.g., ``excellent'' $= +3$, ``broken'' $= -3$).  For each auction, we tokenize all user comments posted on the listing page into individual words, match them against the AFINN lexicon, and compute the \emph{weighted sentiment} as the average AFINN score across all matched words.  This score measures the overall tone of community engagement: a high value indicates enthusiastic, positive commentary, while a low value suggests concerns or criticism.  We also include the count of matched sentiment-bearing words as a separate regressor, capturing the volume of substantive discussion.} These variables improve price prediction but may partly reflect endogenous attention, information revealed during the auction, or bidder sentiment, so they are not used in the baseline common-value index.  Table~\ref{tab:cb_hedonic} reports key coefficients from the hedonic regression.  The car-characteristics-only model used for $\hat\phi$ achieves an adjusted~$R^2$ of~0.819; including text and sentiment features raises it to~0.845.
	
	\begin{table}[h!]\centering
		\caption{Hedonic Regression: Cars and Bids Common Value Function}\label{tab:cb_hedonic}
		\begin{tabular}{lcc} \hline
			& Car-only model ($\hat\phi$) & Full model \\ \hline
			Year                      & $0.067^{***}$ & $0.063^{***}$ \\
			& (0.002) & (0.002) \\
			$\log(\text{Mileage})$    & $-0.148^{***}$ & $-0.142^{***}$ \\
			& (0.005) & (0.005) \\
			Sentiment (weighted)      & --- & $0.085^{***}$ \\
			&     & (0.020) \\
			Listing length (words)    & --- & $0.032^{***}$ \\
			&     & (0.004) \\
			Number of images          & --- & $0.003^{***}$ \\
			&     & (0.001) \\[4pt]
			Make/Model FE             & Yes & Yes \\
			Drivetrain/transmission/body-style FE & Yes & Yes \\
			Observations              & 10{,}811 & 10{,}811 \\
			Adjusted $R^2$            & 0.819 & 0.845 \\ \hline
		\end{tabular}
		\vspace{0.6ex}
		\begin{minipage}{\textwidth}
			\footnotesize \emph{Note:} Dependent variable is $\log(\text{highest bid})$.  Standard errors in parentheses.  $^{***}p<0.01$.  The car-characteristics-only model is used for $\hat\phi(Z_k)$.  Make and model fixed effects absorb brand-level price differences (643 make/model combinations).
		\end{minipage}
	\end{table}

    In the Korean application, the hedonic regression uses only last-period bids and includes bidder fixed effects, because the continuation value~$\text{CV}_k(\theta_i)$ in non-terminal periods breaks the log-additive structure needed for identification.  In the Cars and Bids setting, this approach is infeasible for two reasons.  First, each Cars and Bids listing is an independent 7-day online auction; the sequential structure is imposed \emph{ex post} by tracking individual bidders across separate auctions, so there is no platform-defined terminal period that applies to all bidders simultaneously.  Second, unlike the stable pool of 380 registered dealers in the Korean data, Cars and Bids bidders are highly heterogeneous and many participate in only one or two auctions, making bidder fixed effects impractical. Instead, we estimate the common-value index $\phi(Z)$ from the auction-level transaction price in the full cross-section of Cars and Bids auctions. The transaction price is the most informative object-level price signal available in the absence of terminal auctions or a stable bidder panel. Sequential incentives may affect transaction prices through bidder-specific continuation values; the maintained requirement is that, after conditioning on rich vehicle characteristics and make/model fixed effects, these residual strategic components are not systematically related to the observed characteristics used to construct the common-value index. The sequential structure is then accounted for in the second step through the opportunity-cost bounds, not by treating the auction as static.
    
	With complete bid histories, we observe the exact number of unique bidders $N_a$ in each auction~$a$.  We use the moment-condition inversion method with each auction's exact~$N_a$.  For the lower bound, we implement the myopic continuation specification $C_k=\hat\phi(Z_{k+1})$ with zero opening prices. Because the Cars and Bids market is effectively infinite horizon, bidders only know the listings that are currently active but not the future composition of cars that will arrive on the platform.  We therefore treat the bidder's next observed auction in the local sequence as the relevant continuation option.  The resulting statistic is $h_{1:n_k} = b^{1:n_k}/(\hat\phi(Z_k) - \hat\phi(Z_{k+1}))$.  We use $j = 2$ for the upper bound on~$F$ (no-overbidding). Table~\ref{tab:cb_bounds} reports the bound width for each vehicle cluster.  The bounds have mean widths of 11.7--18.1 percentage points across clusters.  The 95\% Imbens--Manski confidence intervals are 15--21 percentage points wide, properly accounting for estimation uncertainty. 
	
	\begin{table}[h!]\centering
		\caption{Bounds on the valuation distribution by vehicle cluster (Cars and Bids data).}\label{tab:cb_bounds}
		\begin{tabular}{l r r r r r} \toprule
			& Cluster~1 & Cluster~2 & Cluster~3 & Cluster~4 & Cluster~5 \\ \midrule
			Sequential pairs      & 993   & 696   & 883   & 1{,}032 & 533   \\
			Unique auctions       & 1{,}352 & 945 & 1{,}250 & 1{,}491 & 780 \\
			Median $N$            & 14    & 14    & 14    & 14    & 14    \\
			\addlinespace
			\multicolumn{6}{l}{\emph{Bound width ($j=2$ UB / sequential LB)}} \\
			\quad Mean width       & 0.177 & 0.117 & 0.144 & 0.181 & 0.166 \\
			\quad Median width     & 0.169 & 0.110 & 0.132 & 0.152 & 0.145 \\
			\quad 95\% IM CI (mean) & 0.212 & 0.152 & 0.181 & 0.209 & 0.198 \\
			\bottomrule
		\end{tabular}
		\vspace{0.6ex}
		
		\begin{minipage}{\textwidth}
			\footnotesize \emph{Note:} Width is the average of $F_U(v) - F_L(v)$ across the evaluation grid.  The 95\% Imbens--Manski CI uses analytical delta-method standard errors.
		\end{minipage}
	\end{table}
	
	\begin{figure}[p]
		\centering
		\includegraphics[width=\textwidth]{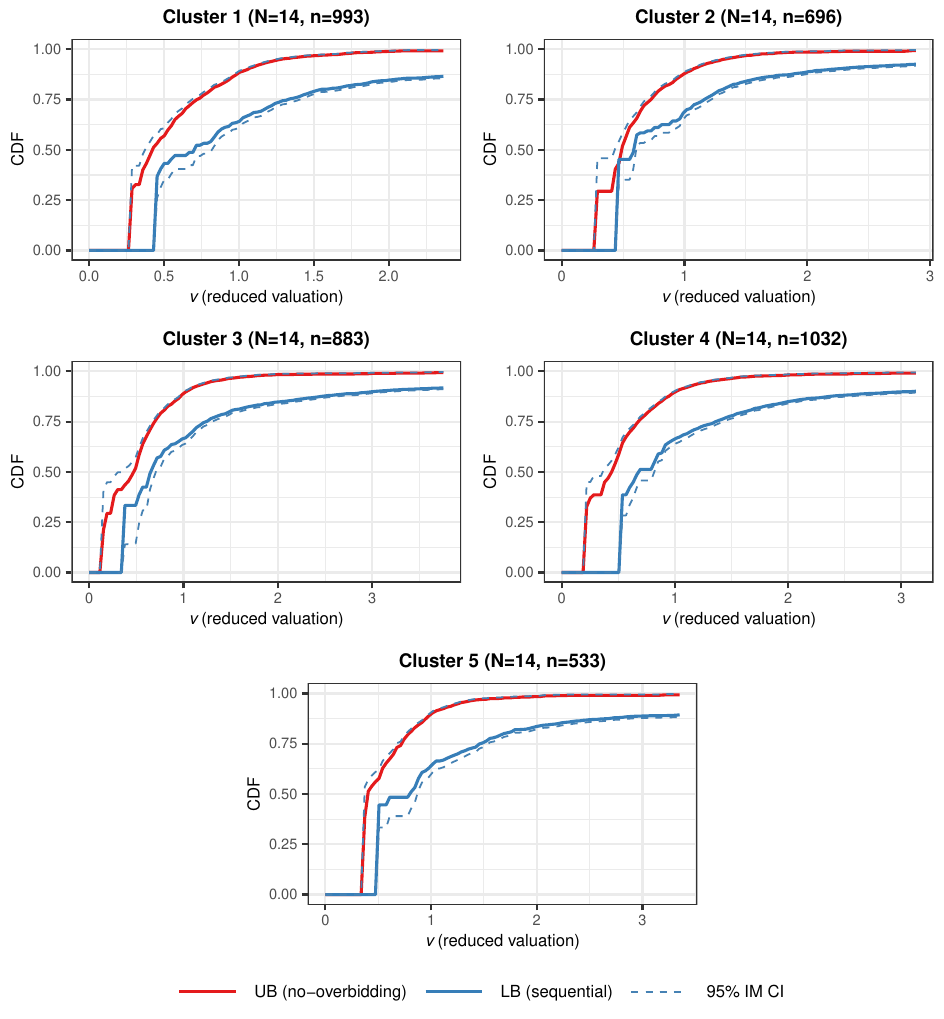}
		\caption{Estimated bounds on the valuation distribution by vehicle cluster (Cars and Bids data).  Each panel shows the upper bound on~$F$ (red solid) and the lower bound on~$F$ (blue solid), with 95\% Imbens--Manski confidence intervals (steelblue dashed).}\label{fig:cb_bounds}
	\end{figure}
	
	Since complete bid histories are available, we can also construct a \emph{panel upper bound} from individual bidders' maximum reduced bids: $F(v) \leq G^{\max}(v) = \widehat{\Pr}(\eta_i^{\max} \leq v)$, using 22{,}000--49{,}000 unique bidders per cluster.  This bound requires no beta inversion and no assumption on~$N_a$.  Comparing the panel upper bound to the sequential lower bound reveals that the baseline lower bound (computed with actual~$N_a$) exceeds the panel upper bound at 20--30 grid points per cluster.  Under the maintained assumptions, both bounds should be valid, so this crossing suggests that not all $N_a$ bidders are effectively competing in the auction. Casual bidders who place low bids may have systematically lower valuations than serious competitors, inflating the effective~$N$ in the beta inversion.
	
	The panel bound diagnostic motivates an investigation of effective competition.  We estimate the ``effective $N$'' using four complementary methods in Appendix~\ref{app:effective_N}: (i)~counting bidders whose maximum bid exceeds 50\% of the winning bid (median~$N_{\text{eff}} = 8$), (ii)~counting active final-stage bidders ($N_{\text{eff}} \approx 5$, limited by timestamp coarseness), (iii)~estimating~$N$ from order statistic spacing ($N_{\text{eff}} = 6$--$7$), and (iv)~counting bidders within the top 90\% of the bid range ($N_{\text{eff}} = 10$--$11$).  All four methods broadly point to $N_{\text{eff}} \approx 5$--$10$, suggesting that roughly half of the unique bidders are casual participants.  Re-estimating the lower bound with~$N_{\text{eff}}$ from Methods~1 and~3 nearly eliminates the crossing with the panel upper bound, while the qualitative shape of the bounds is preserved.  The bound width increases from 11.7--18.1 percentage points (baseline) to 15--29 percentage points ($N_{\text{eff}}$-adjusted), reflecting the reduced information from fewer effective competitors.  The baseline results using actual~$N_a$ should therefore be interpreted as an aggressive estimate, with the $N_{\text{eff}}$-adjusted bounds providing a conservative alternative.	
	
	\section{Counterfactual Analyses}\label{sec:counterfactual}
	
	A key practical question is whether bounds on the valuation distribution can support meaningful policy analysis.  This section answers in the affirmative through three exercises that translate the bounds $[F_L, F_U]$ into bounds on quantities of direct interest to sellers, platform designers, and regulators.  The first quantifies how expected revenue depends on the continuation probability of future auctions in the Korean wholesale market: this is the empirical counterpart to the paper's central theoretical mechanism, the option to wait.  The second asks how Cars and Bids seller revenue would change if the platform converted casual browsers into serious competitors, providing a price tag on bidder-engagement mechanisms such as qualifying deposits or activity-based filters.  The third investigates the reserve-price envelope to identify a robust set of reserve choices that do not require committing to any particular $F$ within the bounds. Because each counterfactual reduces to an order-statistic functional of $F$, the bounds on $F$ map directly into bounds on the policy quantity, with no equilibrium or distributional assumptions beyond those already maintained.
	
	The key building block is the expected order statistic: for $N$ i.i.d.\ draws from $F$, the $j$th highest order statistic has CDF $G_{j:N}(v;F) = I_{F(v)}(N\!-\!j\!+\!1, j)$.  Since $I_z(a,b)$ is increasing in~$z$, $F_L \leq F \leq F_U$ implies
	\begin{equation}\label{eq:E_order_stat_bounds}
		E_{F_U}[\theta^{j:N}] \leq E_{F}[\theta^{j:N}] \leq E_{F_L}[\theta^{j:N}],
	\end{equation}
	where $E_F[\theta^{j:N}] = \underline{\theta} + \int_{\underline{\theta}}^{\bar{\theta}} [1 - G_{j:N}(v;F)]\, dv$.  For the winner rent, we use the identity $E_F[\theta^{1:N} - \theta^{2:N}] = \int_{\underline{\theta}}^{\bar{\theta}} N F(v)^{N-1}(1 - F(v))\, dv$.  Since the integrand depends only on $F(v)$ pointwise, bounds follow from minimizing and maximizing the integrand over $F(v) \in [F_L(v), F_U(v)]$ at each~$v$ (which has a closed form since $N F^{N-1}(1-F)$ is unimodal in~$F$ with maximum at $F^* = (N-1)/N$).
	
	\subsection{Future-Auction Uncertainty in the Korean Market}
	
	Sequential auction markets are routinely subject to disruptions that threaten the continuity of supply: weather and logistics shocks at wholesale yards, regulatory pauses, dealer strikes, and shifts in consignment volume can all reduce or eliminate the next sale opportunity.  These disruptions are not benign supply-side events; by changing what bidders expect of future auctions, they alter the competitive dynamics of the current sale.  In our framework, the option value of waiting is what depresses the current-period transaction price relative to a static benchmark, and when the option to wait is curtailed, they bid more aggressively today, raising current revenue at the cost of foregone future revenue.  Quantifying this trade-off is essential for any platform that wishes to evaluate the revenue cost of a delay, the value of guaranteeing a future sale schedule, or the design of cancellation and rescheduling policies.
	
	To formalize the exercise, we perturb the continuation probability $\tau \in [0,1]$: with this probability, the next auction in the observed sequence occurs, and with probability $1-\tau$, the future opportunity disappears.  The case $\tau = 1$ corresponds to the certainty of a next auction (bidders fully internalize the option to wait), while $\tau = 0$ corresponds to a static auction with no continuation value.  We implement the counterfactual on Korean auction-day pairs. For each consecutive pair of auctions~$a$, with estimated common values $\hat\phi_{1a} > \hat\phi_{2a}$ and market competition~$N_a$, the two-period benchmark model implies
	\begin{align}
		E[P_{1a}(\tau)] = (\hat\phi_{1a} - \tau\hat\phi_{2a}) E[\theta^{2:N_a}]
		+ \tau\hat\phi_{2a} E[\theta^{3:N_a}], \quad
		E[P_{2a}(\tau)] = \tau\hat\phi_{2a} E[\theta^{3:N_a}].
	\end{align}
	The counterfactual therefore does not require simulating bid paths under a new dynamic equilibrium: the equilibrium price formula reduces the exercise to bounded order-statistic expectations, evaluated at the category-specific estimates of $F_L$ and~$F_U$.  We average the resulting bounds across observed descending pairs with $N_a \geq 3$.
	
	A particularly clean object that emerges from this exercise is the \emph{waiting discount}: the difference between first-period revenue under no future auction ($\tau = 0$) and first-period revenue when the future auction occurs with probability~$\tau$.  Subtracting $E[P_1(\tau)] = (\hat\phi_1 - \tau\hat\phi_2) E[\theta^{2:N}] + \tau\hat\phi_2 E[\theta^{3:N}]$ from $E[P_1^{\text{static}}] = \hat\phi_1 E[\theta^{2:N}]$ gives, after cancellation,
	\begin{equation}\label{eq:waiting_discount}
		E[P_1^{\text{static}}] - E[P_1(\tau)] = \tau \hat\phi_2 \big(E[\theta^{2:N}] - E[\theta^{3:N}]\big).
	\end{equation}
	This decomposition isolates the option-value mechanism: the discount is exactly proportional to the gap between the second- and third-highest order statistics, scaled by $\tau\hat\phi_2$.  Because both revenue components and the waiting discount are affine in~$\tau$, Table~\ref{tab:cf_korean_continuation} reports the full counterfactual content on a compact grid of continuation probabilities; values between grid points are obtained by linear interpolation. Moving from $\tau = 0$ to $\tau = 1$ lowers the first-period lower envelope by 8.7--10.9\% and the upper envelope by 8.1--10.0\% across the five categories, while total expected revenue over the two-auction window rises by 12.9--29.2\% on the lower envelope and 15.2--30.3\% on the upper envelope. The waiting discount at $\tau=1$ ranges from 4.3--55.5 on the lower bound to 47.7--94.4 on the upper bound, in units of 10{,}000 KRW, with the largest discount in the Manual Non-Sedan Diesel segment where the order-statistic gap is widest.
	
	\begin{table}[t!]
		\centering
		\caption{Expected revenues and waiting discount with future uncertainty.}\label{tab:cf_korean_continuation}
		\resizebox{\textwidth}{!}{%
		\begin{tabular}{lccccc} \toprule
				Category & $\tau$ & $E[P_1(\tau)]$ & $E[P_2(\tau)]$ & $E[P_1(\tau)+P_2(\tau)]$ & Waiting discount \\ \midrule
				Auto/Sedan/Gas & 0.00 & [508.9, 606.6] & [0.0, 0.0] & [508.9, 606.6] & [0.0, 0.0] \\
				 & 0.25 & [495.5, 591.4] & [39.3, 47.7] & [534.8, 639.1] & [5.0, 23.6] \\
				 & 0.50 & [482.1, 576.2] & [78.6, 95.4] & [560.7, 671.6] & [10.0, 47.2] \\
				 & 0.75 & [468.7, 561.0] & [118.0, 143.2] & [586.7, 704.1] & [15.0, 70.8] \\
				 & 1.00 & [455.3, 545.8] & [157.3, 190.9] & [612.6, 736.6] & [19.9, 94.4] \\
				\addlinespace
				Manual/Sedan/Gas & 0.00 & [257.7, 304.8] & [0.0, 0.0] & [257.7, 304.8] & [0.0, 0.0] \\
				 & 0.25 & [250.7, 298.0] & [19.2, 24.3] & [269.9, 322.2] & [2.0, 11.9] \\
				 & 0.50 & [243.6, 291.1] & [38.4, 48.6] & [282.0, 339.7] & [3.9, 23.9] \\
				 & 0.75 & [236.6, 284.3] & [57.6, 72.9] & [294.2, 357.1] & [5.9, 35.8] \\
				 & 1.00 & [229.5, 277.4] & [76.8, 97.1] & [306.4, 374.5] & [7.9, 47.7] \\
				\addlinespace
				Manual/Non-Sedan/Gas & 0.00 & [312.7, 369.5] & [0.0, 0.0] & [312.7, 369.5] & [0.0, 0.0] \\
				 & 0.25 & [304.7, 362.0] & [27.4, 34.4] & [332.1, 396.4] & [1.1, 14.5] \\
				 & 0.50 & [296.7, 354.6] & [54.8, 68.8] & [351.4, 423.4] & [2.2, 29.0] \\
				 & 0.75 & [288.6, 347.1] & [82.1, 103.2] & [370.8, 450.3] & [3.2, 43.5] \\
				 & 1.00 & [280.6, 339.6] & [109.5, 137.6] & [390.1, 477.2] & [4.3, 58.0] \\
				\addlinespace
				Auto/Non-Sedan/Gas & 0.00 & [428.3, 479.4] & [0.0, 0.0] & [428.3, 479.4] & [0.0, 0.0] \\
				 & 0.25 & [418.9, 469.6] & [40.6, 46.1] & [459.6, 515.7] & [3.8, 15.3] \\
				 & 0.50 & [409.6, 459.8] & [81.2, 92.2] & [490.8, 552.0] & [7.7, 30.6] \\
				 & 0.75 & [400.3, 449.9] & [121.9, 138.3] & [522.1, 588.3] & [11.5, 46.0] \\
				 & 1.00 & [391.0, 440.1] & [162.5, 184.4] & [553.4, 624.5] & [15.3, 61.3] \\
				\addlinespace
				Manual/Non-Sedan/Diesel & 0.00 & [679.5, 718.0] & [0.0, 0.0] & [679.5, 718.0] & [0.0, 0.0] \\
				 & 0.25 & [661.3, 700.9] & [40.1, 44.5] & [701.5, 745.3] & [13.9, 21.5] \\
				 & 0.50 & [643.1, 683.7] & [80.3, 89.0] & [723.4, 772.6] & [27.7, 43.1] \\
				 & 0.75 & [624.9, 666.5] & [120.4, 133.4] & [745.3, 799.9] & [41.6, 64.6] \\
				 & 1.00 & [606.7, 649.3] & [160.5, 177.9] & [767.2, 827.2] & [55.5, 86.2] \\ \bottomrule
			\end{tabular}%
		}
		\vspace{0.6ex}
		\begin{minipage}{\textwidth}
			\footnotesize \emph{Note:} Entries are bounds $[\text{lower},\text{upper}]$ in units of 10{,}000 KRW.
		\end{minipage}
	\end{table}
	
	The option-value effect places a lower bound on the cost of supply disruptions: in the Korean market, an unanticipated cancellation of the next sale would raise current-period revenue by about 9--12\%, but the platform loses the entire next-period revenue stream, yielding a net loss of 11--23\% on the two-auction window. Sellers and platforms can therefore use the bounds to value insurance, hedging contracts, or guaranteed-schedule mechanisms that reduce~$1-\tau$. The asymmetry between the period-1 gain and the total loss shows that the price increase from supply tightening is small relative to the missing future revenue, so aggressive policies that artificially restrict supply to raise current prices are unambiguously revenue-reducing in this framework.
	
	\subsection{Effective Competition and Seller Revenue (Cars and Bids)}
	
	Online auction platforms typically report headline participation numbers based on registered bidders, page views, or watchlist additions.  These metrics overstate the competitive intensity that actually determines transaction prices, because many participants are casual browsers who never bid seriously, and the platform's revenue is driven by the small subset of bidders who push prices toward valuations as we documented in the previous section. A natural follow-up question is the policy-relevant magnitude of this gap: what would seller revenue look like if the platform attracted, say, twice as many serious bidders? This question is operationally meaningful because platforms have direct levers for converting casual browsers into active bidders, with qualifying deposits, late-bidding reminders, anti-sniping extensions, and curated notifications being prominent examples. To answer the question, in the Cars and Bids application, we sweep a counterfactual number of serious bidders $N' \in \{5, 8, 10, 14, 20, 30, 50\}$ and compute bounds on seller revenue $\phi \cdot E[\theta^{2:N'}]$ and on winner rent $\phi \cdot E[\theta^{1:N'} - \theta^{2:N'}]$, holding the within-cluster valuation distribution within the estimated bounds.  The first quantity uses the order-statistic bounds in \eqref{eq:E_order_stat_bounds} with $j = 2$; the second uses the closed-form bound on the rent integrand $N'F(v)^{N'-1}(1-F(v))$ over $F(v) \in [F_L(v),F_U(v)]$.  The maintained assumption is that the additional bidders draw from the same valuation distribution as the existing serious bidders.
	
	Figure~\ref{fig:cf_serious_bidders} displays the resulting bounds.  Moving from $N' = 8$ (the main effective-competition estimate) to $N' = 20$ raises the lower seller-revenue envelope by 40--46\% across clusters and the upper envelope by 36--65\%, while winner rents move in the opposite direction: in Cluster~1, for example, the rent interval falls from $[0.23,0.68]$ at $N'=8$ to $[0.10,0.53]$ at $N'=20$, a roughly 25--60\% compression.  These two patterns are economically symmetric: thicker competition redistributes surplus from winners to sellers without changing the value of the underlying allocation, which is the textbook prediction of standard IPV theory and which our bounds confirm in magnitude. The magnitudes provide a price tag for engagement mechanisms: the lower-bound revenue increase by raising effective competition from 8 to 20 bidders, 40--46\% per cluster, can be compared directly with the cost of any platform investment that achieves that increase, such as marketing spend, deposit requirements, or anti-sniping rule changes. An investment whose cost is below this lower-bound revenue gain would be profitable in this partial-equilibrium calculation across all valuation distributions consistent with the data. The rent compression matters for incentive design, since as competition thickens, winner rents fall and the platform's ability to extract surplus through reserve prices, fees, or membership tiers shifts accordingly. A platform with a fee structure calibrated to a low-competition regime may leave revenue on the table as competition grows.
	
	\begin{figure}[p!]
		\centering
		\includegraphics[width=\textwidth]{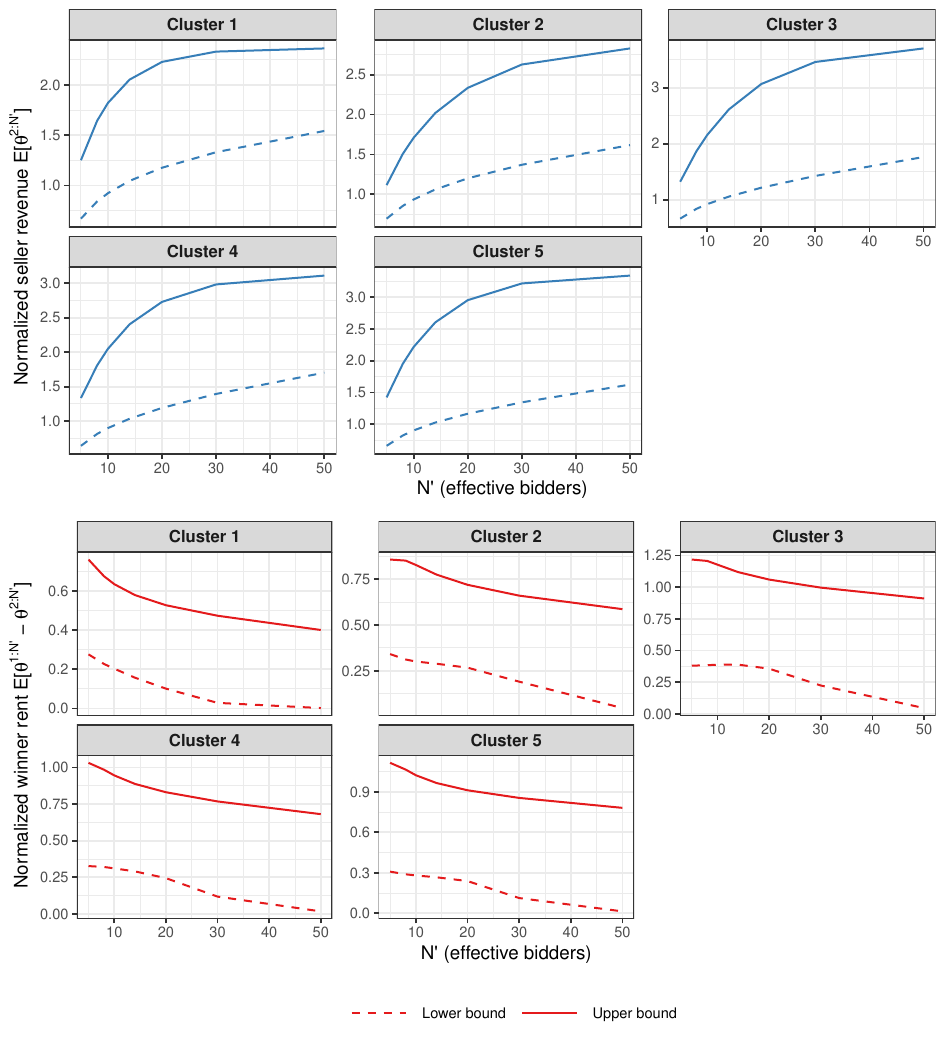}
		\caption{Counterfactual revenue and winner rent as the number of serious bidders~$N'$ varies (Cars and Bids).  Top row: bounds on expected seller revenue $\phi \cdot E[\theta^{2:N'}]$.  Bottom row: bounds on winner rent $\phi \cdot E[\theta^{1:N'} - \theta^{2:N'}]$.  Values are normalized by~$\phi$.  Dashed lines are lower bounds and solid lines are upper bounds.}\label{fig:cf_serious_bidders}
	\end{figure}
	
	\subsection{Reserve Price Envelope (Cars and Bids)}
	
	Reserve pricing is the most direct policy lever available to a seller in an English auction since the reserve sets a floor on the realized transaction price. Classical optimal-auction theory delivers a sharp prescription: raise the reserve until the marginal cost of foregone trade equals the marginal revenue from the reserve payment. However, this prescription requires committing to a particular valuation distribution~$F$.  In our partial identification framework, there is no unique optimal reserve identified.  What replaces the point optimum is a \emph{set} of reserves and an envelope of revenue functions, one for each~$F$ within the bounds. Two natural objects of interest are the lower envelope, which describes the worst-case revenue at each reserve, and the maximin reserve, which maximizes the worst case. Both are interpretable without further assumptions about which~$F$ in $[F_L,F_U]$ is correct, and both are directly actionable by a seller who wishes to hedge against misspecification.
	
	Let $\rho = r/\phi$ denote the normalized reserve.  Under a static second-price auction with $N$ i.i.d.\ bidders drawing from~$F$, expected normalized revenue is
	\begin{equation}\label{eq:reserve_revenue}
		\text{Rev}(\rho; F, N) = \rho \, N(1-F(\rho))F(\rho)^{N-1} + \int_{\rho}^{\bar{\theta}} [1 - I_{F(v)}(N\!-\!1, 2)]\, dv,
	\end{equation}
	where the first term is the expected reserve payment in the event that exactly one bidder clears~$\rho$ (so the price equals the reserve) and the integral is the expected second-highest valuation above~$\rho$ (which determines the price when at least two bidders clear).  The integral is monotone decreasing in~$F$ and is bounded by evaluating it at $F_U$ and~$F_L$.  The reserve-payment term is non-monotone in~$F$ pointwise (it is unimodal with maximum at $F^* = (N-1)/N$) so its bounds are obtained by minimizing and maximizing the integrand over $F(\rho) \in [F_L(\rho), F_U(\rho)]$ at each~$\rho$.  Combining the two terms gives pointwise revenue bounds on $\text{Rev}(\rho; F, N)$ for each~$\rho$.  We sweep $\rho \in [0.5, 2.5]$ at the effective-competition estimate $N = 8$. 
	
	Figure~\ref{fig:cf_reserve} displays the revenue envelope for each cluster.  The lower envelope is bimodal in every cluster: it attains a boundary value at $\rho = 0.5$ that is driven almost entirely by the integral term, drops as $\rho$ rises and the integral term collapses faster than the reserve-payment term builds up, and recovers to an interior local maximum near $\rho \approx 1.3$--$1.4$ where the reserve-payment term contributes more than 95\% of expected revenue.  Which of these two local maxima dominates varies across clusters.  In Clusters~2--4 the interior maximum at $\rho^* = 1.3$, $1.3$, and~$1.4$ exceeds the boundary value by 14.6\%, 23.0\%, and 15.6\% respectively, so the maximin reserve is interior and meaningful.  In Clusters~1 and~5 the two local maxima are essentially tied (the boundary wins by 0.6\% and 4\%, respectively) so the maximin reserve formally sits at $\rho^* = 0.5$, but a near-identical worst-case revenue is achievable at the interior local maximum near $\rho \approx 1.3$--$1.4$.  The upper envelope, by contrast, is maximized at $\rho = 0.5$ in every cluster: under the most favorable distributions within the bounds, $F_U$ rises rapidly above $\rho \approx 1$ and the integral term dominates throughout. The lower and upper envelopes prescribe substantively different reserves in three of the five clusters, and even in the two clusters where they nominally agree at $\rho^* = 0.5$ the lower envelope identifies a near-tie interior local maximum.
	
	This exercise provides useful guidance on reserve price policies. First, the divergence between the maximin and the upper-envelope optima in Clusters~2--4 reveals genuine distributional ambiguity: a planner who is confident that $F$ lies near the upper bound prefers a low reserve, while one who hedges against the worst case prefers a moderate reserve of $\rho^* \approx 1.3$--$1.4$.  The choice between these two policy stances is a value judgment about ambiguity aversion, not an inference question that more data can settle.  Second, the cluster-level heterogeneity is itself a finding: vehicle segments do not share a common optimal reserve, and a platform-wide uniform reserve policy can leave revenue gains relative to a cluster-specific policy in three of five clusters.  Even under the most conservative robustness criterion, segment-targeted reserves can raise revenue by 15--23\% relative to the boundary policy that a one-size-fits-all maximin would prescribe.  
	
	\begin{figure}[p!]
		\centering
		\includegraphics[width=\textwidth]{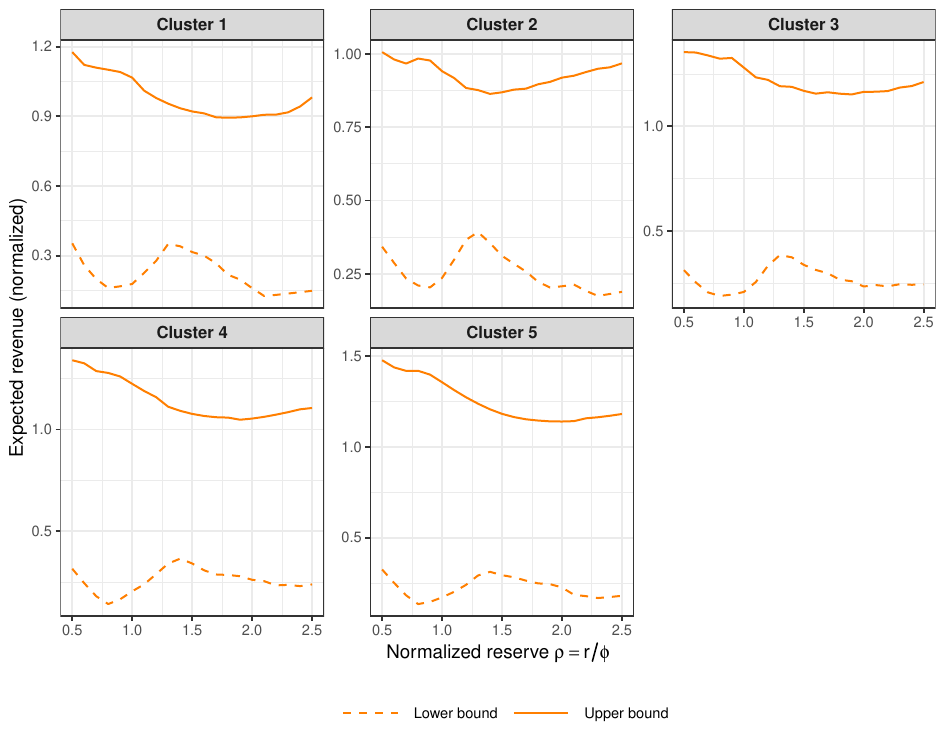}
		\caption{Expected revenue envelope as the normalized reserve~$\rho = r/\phi$ varies (Cars and Bids), evaluated at the effective-competition estimate $N = 8$.  Dashed lines are lower bounds and solid lines are upper bounds.  The maximin reserve $\rho^*$ is the value that maximizes the lower envelope: $\rho^* = 0.5$ in Clusters~1 and~5, and $\rho^* = 1.3$, $1.3$, $1.4$ in Clusters~2,~3,~4 respectively.}\label{fig:cf_reserve}
	\end{figure}

	\section{Conclusion}\label{sec:conclusion}
	
	This paper develops a partial identification approach for the distribution of bidders' valuations in sequential English auctions. Our framework relies on two behavioral restrictions: bidders do not overbid and they account for future auction opportunities, without specifying any particular equilibrium. Our approach has several advantages over existing methods. First, it is robust to a variety of nonstandard features that cause equilibrium-based methods to fail, including supply uncertainty, loss aversion, and non-equilibrium bidding. Second, simulation experiments demonstrate that ignoring the sequential structure leads to misspecified bounds, while our method provides informative bounds that are valid under the maintained assumptions. Third, the method uses non-terminal as well as terminal auction information, yielding tighter bounds than last-period-only approaches when common-value components decline substantially. Two empirical applications demonstrate that the method is practical and yields informative bounds with real data.
	
	From a practical standpoint, the bounds on~$F$ characterize the distribution of buyer valuations without imposing equilibrium assumptions, providing a robust foundation for auction design. For the Cars and Bids platform, the effective-competition analysis suggests that a substantial share of observed bidders are casual participants, so mechanisms to distinguish active from passive bidders may improve price discovery. More broadly, the bounds on~$F$ across market segments serve as inputs to counterfactual policy analysis, including the evaluation of supply uncertainty, effective competition, and reserve price rules. Computing optimal dynamic reserve policies in sequential auctions remains challenging because the interaction between current reserves and future bidder behavior requires equilibrium assumptions that our framework deliberately avoids. Developing such counterfactual analyses within a partial identification framework is an important direction for future work. Other avenues for future research include extending the approach to settings with unobserved heterogeneity and multi-unit demand.

	\bibliographystyle{chicago}
	\bibliography{used_car_paper}
	
	\newpage
	\appendix
	
	\begin{center}
		{\Large\bf Online Supplementary Appendix}
		
		\vspace{0.8em}
		{\large for ``Partial Identification of the Valuation Distribution\\ in Sequential English Auctions''}
		
		\vspace{0.8em}
		Dongwoo Kim, Kyoo il Kim and Pallavi Pal
	\end{center}
	
	\vspace{1.5em}

	\section{Technical Proofs}\label{app:proofs}
	
	\begin{proof}[Proof of Lemma~\ref{lem:lb}]
		Under Assumption~\ref{ass:no_overbid}, $b_{i,k}/\phi(Z_k) \leq \theta_i$ for all $i$ and $k$. Suppose for contradiction that $\eta^{j:n_k} > \theta^{j:n_k}$ for some $j$ and $k$. Then $b^{j:n_k} > \theta^{j:n_k}\phi(Z_k)$, implying that at least $j$ bids exceed the $j$th highest valuation. But this means at least one bidder bid above her valuation, contradicting Assumption~\ref{ass:no_overbid}.
	\end{proof}
	
	\begin{proof}[Proof of Lemma~\ref{lem:ub_last}]
		In the last period $K$, there is no future auction, so $W_K(\theta_i) = 0$. Under Assumption~\ref{ass:opportunity_cost}, for any bidder $i$ with $b_{i,K} < b^{1:n_K}$, the profit from winning at $b^{1:n_K} + \Delta$ must not exceed the opportunity cost, which is zero. Hence $\theta_i\phi(Z_K) - (b^{1:n_K} + \Delta) \leq 0$, giving $\theta_i \leq (b^{1:n_K} + \Delta)/\phi(Z_K)$. For the winner ($b_{i,K} = b^{1:n_K}$), no tighter bound than $\bar{\theta}$ is available.
	\end{proof}
	
	\begin{proof}[Proof of Lemma~\ref{lem:expected_payoff_bound}]
		For notational convenience, let $P_{i,r} \equiv P(b_{i,r} > \max_{j \neq i} b_{j,r})$. We have
		\begin{align*}
			W_k(\theta_i) &\leq \sum_{t=k+1}^{K} \left(\prod_{r=k+1}^{t-1}(1 - P_{i,r})\right) P_{i,t}\, \theta_i\phi(Z_t),
		\end{align*}
		using $\mathbf{E}[\pi_{i,t} \mid \text{win}] \leq \theta_i\phi(Z_t)$ (the maximum profit is from paying zero) and $\tau_{i,t} \leq 1$. We evaluate the sum by examining the last two terms:
		\begin{align*}
			&\left(\prod_{r=k+1}^{K-2}(1-P_{i,r})\right)P_{i,K-1}\,\theta_i\phi(Z_{K-1}) + \left(\prod_{r=k+1}^{K-1}(1-P_{i,r})\right)P_{i,K}\,\theta_i\phi(Z_K).
		\end{align*}
		Using $P_{i,K} \leq 1$:
		\begin{align*}
			&\leq \left(\prod_{r=k+1}^{K-2}(1-P_{i,r})\right)\!\left[P_{i,K-1}\,\theta_i\phi(Z_{K-1}) + (1-P_{i,K-1})\,\theta_i\phi(Z_K)\right] \\
			&= \left(\prod_{r=k+1}^{K-2}(1-P_{i,r})\right)\!\left[P_{i,K-1}\,\theta_i(\phi(Z_{K-1}) - \phi(Z_K)) + \theta_i\phi(Z_K)\right] \\
			&\leq \left(\prod_{r=k+1}^{K-2}(1-P_{i,r})\right)\theta_i\phi(Z_{K-1}),
		\end{align*}
		where the last step uses $P_{i,K-1} \leq 1$ and $\phi(Z_{K-1}) \geq \phi(Z_K)$ (Condition~\ref{cond:descending}). Applying this argument recursively from period $K-1$ back to period $k+1$ yields Part~(a): $W_k(\theta_i) \leq \theta_i\phi(Z_{k+1})$.
		
			For Part~(b), write $V_{i,t}=\theta_i\phi(Z_t)-(p_t+\Delta)$ and $V_{i,K+1}=0$. The condition $V_{i,t}\geq \max\{0,V_{i,t+1}\}$ for $t=k+1,\ldots,K$ implies $V_{i,k+1}\geq V_{i,k+2}\geq\cdots\geq V_{i,K}\geq0$. Since winning period~$t$ requires paying at least $p_t+\Delta$, $\mathbf{E}[\pi_{i,t}\mid \text{win}]\leq V_{i,t}\leq V_{i,k+1}$ for all $t\geq k+1$. Therefore
			\[
			W_k(\theta_i)
			\leq V_{i,k+1}\sum_{t=k+1}^{K}\tau_{i,t}
			\left(\prod_{r=k+1}^{t-1}(1-P_{i,r})\right)P_{i,t}
			\leq V_{i,k+1},
			\]
			because $\tau_{i,t}\leq1$ and the final sum is bounded by the probability of winning in some future period. This proves $W_k(\theta_i)\leq V_{i,k+1}=\theta_i\phi(Z_{k+1})-(p_{k+1}+\Delta)$.
		\end{proof}
	
	\begin{proof}[Proof of Theorem~\ref{thm:ub_general}]
		Applying Lemma~\ref{lem:expected_payoff_bound}(a) to inequality \eqref{eq:opportunity_cost_ineq}:
		\begin{align*}
			\theta_i\phi(Z_k) - (b^{1:n_k} + \Delta) &\leq \theta_i\phi(Z_{k+1}) \\
			\theta_i(\phi(Z_k) - \phi(Z_{k+1})) &\leq b^{1:n_k} + \Delta.
		\end{align*}
		Dividing by $\phi(Z_k) - \phi(Z_{k+1}) > 0$ gives \eqref{eq:ub_general_gross}. If Lemma~\ref{lem:expected_payoff_bound}(b) applies, then
		\begin{align*}
			\theta_i\phi(Z_k) - (b^{1:n_k} + \Delta) &\leq \theta_i\phi(Z_{k+1}) - (p_{k+1} + \Delta) \\
			\theta_i(\phi(Z_k) - \phi(Z_{k+1})) &\leq b^{1:n_k} - p_{k+1}.
		\end{align*}
			Dividing again by the positive denominator gives \eqref{eq:ub_general}. The sharper net statistic is used only for retained observations satisfying $b^{1:n_k}>p_{k+1}$, so that the numerator is positive; otherwise the observation does not contribute to this specification. For a winner with $b_{i,k}=b^{1:n_k}$, the losing-bidder inequalities do not apply, so the only available bound is support boundedness, $\theta_i\leq\bar\theta$.
		\end{proof}
	
		\begin{proof}[Proof of Lemma~\ref{lem:ub_infinite}]
			Let $P_{i,t}(n)\equiv P(b_{i,t}>\max_{j\neq i}b_{j,t}\mid n_t=n)$. We prove the baseline claim by backward induction from bidder $i$'s deadline $d_i$. Since $W_{d_i}(\theta_i,d_i)=0$, the claim is immediate at the deadline. Suppose $W_t(\theta_i,d_i)\leq \theta_i\phi(Z_{t+1})$. Under Condition~\ref{cond:descending}, $W_t(\theta_i,d_i)\leq \theta_i\phi(Z_t)$. Then \eqref{eq:W_infinite} implies
			\begin{align*}
				W_{t-1}(\theta_i,d_i)
				&\leq \int \left\{P_{i,t}(n)\theta_i\phi(Z_t)
				+[1-P_{i,t}(n)]W_t(\theta_i,d_i)\right\}\,dF_N(n) \\
				&\leq \theta_i\phi(Z_t).
			\end{align*}
			Iterating back to $t=k+1$ gives $W_k(\theta_i,d_i)\leq\theta_i\phi(Z_{k+1})$.
			
			The sharper net refinement follows by the same induction with $V_{i,t}=\theta_i\phi(Z_t)-(p_t+\Delta)$ replacing the gross value bound. If $V_{i,t}\geq\max\{0,V_{i,t+1}\}$ along the relevant future path, then the integrand in the display above is bounded by $V_{i,t}$, yielding $W_{t-1}(\theta_i,d_i)\leq V_{i,t}$ and therefore $W_k(\theta_i,d_i)\leq V_{i,k+1}$. The upper bound on $\theta_i$ then follows by the same argument as in the proof of Theorem~\ref{thm:ub_general}.
		\end{proof}
        
    	\begin{proof}[Proof of Lemma~\ref{lem:sharp}]
		This follows directly from Theorem~1 of \cite{chesher2017generalized} applied to the structural function $h(B, Z, U)$ defined in \eqref{eq:structural}. The identified set is characterized by the Artstein inequality applied to the random set $\mathcal{U}(B, Z; h)$ conditional on $Z = z$: for every relevant closed set $S$, the probability that the random set is contained in $S$ cannot exceed the probability that the latent variable $U$ lies in $S$.
    	\end{proof}

	\begin{proof}[Proof of Proposition~\ref{prop:moment_cond}]
		\emph{Part (i) (Upper bound).}  By Lemma~\ref{lem:lb}, $\eta^{j:N_a} \leq \theta^{j:N_a}$ for each auction $a$, so $\Pr(\eta^{j:N_a} \leq v \mid N_a) \geq \Pr(\theta^{j:N_a} \leq v \mid N_a) = I_{F^0(v)}(N_a - j + 1, j)$ for each realization of $N_a$.  Taking expectations over the marginal distribution of $N_a$ yields $\bar G_U(v) \geq \bar I(F^0(v))$, which is \eqref{eq:moment_population}.  The function $\bar I(F)$ is a mixture of regularized incomplete beta functions, each strictly increasing on $(0,1)$ with positive derivative, so $\bar I$ is itself strictly increasing.  Hence the equation $\bar I(F) = \bar G_U(v)$ has a unique solution $F_U^*(v)$, and the inequality $\bar G_U(v) \geq \bar I(F^0(v))$ together with monotonicity gives $F_U^*(v) \geq F^0(v)$.
		
		For consistency, let $\hat G_U(v) = T^{-1}\sum_a \mathbf{1}\{\eta_a^{j:N_a} \leq v\}$.  Under i.i.d.\ sampling and bounded $N_a$, the Glivenko--Cantelli theorem gives $\sup_v |\hat G_U(v) - \bar G_U(v)| = O_p(T^{-1/2})$.  The estimator $\hat F_U(v)$ solves the empirical analog of $\bar I(F) = \bar G_U(v)$.  By the implicit function theorem, at any~$v$ with $\bar I'(F_U^*(v)) > 0$,
		\[
		\hat F_U(v) - F_U^*(v)
		= \frac{T^{-1}\sum_a\left[\mathbf{1}\{\eta_a^{j:N_a}\leq v\}-I_{F_U^*(v)}(N_a-j+1,j)\right]}{\bar I'(F_U^*(v))} + o_p(T^{-1/2}),
		\]
		where $\bar I'(F) = \mathbf{E}[f_{\text{Beta}}(F; N_a - j + 1, j)]$.  Uniform consistency over compact subsets of the support follows from a standard uniform implicit-function argument; at support boundaries where $\bar I'(F_U^*) \to 0$, the convergence rate slows (the boundary problem of Section~\ref{subsec:finite_sample}).  Applying the CLT to the centered moment indicator $\mathbf{1}\{\eta^{j:N_a}\leq v\}-I_{F_U^*(v)}(N_a-j+1,j)$ and then applying the delta method yields the asymptotic distribution \eqref{eq:asymptotic_dist}.  The beta term appears because, under random $N_a$, the empirical beta weights are random as well.  When $N_a$ is treated as fixed, this additional randomness vanishes, leaving $\bar G_U(v)(1 - \bar G_U(v))$ in the numerator and $f_{\text{Beta}}(F_U^*(v); N - j + 1, j)^2$ in the denominator, which reduce to $\hat\sigma(v)^2$ in \eqref{eq:analytical_se}.
		
		\emph{Part (ii) (Lower bound).}  The argument carries over to the lower bound \emph{mutatis mutandis}.  By Theorem~\ref{thm:ub_general}, $h^{g,a}_{1:N_a} \geq \theta^{2:N_a}$ pairwise, so $\Pr(h^g_{1:N_a} \leq v \mid N_a) \leq \Pr(\theta^{2:N_a} \leq v \mid N_a) = I_{F^0(v)}(N_a - 1, 2)$.  Taking expectations over $N_a$ gives $\bar G_L(v) \leq \bar I(F^0(v))$ with shape $(N_a\!-\!1, 2)$, which is the population analog of \eqref{eq:moment_condition_LB}; strict monotonicity of $\bar I$ then yields $F_L^*(v) \leq F^0(v)$, establishing validity.  Glivenko--Cantelli applied to $\hat G_L(v) = T^{-1} \sum_a \mathbf{1}\{h^{g,a}_{1:N_a} \leq v\}$ and the implicit function theorem applied to $\bar I(F) = \bar G_L(v)$ give the influence-function representation
		\[
		\hat F_L(v) - F_L^*(v)
		= \frac{T^{-1}\sum_a\left[\mathbf{1}\{h^{g,a}_{1:N_a}\leq v\}-I_{F_L^*(v)}(N_a-1,2)\right]}{\bar I'(F_L^*(v))} + o_p(T^{-1/2}),
		\]
		and the CLT applied to the lower-bound moment indicator together with the delta method delivers the stated asymptotic normality.  The joint asymptotic distribution is bivariate normal with covariance equal to the covariance of the upper- and lower-bound influence functions.  This covariance is zero for disjoint samples and generally nonzero when the same auction or market can contribute to both endpoints.
	\end{proof}

	\section{Special Case: No Bid Periods}\label{app:special_case}
	
	In practice, some auctions receive no bids when all bidders' valuations fall below the reserve price. We handle this using the following observation.
	
	\begin{observation}\label{obs:nobid}
		If the value of the item in period $k$ is below the reserve price for all bidders, then bidders treat the expected profit from period $k$ as zero when calculating future expected payoffs.
	\end{observation}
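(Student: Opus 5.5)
The plan is to argue directly from the definition of the expected future payoff $W_k(\theta_i)$ in \eqref{eq:W}, period by period. The goal is to show that a period $k$ in which every bidder's value lies below the reserve contributes exactly zero expected profit to every earlier continuation value $W_{k'}(\theta_i)$ with $k'<k$. The observation then follows because a rational bidder forms this expectation correctly.

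\textbf{Step 1: the winning event in period $k$ has no positive profit.} Suppose $\theta_i\phi(Z_k)<r_k\leq p_k$ for every $i$. Any bid that can win must be at least the opening price $p_k$. Assumption~\ref{ass:no_overbid} requires $b_{i,k}\leq\theta_i\phi(Z_k)<p_k$, so no bidder submits a bid that meets the opening price. Hence $P(b_{i,k}>\max_{j\neq i}b_{j,k})\,\mathbf{E}[\pi_{i,k}\mid\text{win}]=0$: either the winning probability is zero, or any hypothetical win would have profit $\theta_i\phi(Z_k)-p_k<0$. A bidder anticipating this assigns zero to the term $\tau_{i,k}P_{i,k}\mathbf{E}[\pi_{i,k}\mid\cdot]$ in \eqref{eq:W}.

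\textbf{Step 2: substitute into $W_{k'}$.} In the sum defining $W_{k'}(\theta_i)$, the $t=k$ summand vanishes. In all later summands $t>k$, the factor $(1-P_{i,k})$ in the product equals one. Therefore $W_{k'}(\theta_i)$ equals the same expression computed on the sequence with period $k$ deleted. Equivalently, bidders behave as if period $k$ delivers zero expected profit.

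\textbf{Step 3: the main obstacle.} The delicate step is Step 1 when the item is only below the reserve from an ex ante perspective. Because $\theta_j$ is private, bidder $i$ cannot know that all rivals value the item below the reserve. I would handle this by reading the observation as a statement about period $k$'s payoff to bidder $i$ in particular. Bidder $i$'s own expected profit from period $k$ is bounded by $\max\{0,\theta_i\phi(Z_k)-p_k\}=0$ whenever $\theta_i\phi(Z_k)<p_k$, and this bound does not depend on rivals' values. This resolves the issue, and Step 2 then applies verbatim. This also shows that deleting such periods preserves the bound in Lemma~\ref{lem:expected_payoff_bound} on the remaining descending sequence.
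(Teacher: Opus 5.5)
Your proposal is correct and follows essentially the paper's argument: by Assumption~\ref{ass:no_overbid} no bidder can profitably reach the reserve in period $k$, so that period's term in $W$ vanishes (the factor $1-P_{i,k}$ equals one) and the period drops out of the continuation payoff, which is exactly how the paper then uses the observation in Lemma~\ref{lem:nobid}. Your Step 3 usefully spells out that only bidder $i$'s own value matters; note that the extra ordering $r_k\leq p_k$ is not needed, since the paper only says the opening price is set at or near the reserve, and you can bound the payment by $r_k$ directly.
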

	
	\noindent This follows from Assumptions~\ref{ass:no_overbid}--\ref{ass:opportunity_cost}: if no bidder's valuation exceeds the reserve price, no bidder participates, and the anticipated profit from that period is zero.
	
	\begin{lemma}\label{lem:nobid}
		Suppose no bidder bids in an interior period $k$ and the sequence obtained by deleting period~$k$ satisfies the descending condition at period $k-1$, so that $\phi(Z_{k-1})>\phi(Z_{k+1})$. Then for all $i \in I_{k-1}$ with $b_{i,k-1} < b^{1:n_{k-1}}$:
		\begin{equation}\label{eq:nobid}
			\theta_i \leq \frac{b^{1:n_{k-1}} + \Delta}{\phi(Z_{k-1}) - \phi(Z_{k+1})}.
		\end{equation}
	\end{lemma}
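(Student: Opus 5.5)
We reduce Lemma~\ref{lem:nobid} to the argument behind Theorem~\ref{thm:ub_general}: show that the no-bid period $k$ contributes nothing to the continuation payoff, so it can be deleted from the sequence. The deleted sequence then has $k+1$ as the period immediately after $k-1$. We start from the opportunity-cost inequality \eqref{eq:opportunity_cost_ineq} at period $k-1$: for every losing bidder $i\in I_{k-1}$,
\[
\theta_i\phi(Z_{k-1})-(b^{1:n_{k-1}}+\Delta)\leq W_{k-1}(\theta_i).
\]
It then suffices to show $W_{k-1}(\theta_i)\leq\theta_i\phi(Z_{k+1})$.

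The first step removes period $k$ from $W_{k-1}$ in \eqref{eq:W}. Because no bidder bids in period $k$, Observation~\ref{obs:nobid} lets us set the expected profit from period $k$ to zero. Formally, $P_{i,k}\,\mathbf{E}[\pi_{i,k}\mid\text{win}]=0$, and we also take $P_{i,k}=0$, so the factor $(1-P_{i,k})$ equals one in every later product. The $t=k$ summand then vanishes, and the remaining terms coincide with the expected future payoff of the sequence with period $k$ deleted. If one prefers not to set $P_{i,k}=0$, the argument still works: the factor $(1-P_{i,k})\leq 1$ can only shrink later terms, so the expression is bounded by the deleted-sequence payoff.

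The second step applies Lemma~\ref{lem:expected_payoff_bound}(a) to the reindexed sequence $(\dots,k-1,k+1,k+2,\dots,K)$. The recursive argument in its proof only needs the descending condition at each adjacent pair of the retained sequence. At the pair $(k-1,k+1)$ this is exactly the hypothesis $\phi(Z_{k-1})>\phi(Z_{k+1})$. For later periods we rely on Condition~\ref{cond:descending} holding along the remaining tail, which is the same local-tail usage the paper describes; I would state this explicitly as part of the maintained setting. The backward recursion from $K$ to $k+1$ then gives $W_{k-1}(\theta_i)\leq\theta_i\phi(Z_{k+1})$.

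The third step substitutes this bound into the first display and rearranges:
\[
\theta_i\{\phi(Z_{k-1})-\phi(Z_{k+1})\}\leq b^{1:n_{k-1}}+\Delta.
\]
Dividing by the positive difference $\phi(Z_{k-1})-\phi(Z_{k+1})$ yields \eqref{eq:nobid}. The main obstacle is the first step: justifying rigorously that the empty period drops out of \eqref{eq:W}. This rests on Observation~\ref{obs:nobid}, which says bidders anticipate zero profit there, and on the monotonicity argument that the survival factor $(1-P_{i,k})\leq1$ never increases later terms. The remaining steps are a direct reindexing of earlier results.
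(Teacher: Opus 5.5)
Your proposal is correct and matches the paper's proof. Both use Observation~\ref{obs:nobid} to zero out period $k$'s contribution, so that $W_{k-1}(\theta_i)$ depends only on periods $k+1,\ldots,K$; both then rerun the baseline argument of Theorem~\ref{thm:ub_general} with $\phi(Z_{k+1})$ as the next common-value component and divide by $\phi(Z_{k-1})-\phi(Z_{k+1})>0$. Your explicit requirement that the descending condition hold along the tail $k+1,\ldots,K$ just states what the paper leaves implicit in ``the same baseline argument,'' and the hypothesis $\phi(Z_{k-1})>\phi(Z_{k+1})$ is needed only for the final division, not in the recursion.
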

	
	\begin{proof}
		Since all bidders' valuations are below the reserve price in period $k$, the expected profit from period $k$ is zero. Hence the opportunity cost in period $k-1$ skips period $k$ and depends on the payoff from period $k+1$ onward. Applying the same baseline argument as in Theorem~\ref{thm:ub_general}, with $\phi(Z_{k+1})$ replacing the next available common-value component, gives
		\[
		\theta_i\{\phi(Z_{k-1})-\phi(Z_{k+1})\}\leq b^{1:n_{k-1}}+\Delta.
		\]
		The denominator is positive by the maintained condition $\phi(Z_{k-1})>\phi(Z_{k+1})$, so division gives the stated bound.
	\end{proof}
	
	\noindent Under Observation~\ref{obs:nobid}, removing no-bid auctions from the data and applying Theorem~\ref{thm:ub_general} to the remaining consecutive auctions is equivalent for the baseline bound whenever the deleted sequence still satisfies the relevant descending filter.
	
	\section{Robustness: Continuation Value in Korean Auctions}\label{app:korean_monotonicity}
	The baseline lower bound in the Korean market application uses the adjacent-opportunity specification $C_k=\hat\phi(Z_{k+1})$ and $m_k=0$ from Section~\ref{subsec:continuation_filter}. This specification treats the next car in the sequence as the relevant continuation opportunity and permits all adjacent descending pairs.  This appendix reports two more conservative implementations.  The first sets
	\[
	C_k^{\max}=\max_{t\geq k+1}\hat\phi(Z_t),
	\]
	so an auction contributes only if its common value exceeds every future common value in the same auction-day category.  The second keeps only tail-to-terminal subsequences satisfying Condition~\ref{cond:descending}, $\hat\phi(Z_k)>\hat\phi(Z_{k+1})>\cdots>\hat\phi(Z_K)$.  Both robustness specifications use the baseline statistic and are combined with the terminal-period lower-bound moment. For comparison, the table also repeats the adjacent baseline and sharper net specifications from Figure~\ref{fig:korean_bounds}.
	
	Figure~\ref{fig:korean_continuation_robustness} and Table~\ref{tab:korean_monotonicity} report the resulting bounds.  The adjacent specification is the most informative in the categories where the sequential moment adds content. The loss from the conservative full-future and tail-to-terminal specifications is concentrated in the left tail, where the additional filters delay the rise of the lower bound; outside that region, the bounds are largely unchanged.
	
	\begin{figure}[p]
		\centering
		\includegraphics[width=0.94\textwidth]{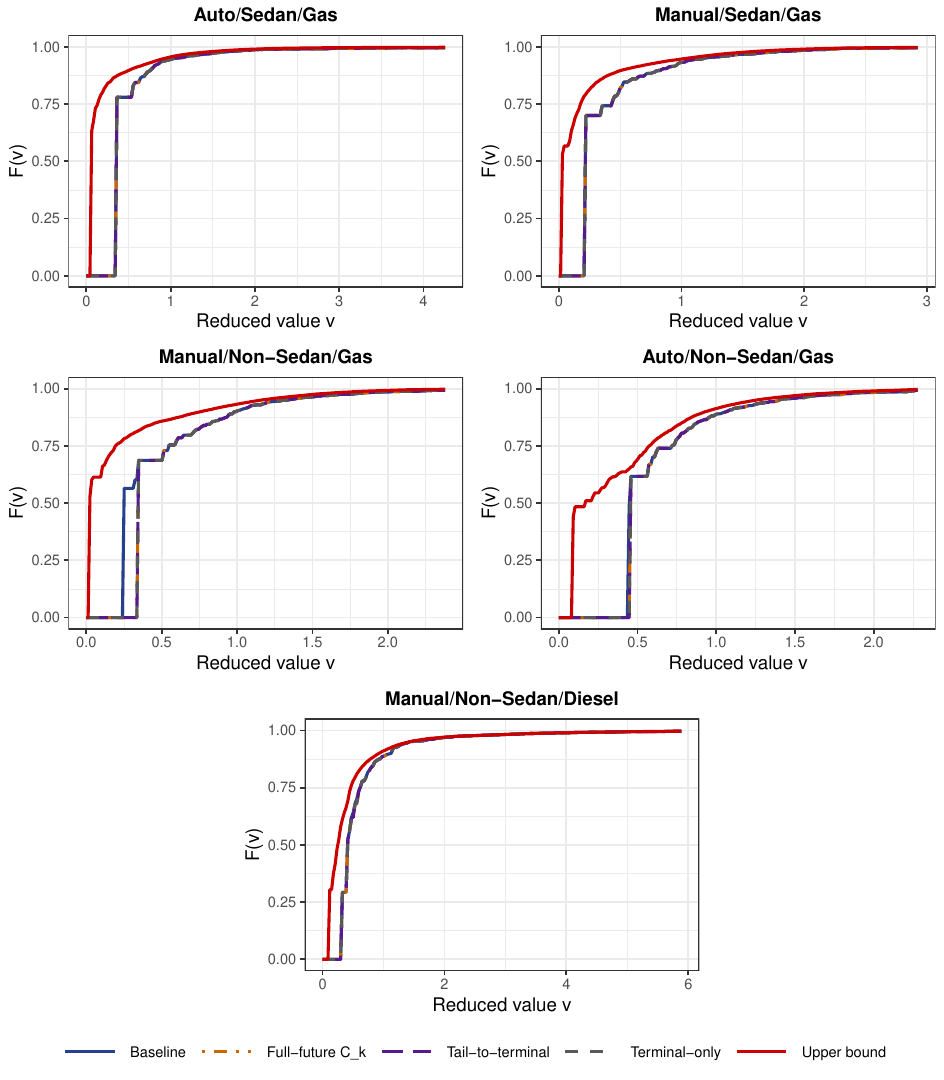}
		\caption{Robustness of bounds to the continuation-value specification in the Korean market.  Red solid: no-overbidding upper bound.  Blue solid: baseline lower bound with $C_k=\hat\phi(Z_{k+1})$.  Orange dot-dashed: full-future lower bound with $C_k=\max_{t\geq k+1}\hat\phi(Z_t)$.  Purple long-dashed: tail-to-terminal lower bound using decreasing subsequences through the terminal auction.  Gray dashed: terminal-only lower bound.}\label{fig:korean_continuation_robustness}
	\end{figure}
	
	\begin{table}[h!]
		\centering
			\caption{Korean lower-bound robustness and adjacent-specification benchmarks.}\label{tab:korean_monotonicity}
		\resizebox{\textwidth}{!}{%
			\begin{tabular}{lrrrrrrrrr} \toprule
				& \multicolumn{2}{c}{Baseline} & \multicolumn{2}{c}{Sharper net} & \multicolumn{2}{c}{$C_k^{\max}$} & \multicolumn{2}{c}{Tail-to-terminal} & Terminal \\
				\cmidrule(lr){2-3}\cmidrule(lr){4-5}\cmidrule(lr){6-7}\cmidrule(lr){8-9}
				Category & Pairs & Width & Pairs & Width & Pairs & Width & Pairs & Width & Width \\ \midrule
				Auto/Sedan/Gas            & 5{,}070 & 0.067 & 3{,}682 & 0.050 & 481 & 0.067 & 55 & 0.067 & 0.067 \\
				Manual/Sedan/Gas          & 3{,}812 & 0.065 & 3{,}165 & 0.065 & 505 & 0.065 & 71 & 0.065 & 0.065 \\
				Manual/Non-Sedan/Gas      & 2{,}942 & 0.106 & 2{,}372 & 0.111 & 509 & 0.129 & 64 & 0.129 & 0.129 \\
				Auto/Non-Sedan/Gas        & 2{,}076 & 0.107 & 1{,}567 & 0.110 & 495 & 0.110 & 50 & 0.110 & 0.110 \\
				Manual/Non-Sedan/Diesel   & 1{,}917 & 0.029 & 1{,}149 & 0.029 & 380 & 0.029 & 59 & 0.029 & 0.029 \\ \bottomrule
			\end{tabular}%
		}
		\vspace{0.6ex}
		\begin{minipage}{\textwidth}
			\footnotesize \emph{Note:} Width is the mean bound width over the evaluation grid after clipping finite-sample crossings at zero width. The baseline and sharper net columns correspond to the adjacent sequential lower-bound specifications plotted in Figure~\ref{fig:korean_bounds}, each combined with the terminal-period moment.  The $C_k^{\max}$ and tail-to-terminal columns are conservative robustness implementations of Section~\ref{subsec:continuation_filter}.  The no-overbidding upper bound is unchanged across columns.
		\end{minipage}
	\end{table}
	
	The conservative filters substantially reduce the number of usable sequential observations: from 15{,}817 baseline pairs to 2{,}370 pairs under $C_k^{\max}$ and 299 pairs under the tail-to-terminal filter.  As a result, the robustness bounds are weakly less informative than the baseline in the categories where the sequential moment is useful, mainly because they rise later in the left tail.  This pattern is expected: replacing $C_k=\hat\phi(Z_{k+1})$ with a full-future continuation-value index protects against later higher-value opportunities, but it does so by discarding most within-day variation.  The robustness exercise therefore clarifies the empirical content of the Korean application: the tight baseline bounds come from the adjacent-opportunity interpretation, while the terminal-only, conservative full-future, and tail-to-terminal specifications provide more robust but wider (conservative) benchmarks.
	
	\section{Effective Competition in Cars and Bids}\label{app:effective_N}
	
	The Cars and Bids auctions have about 14~unique bidders per auction (range 2--44, depending on the filtered sample), each placing approximately two bids on average.  Our baseline bounds use the exact number of unique bidders~$N_a$ per auction in the moment-condition inversion.  However, many participants are casual browsers who place low early bids and never seriously compete for the item.  The IPV assumption requires that all~$N$ bidders draw from the same valuation distribution~$F$, but if casual bidders have systematically lower valuations, the effective competition may be lower than~$N_a$.  This section estimates the ``effective $N$'' using four complementary diagnostics.
	
	The first method uses a competitive bid threshold.  For each auction, count bidders whose maximum bid exceeds a fraction~$\alpha$ of the winning bid.  At $\alpha = 50\%$ (bidders who bid at least half the final price), the median effective~$N$ is~8 across all clusters; at $\alpha = 75\%$, it drops to~5.  These estimates are remarkably stable across vehicle segments. The second method counts active final-stage bidders, those who place bids in the final days of the 7-day listing and are actively competing to win.  Due to timestamp coarseness in our data (recorded in weeks/months rather than hours), this method yields reliable estimates for only a small fraction of auctions.  Where available, the median is approximately~5 bidders in the final 2~days. The third method exploits order statistic spacing.  Under IPV with $N$ i.i.d.\ draws, the gaps between consecutive order statistics follow a known distribution.  We estimate effective~$N$ from the ratio of the price range to the mean bid spacing among the top half of bidders.  This yields a median of 6--7 across all clusters. The fourth method uses a bid concentration ratio.  Count the number of bidders whose maximum bids fall within the top 90\% of the price range (from the lowest to the highest bid).  Bidders below this threshold contribute negligibly to price formation.  The median is 10--11 across clusters. Table~\ref{tab:effective_N} summarizes these estimates by cluster.
	
	\begin{table}[h!]
		\centering
		\caption{Estimates of effective competition in Cars and Bids auctions.}\label{tab:effective_N}
		\begin{tabular}{l r r r r r r} \toprule
			Method & \multicolumn{5}{c}{Median $N_{\text{eff}}$ by cluster} & Overall \\
			\cmidrule(lr){2-6}
			& Cl.~1 & Cl.~2 & Cl.~3 & Cl.~4 & Cl.~5 & \\ \midrule
				Unique bidders (baseline, approx.) & 14 & 14 & 14 & 14 & 14 & 14 \\
			\addlinespace
			M1: Bid $> 50\%$ of winner      & 9  & 8  & 8  & 8  & 8  & 8 \\
			M1: Bid $> 75\%$ of winner      & 5  & 5  & 5  & 5  & 4  & 5 \\
			M2: Final 2 days                 & \multicolumn{5}{c}{$\approx 5$ (sparse data)} & 5 \\
			M3: Order statistic spacing      & 6  & 7  & 7  & 7  & 6  & 7 \\
			M4: 90\% concentration           & 10 & 10 & 11 & 11 & 10 & 10 \\
			\bottomrule
		\end{tabular}
		\vspace{0.6ex}
		\begin{minipage}{\textwidth}
			\footnotesize \emph{Note:} M1 counts bidders whose maximum bid exceeds $\alpha\%$ of the winning bid.  M2 counts bidders active in the final 2~days (limited by timestamp granularity).  M3 estimates $N$ from the ratio of the top-half price range to mean bid spacing.  M4 counts bidders within the top 90\% of the full bid range.
		\end{minipage}
	\end{table}
	
	The four methods broadly point to an effective competition level of $N_{\text{eff}} \approx 5$--$10$, substantially below the roughly 14~unique bidders in a typical auction.  Methods~1 (at 50\%) and~3 (spacing) agree closely at $N_{\text{eff}} \approx 7$--$8$, suggesting that roughly half of the unique bidders are casual participants whose bids never approach the transaction price.
	
	This finding has two implications for bounds estimation.  First, our baseline results use the actual~$N_a$, which may moderately overstate the effective competition.  Second, re-estimating the bounds with smaller~$N_{\text{eff}}$ produces wider but more conservative bounds.  Figure~\ref{fig:cb_neff_bounds} displays the panel upper bound alongside the sequential lower bound computed at representative effective-competition values from the plotted diagnostics.  Methods~1 and~3 nearly eliminate crossing with the panel upper bound, while Method~4 reduces but does not eliminate crossing in some clusters. This suggests that the baseline bounds with the actual-$N_a$ are more aggressive than what the panel diagnostic supports, and removing the casual or low-intensity bidders restores non-crossing bounds. The qualitative conclusions are robust across the range $N_{\text{eff}} \in [6, 14]$.
	
	\begin{figure}[h!]
		\centering
		\includegraphics[width=\textwidth]{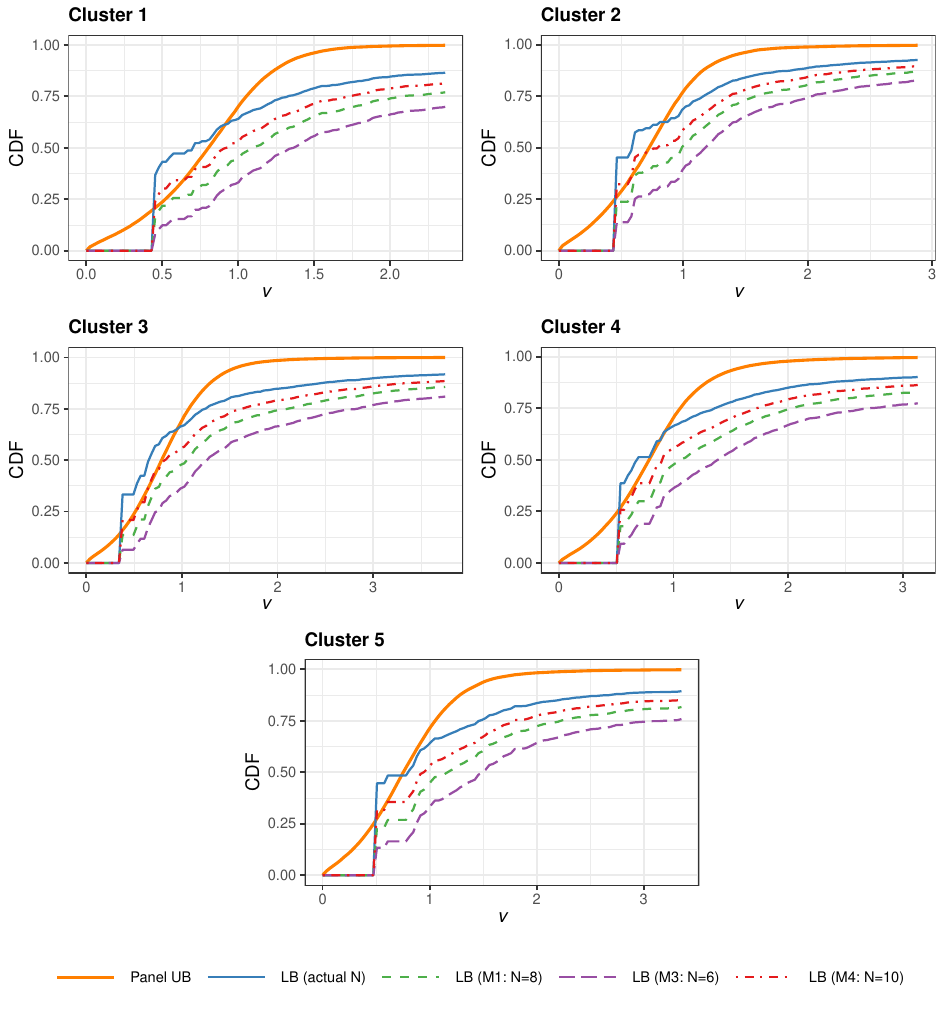}
		\caption{Sensitivity of Cars and Bids bounds to effective competition.  Each panel shows the panel upper bound on~$F$ (orange solid, computed from individual bidders' maximum reduced bids without beta inversion) and the sequential lower bound on~$F$ at representative effective~$N$ levels: baseline $N_a$ (blue solid), M1 at 50\% threshold with $N_{\text{eff}} = 8$ (green dashed), M3 spacing with $N_{\text{eff}} = 6$ (purple long-dashed), and M4 concentration with $N_{\text{eff}} = 10$ (red dot-dashed).}\label{fig:cb_neff_bounds}
	\end{figure}
	
\end{document}